\newtheorem{theorem}{\text{Theorem}}
\newtheorem{lemma}{\text{Lemma}}
\newtheorem{corollary}{\text{Corollary}}
\newtheorem{property}{\text{Property}}
\newcommand{\real} {\mathbb{R}}
\newcommand{\eps} {\varepsilon}
\newcommand{\cancel}[1] {}
\newcommand{\norm}[1] {\|#1\|}
\begin{document}
	
\begin{titlepage}
		
\title{Fr\'echet Distance in Subquadratic Time\thanks{Research supported by Research Grants Council, Hong Kong, China (project no.~16208923).}}
		
\author{Siu-Wing Cheng\footnote{Department of Computer Science and Engineering, HKUST, Hong Kong.  Email:~{\tt scheng@cse.ust.hk, haoqiang.huang@connect.ust.hk}} \quad\quad Haoqiang Huang\footnotemark[2]}
		
\date{}
		
\maketitle
		
\begin{abstract}
Let $m$ and $n$ be the numbers of vertices of two polygonal curves in $\real^d$ for any fixed $d$ such that $m \leq n$.  Since it was known in 1995 how to compute the Fr\'{e}chet distance of these two curves in $O(mn\log (mn))$ time, it has been an open problem whether the running time can be reduced to $o(n^2)$ when $m = \Omega(n)$.  In the mean time,  several well-known quadratic time barriers in computational geometry have been overcome: 3SUM, some 3SUM-hard problems, and the computation of some distances between two polygonal curves, including the discrete Fr\'{e}chet distance, the dynamic time warping distance, and the geometric edit distance.  It is curious that the quadratic time barrier for Fr\'{e}chet distance still stands.  We present an algorithm to compute the Fr\'echet distance in $O(mn(\log\log n)^{2+\mu}\log n/\log^{1+\mu} m)$ expected time for some constant $\mu \in (0,1)$.  It is the first algorithm that returns the Fr\'{e}chet distance in $o(mn)$ time when $m = \Omega(n^{\eps})$ for any fixed $\eps \in (0,1]$.
\end{abstract}
		
		\thispagestyle{empty}
	\end{titlepage}

\section{Introduction}

Measuring the similarity between two curves is a fundamental task in many applications such as moving object analysis~\cite{buchin2011detecting} and road network reconstruction~\cite{buchin2017clustering}.
%, and handwritten document processing~\cite{sriraghavendra2007frechet}. 
Fr\'{e}chet distance is a popular distance metric for curves as it captures their similarity well~\cite{Alt2009}.

Let $\tau =(v_1, v_2,...,v_n)$ be a polygonal curve in $\mathbb{R}^d$.  A \emph{parameterization} $\rho$ of $\tau$ maps every $t \in [0,1]$ to a point $\rho(t)$ on $\tau$ such that $\rho(0)=v_1$, $\rho(1)=v_n$, and $\rho(t)$ moves monotonically from $v_1$ to $v_n$ as $t$ increases from 0 to 1.  Let $\varrho$ be a parameterization of another polygonal curve $\sigma=(w_1, w_2,..., w_m)$.  The pair $(\rho,\varrho)$ define a \emph{matching} $\mathcal{M}$ that maps $\rho(t)$ with $\varrho(t)$ for all $t \in [0,1]$.  A point in $\tau$ may have multiple matching partners in $\sigma$ and vice versa.  Let $d(x, y)$ be the Euclidean distance between $x$ and $y$ in $\mathbb{R}^d$. The distance between $\tau$ and $\sigma$ realized by $\mathcal{M}$ is $d_{\mathcal{M}}(\tau, \sigma)=\max_{t\in[0,1]}d(\rho(t), \varrho(t))$. The \emph{Fr\'echet distance} $d_F(\sigma,\tau)$ between $\sigma$ and $\tau$ is $\inf_{\mathcal{M}}d_{\mathcal{M}}(\tau, \sigma)$. We call a matching that realizes the Fr\'echet distance a \emph{Fr\'echet matching}.  If we restrict each parameterization to match a vertex of $\tau$ to at least one vertex of $\sigma$, and vice versa, the resulting $\inf_{\mathcal{M}}d_{\mathcal{M}}(\tau, \sigma)$ is known as the \emph{discrete Fr\'{e}chet distance}, which we denote by $\tilde{d}_F(\sigma,\tau)$.  It follows from the definition that $d_F(\sigma,\tau) \leq \tilde{d}_F(\sigma,\tau)$.

We assume throughout this paper that $m \leq n$. Godau~\cite{Godau1991ANM} described the first polynomial-time algorithm for computing $d_F(\sigma,\tau)$ that runs in $O\bigl((nm^2+n^2m)\log(mn)\bigr)$ time.  Alt and Godau~\cite{AG1995} improved the running time to $O(mn\log(mn))$ in their seminal work. Eiter and Mannila~\cite{eiter1994computing} showed that the discrete Fr\'{e}chet distance can be computed in $O(mn)$ time by a simple dynamic programming algorithm.  For roughly two decades, these running times remained state-of-the-art, and the quadratic time barrier stood unbroken.  

Agarwal~et~al.~\cite{agarwal2014computing} eventually broke the barrier for $\tilde{d}_F$ in $\mathbb{R}^2$ in 2013 and gave an algorithm that runs in $O(mn\log\log n/\log n)$ time on a word RAM machine with $\Theta(\log n)$ word size.  They asked whether one can extend this result to compute $d_F$ in subquadratic time. Buchin~et~al.~\cite{buchin2014four} developed a faster \emph{randomized} algorithm for computing $d_F$ in $\mathbb{R}^2$ that takes $O\bigl(mn\sqrt{\log n}(\log\log n)^{3/2}\bigr)$ expected time on a pointer machine and $O\bigl(mn(\log\log n)^2\bigr)$ expected time on a word RAM machine of $\Theta(\log n)$ word size. 
Blank~and~Driemel~\cite{blank2024faster} presented an algorithm in one dimension that computes $d_F$ in $O(m^2\log^2n+n\log n)$ time, which is subquadratic when $m = o(n)$. 

%This is the only improvement in two decades that brings the time complexity of computing $d_F$ closer to $O(mn)$.

%acceleration on Fr\'echet distance computation.  
%Can the Fr\'{e}chet distance be computed in strongly subquadratic time?
%An intriguing question is the existence of a subquadratic algorithm for the Fr\'echet distance.

Bringmann proved two important negative results in $\real^d$ for $d \geq 2$~\cite{bringmann2014walking}.  First, the widely accepted \emph{Strong Exponential Time Hypothesis} (SETH) would be refuted if there is a \emph{strongly subquadratic} algorithm for $d_F$ or $\tilde{d}_F$, i.e., a running time of $O\bigl((mn)^{1-\varepsilon}\bigr)$ for any fixed $\varepsilon \in (0,1)$.  
%Even approximation can be hard.
%A more 
%Moreover, SETH provides several conditional lower bounds on Fr\'echet distance approximation. 
Second, no strongly subquadratic algorithm can approximate $d_F$ within a factor $1.001$ unless SETH fails.  Later, Bringmann and Mulzer~\cite{BW2015} proved that no strongly subquadratic algorithm can approximate $\tilde{d}_F$ within a factor 1.399 even in $\mathbb{R}$ unless SETH fails.  Buchin~et~al.~\cite{buchin2019seth} showed that no strongly subquadratic algorithm can approximate $d_F$ within a factor 3 even in $\mathbb{R}$ unless SETH fails.  

On the positive side, assuming that $m = n$, 
%As for the approximation algorithms for computing Fr\'echet distance in the subquadratic time, 
Bringmann and Mulzer~\cite{BW2015} presented an $O(\alpha$)-approximation algorithm for $\tilde{d}_F$ that runs in $O(n\log n + n^2/\alpha)$ time for any $\alpha \in [n]$.  Chan and Rahmati~\cite{chan2018improved} improved the running time to $O(n\log n + n^2/\alpha^2)$ for any $\alpha \in [1,\sqrt{n/\log n}]$.  Later, assuming that $m = n$, Colombe and Fox~\cite{colombe2021approximating} showed an $O(\alpha)$-approximation algorithm for $d_F$ that runs in $O((n^3/\alpha^2)\log n)$ time for any $\alpha \in [\sqrt{n}, n]$.  There is a recent improvement by Van~der~Horst~et~al.~\cite{van2023subquadratic}: assuming that $m \leq n$, there is an $O(\alpha)$-approximation algorithm for $d_F$ that runs in $O((n+mn/\alpha)\log^3n)$ time for any $\alpha \in [1,n]$.  More recently, the running time has been improved to $O((n+mn/\alpha)\log^2n)$ by Van~der~Horst and Ophelders~\cite{vanderhorst_et_al:LIPIcs.SoCG.2024.63}. In all, for any fixed $\varepsilon \in (0,1]$, there are strongly subquadratic $n^\varepsilon$-approximation algorithms for $d_F$ and $\tilde{d}_F$. There are better results for some restricted classes of input curves~\cite{AKW2004,aronov2006frechet,bringmann2015improved,driemel2012approximating,gudmundsson2018fast}.

It is curious that the quadratic time barrier for $d_F$ still stands because several quadratic time barriers in computational geometry have been overcome, including 3SUM and some 3SUM-hard problems~\cite{chan2018,F2017,GS2017,GP2018,KLM2018}, the computation of the discrete Fr\'{e}chet distance~\cite{agarwal2014computing}, and the computation of dynamic time warping and the geometric edit distance~\cite{gold2018dynamic}. In particular, the works in~\cite{agarwal2014computing,gold2018dynamic} are about distances between polygonal curves.
%the question of Agarwal~et~al.~\cite{AAKS2013} about finding a subquadratic time algorithm for computing $d_F$ remains open.
%There is an obvious gap in our current knowledge about exact and approximate Fr\'{e}chet distance, namely, can the Fr\'{e}chet distance be computed exactly or approximated to a constant factor in subquadratic time?  
In this paper, we give a subquadratic-time algorithm that computes $d_F$ in $\real^d$ for any fixed $d$.  We use the real RAM model in which a word can store a real number or an integer/pointer of $O(\log n)$ bits.  We assume that standard arithmetic operations can be performed in $O(1)$ time on real numbers and integers.  We also assume that one can use an integer of $O(\log n)$ bits to do a table lookup in $O(1)$ time.     Similar computation models have been used in~\cite{chan2018,gold2018dynamic}.
%Chan~\cite{chan2018} in shaving log factors from the quadratic time bound for 3SUM and related problems and by Gold and Sharir~\cite{gold2018dynamic} in breaking the quadratic-time barrier of dynamic time warping and geometric edit distance. 
About the word RAM model used in~\cite{agarwal2014computing,buchin2014four}, the authors also exploit the constant-time table lookup capability only without using bit operations.

For any fixed $d$, our algorithm computes $d_F(\sigma,\tau)$ in $\real^d$ in $O(mn(\log\log n)^{2+\mu}\log n/\log^{1+\mu} m)$ expected time for some constant $\mu \in (0,1)$.  This is the first algorithm that runs in $o(mn)$ time when $m = \Omega(n^{\eps})$ for some fixed $\eps \in (0,1]$.  The computations of $d_F$, $\tilde{d}_F$, and dynamic time warping reduce to filling a two-dimensional dynamic programming table.  We start with the Four-Russians trick that groups cells of the tables into boxes and batch the processing of boxes; this approach has been applied successfully to speed up the computation of $\tilde{d}_F$ and the dynamic time warping distance.  It involves encoding different configurations of these boxes during dynamic programming.  

Our first contribution is using algebraic geometric tools to prove that the number of distinct encodings is subquadratic. This bound is the cornerstone of our result.
%$m^{1+\delta}$ for any fixed $\delta \in (0,1)$. It means that the interaction between $\sigma$ and $\tau$ can be recorded using as little as $O(m^{1+\delta})$ space in addition to the $O(m+n)$ space for storing the vertex coordinates.  This insight is interesting in its own right as it may be useful for solving other Fr\'{e}chet distance problems.  
Nevertheless, this bound alone does not lead to a subquadratic-time algorithm for computing $d_F$ due to the continuous nature of $d_F$.  Our second contribution is a second level of batch processing that clusters the boxes to save the time for information transfer among the boxes within a cluster.  
%Boxes are clustered into superboxes, and superboxes are encoded in a way that allows the encodings of boxes to be extracted efficiently.  
This two-level strategy gives a decision algorithm that runs in
%that involves boxes of cells of the dynamic programming table and superboxes of boxes; algebraic geometric tools are used at both levels.  
%It gives an 
%Using the algebraic geometric method, we succeed in getting a decision procedure in 
$O(mn(\log\log n)^{2+\mu}/\log^{1+\mu} m)$ expected time for some constant $\mu \in (0,1)$.  It is faster than the decision algorithm by Buchin~et~al.~\cite{buchin2014four} for the word RAM model.
%For any $\delta > 0$, if our procedure returns yes, then $d_F(\sigma,\tau) \leq \delta$, and if our procedure returns no, then $d_F(\sigma,\tau) > \delta$. 

Alt~and~Godau~\cite[Theorem 6]{AG1995} showed that if there is a decision procedure that runs in $T(n)$ time, one can compute $d_F(\sigma,\tau)$ in $O(T(n)\log(mn))$ time. Hence, our decision procedure implies that one can compute $d_F(\sigma,\tau)$ in $O(mn(\log\log m)^{2+\mu}\log n/\log^{1+\mu} m)$ expected time.  Algebraic geometric tools have recently been applied to some Fr\'{e}chet and Hausdorff distance problems, including range searching, VC dimension, curve simplification, nearest neighbor, and distance oracle~\cite{afshani2018complexity,bruning2023simplifiedimprovedboundsvcdimension,cheng2023solving,driemel2021vc}.  This paper lends further support to the versatility of this framework for curve proximity problems.

\section{Background}\label{sec: pre}

For any two points $p,q\in \tau$, if $p$ does not appear behind $q$ along $\tau$, we use $\tau[p,q]$ to denote the subcurve of $\tau$ from $p$ to $q$. We use $pq$ to denote the oriented line segment from $p$ to $q$.

\subsection{Fr\'{e}chet distance}
\label{sec:back-frechet}

The key concept in determining $d_F(\sigma,\tau)$ is the \emph{reachability interval}~\cite{AG1995}. Take a point $x\in \tau$ and a point $y\in \sigma$. The pair $(x,y)$ is \emph{reachable} with respect to $\delta$ if and only if $d_F(\tau[v_1, x], \sigma[w_1, y])\le \delta$. The reachability intervals of $x$ include all points in $\sigma$ that can form reachable pairs with $x$. The reachability intervals of $y$ are defined analogously.

Any point $x \in \tau$ or $y \in \sigma$ has at most one reachability interval on every edge of $\sigma$ or $\tau$, respectively. Let $R_{i}[j]=[\ell_{i,j}, r_{i,j}]$ be the reachability interval of the vertex $v_i$ on the edge $w_jw_{j+1}$.  Let $R'_{j}[i]=[\ell'_{j,i}, r'_{j,i}]$ be the reachability interval of the vertex $w_j$ on the edge $v_iv_{i+1}$. 
%Both $\ell_{i,j}$ and $r_{i,j}$ are null if $R_i[j]=\emptyset$, so are $\ell'_{j,i}$ and $r'_{j,i}$ if $R'_j[i]=\emptyset$. 
If a reachability interval is empty, its endpoints are taken to be null.  Alt and Godau~\cite{AG1995} observed that $R_{i+1}[j]$ and $R'_{j+1}[i]$ can be determined in $O(1)$ time using $R_{i}[j]$, $R'_{j}[i]$, and the knowledge of $\tau$ and $\sigma$.  
%For $i \in [n]$ in increasing order, $R_i[j]$ is inductively computed for all $j \in [m-1]$. Similarly, for $j \in [m]$ in increasing order, $R'_j[i]$ is inductively computed for $i \in [n-1]$ in increasing order.  
It is useful to view this computation as filling an $(n-1) \times (m-1)$ dynamic programming table, which has one cell for every pair of edges of $\tau$ and $\sigma$, row by row from the bottom to the top.  Consider the $(i,j)$-cell of this table.  Its lower and left sides are associated with the reachability intervals $R_i[j]$ and $R'_j[i]$, respectively.  The dynamic programming process computes $R_{i+1}[j]$ and $R'_{j+1}[i]$ which are associated with the upper and right sides of the cell, respectively. The algorithm takes $O(mn)$ time to obtain $R_{n}[m-1]$. We have $d_F(\tau, \sigma)\le \delta$ if and only if $w_m\in R_{n}[m-1]$.  We sketch the filling of the dynamic programming table below.

Let $\+B_x$ denote the ball centered at $x$ of radius $\delta$. Let $s_{i,j}$ and $e_{i,j}$ be the start and end of the oriented segment $\+B_{v_i}\cap w_jw_{j+1}$, respectively. Let $s'_{j,i}$ and $e'_{j,i}$ be the start and end of the oriented segment $\+B_{w_j}\cap v_iv_{i+1}$, respectively.

We first initialize the $R_1[j]$'s. If $s_{1,1}=w_1$, set $\ell_{1,1} = s_{1,1}$ and $r_{1,1} = e_{1,1}$; otherwise, both $\ell_{1,1}$ and $r_{1,1}$ are null. For all $j \in [2,m-1]$, if $r_{1,j-1} = s_{1,j}=w_j$, set $\ell_{1,j} = s_{1,j}$ and $r_{1,j} = e_{i,j}$; otherwise, set $\ell_{1,j}$ and $r_{1,j}$ to be null.  The $R'_1[i]$'s are similarly initialized.

Inductively, suppose that we have computed $R_i[j]$ and $R'_j[i]$, and we want to compute $R_{i+1}[j]$ and $R'_{j+1}[i]$.  For any point $y \in w_jw_{j+1}$, a matching between $\tau[v_1, v_{i+1}]$ and $\sigma[w_1, y]$ either matches $v_i$ to some point in $w_jw_{j+1}$ or matches $w_j$ to some point in $v_iv_{i+1}$.  Similarly, for any point $x \in v_iv_{i+1}$, a matching between $\tau[v_1,x]$ and $\sigma[w_1, w_{j+1}]$
either matches $w_j$ to some point in $v_iv_{i+1}$ or matches $v_i$ to some point in $w_jw_{j+1}$.  Hence, if $R_i[j]$ and $R'_j[i]$ are empty, then $R_{i+1}[j]$ and $R'_{j+1}[i]$ are empty. 

Suppose that $R'_j[i]\not=\emptyset$.  There exists a point $x\in v_iv_{i+1}$ such that $d_F(\tau[v_1,x], \sigma[w_1,w_j])\le\delta$.  So $d(x,w_j) \leq \delta$.  For every point $y \in \+B_{v_{i+1}} \cap w_jw_{j+1}$, 
%all points $y\in \+B_{v_{i+1}}\cap w_jw_{j+1}$ belong to $R_{i+1}[j]$. Because $d_F(xv_{i+1}, w_iy)\le \delta$ provided that $d(x, w_i)\le \delta$, 
we generate a matching between $\tau[v_1, v_{i+1}]$ and $\sigma[w_1,y]$ that consists of a Fr\'echet matching between $\tau[v_1, x]$ and $\sigma[w_1,w_j]$ and a linear interpolation between $xv_{i+1}$ and $w_jy$.  This matching realizes a distance at most $\delta$, so $d_F(\tau[v_1, v_{i+1}],\sigma[w_1, y])\le \delta$.  In other words, $R_{i+1}[j] = \+B_{v_{i+1}} \cap w_jw_{j+1}$, i.e., $\ell_{i+1,j} = s_{i+1,j}$ and $r_{i+1,j} = e_{i+1,j}$.

Suppose that $R'_j[i]=\emptyset$ and $R_i[j]\not=\emptyset$. In this case, there exists a point $y \in w_jw_{j+1}$ such that $d_F(\tau[v_1,v_i],\sigma[w_1,y]) \leq \delta$.  So $d(v_i,y) \leq \delta$.  Since $R'_j[i] = \emptyset$, the vertex $w_j$ cannot be matched to any point in $v_iv_{i+1}$. 
It implies that a Fr\'echet matching between $\tau[v_1, v_{i+1}]$ and $\sigma[w_1, y]$ matches both $v_i$ and $v_{i+1}$ to points in $w_jw_{j+1}$. Thus, the start $\ell_{i+1,j}$ of $R_{i+1}[j]$ is the first point in the oriented segment $s_{i+1,j}e_{i+1,j} = \+B_{v_{i+1}} \cap w_jw_{j+1}$ that does not lie in front of $\ell_{i,j}$.  If such a point does not exist, then $\ell_{i+1,j} = \text{null}$.  The end $r_{i+1,j}$ of $R_{i+1}[j]$ is $e_{i+1,j}$ if $\ell_{i+1,j} \not= \text{null}$; otherwise, $r_{i+1,j}$ is null.
%there must be another point $y'\in R_i[j]$ such that $y'$ is not behind $y$ along $w_jw_{j+1}$, which implies that $ \ell_{i,j}$ is not behind $y$. 
%We can derive $R_{i+1}[j]$ by this property. By symmetry, 

We can compute $R'_{j+1}[i]$ in a similar way.  The next result follows from the discussion above.

\begin{lemma}~\label{obs: end-ri} 
	If $R_i[j] \not= \emptyset$, then $r_{i,j} = e_{i,j}$ and $\ell_{i,j}\in \{\ell_{i', j}, s_{i'+1,j}, s_{i'+2, j},\ldots, s_{i, j}\}$ for all $i' \in [i-1]$.  Similarly, if $R'_j[i] \not= \emptyset$, then $r'_{j,i}=e'_{j,i}$ and $\ell'_{j,i}\in \{\ell'_{j',i}, s'_{j'+1, i}, s'_{j'+2,i}, \ldots, s'_{j,i}\}$ for all $j' \in [j-1]$.
\end{lemma}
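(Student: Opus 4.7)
The plan is to derive both claims directly from the dynamic programming update rules set out just above the lemma statement, without appealing to anything else.

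For the $r$-part, that $r_{i,j}=e_{i,j}$ whenever $R_i[j]\ne\emptyset$, I would argue by induction on $i$. The initialization sets $r_{1,j}=e_{1,j}$ whenever $R_1[j]$ is non-empty. For the inductive step, the rule either sets $R_{i+1}[j]=\+B_{v_{i+1}}\cap w_jw_{j+1}$ (in the case $R'_j[i]\ne\emptyset$), which gives $r_{i+1,j}=e_{i+1,j}$ outright, or, in the case $R'_j[i]=\emptyset$ with $\ell_{i+1,j}\ne\text{null}$, explicitly defines $r_{i+1,j}=e_{i+1,j}$. The same argument with the roles of $\tau$ and $\sigma$ swapped gives $r'_{j,i}=e'_{j,i}$.

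For the $\ell$-part, fix $i$ and $j$ with $R_i[j]\ne\emptyset$ and induct on $i-i'$. The base case $i'=i-1$ follows by reading off the single-step update at row $i$: if $R'_j[i-1]\ne\emptyset$ then $\ell_{i,j}=s_{i,j}$; otherwise $\ell_{i,j}$ is the first point on the oriented segment $s_{i,j}e_{i,j}$ not in front of $\ell_{i-1,j}$, which is either $s_{i,j}$ (when $\ell_{i-1,j}$ lies behind $s_{i,j}$) or $\ell_{i-1,j}$ itself (when $\ell_{i-1,j}\in\+B_{v_i}\cap w_jw_{j+1}$). In both sub-cases $\ell_{i,j}\in\{\ell_{i-1,j},s_{i,j}\}$. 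For the inductive step, assume $\ell_{i,j}\in\{\ell_{i'+1,j},s_{i'+2,j},\ldots,s_{i,j}\}$. If $\ell_{i,j}$ equals any $s_{\cdot,j}$ in this set it already lies in the target set for $i'$. Otherwise $\ell_{i,j}=\ell_{i'+1,j}$, and since this is non-null we have $R_{i'+1}[j]\ne\emptyset$; applying the base-case analysis at row $i'+1$ gives $\ell_{i'+1,j}\in\{\ell_{i',j},s_{i'+1,j}\}$, and substituting produces exactly $\ell_{i,j}\in\{\ell_{i',j},s_{i'+1,j},\ldots,s_{i,j}\}$.

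The symmetric statement for $R'_j[i]$ is obtained by interchanging the roles of $\tau$ and $\sigma$ throughout. I do not see a conceptual obstacle here: all the content is already contained in the two update rules, and the only delicate point is noticing that when the inductive hypothesis produces the alternative $\ell_{i'+1,j}$, one more application of the single-step rule at row $i'+1$ collapses it into either $\ell_{i',j}$ or the new vertex-start $s_{i'+1,j}$, which is precisely what lets the induction step down.
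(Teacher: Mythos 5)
Your proof is correct and formalizes exactly what the paper intends when it asserts that the lemma ``follows from the discussion above'': you unroll the single-step update rules of Section~\ref{sec:back-frechet} by induction, which is the only reasonable route, and the key observation (that $\ell_{i,j}$ either inherits the previous $\ell$-value or resets to a new $s$-value) is captured cleanly by your base case plus the step from $i'+1$ down to $i'$. One small slip in a parenthetical: under the paper's convention that ``behind'' means later along the oriented segment, the case $\ell_{i,j}=s_{i,j}$ arises when $\ell_{i-1,j}$ is in front of or at $s_{i,j}$, not behind it; this does not affect the membership conclusion or the argument.
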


%Next, we survey several algebraic geometric results.  

\subsection{Algebraic geometry}  

Given a set $\mathcal{P} = \{\rho_1,...,\rho_s\}$ of polynomials in $\omega$ real variables, a \emph{sign condition vector }$S$ for $\mathcal{P}$ is a vector in $\{-1, 0, +1\}^s$.  The point $\nu \in \mathbb{R}^{\omega}$ \emph{realizes} $S$ if $(\text{sign}(\rho_1(\nu)),...,\text{sign}(\rho_s(\nu))) = S$.   The \emph{realization} of $S$ is the subset $\{\nu \in \real^\omega : \nu\text{ realizes }S\}$.  For every $i \in [s]$, $\rho_i(\nu) = 0$ describes a hypersurface in $\real^\omega$.  The hypersurfaces $\{\rho_i(\nu) = 0 : i \in [s]\}$ partition $\real^\omega$ into open cells of dimensions from 0 to $\omega$.  This set of cells together with the incidence relations among them form an \emph{arrangement} that we denote by $\mathscr{A}(\mathcal{P})$.  Each cell is a connected component of the realization of a sign condition vector for $\mathcal{P}$; the cells in $\mathscr{A}(\mathcal{P})$ represent all realizable sign condition vectors.  %There are algorithms to construct a point in each cell of $\mathscr{A}(\mathcal{P})$ and optimize a function over a cell.

\begin{theorem}[\cite{Basu1995OnCA,pollack1993number}]
	\label{thm:arr}
	Given a set $\mathcal{P}$ of $s$ polynomials of constant degrees in $\omega$ variables, there are $s^\omega \cdot O(1)^{\omega}$ cells in $\mathscr{A}(\+P)$.  One can compute in $s^{\omega+1} \cdot O(1)^{O(\omega)}$ time a set $Q$ of points and the sign condition vectors at these points such that $Q$ contains at least one point in each cell of $\mathscr{A}(\mathcal{P})$.
\end{theorem}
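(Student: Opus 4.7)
The plan is to establish the combinatorial cell count first and then turn it into an algorithm via the critical-points method, treating the maximum degree $D$ of the polynomials in $\+P$ as an absolute constant so that factors of $D^{O(\omega)}$ get absorbed into $O(1)^{O(\omega)}$.

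For the cell count, I would invoke the generalized Milnor--Thom bound on sign-condition realizations. Every cell of $\mathscr{A}(\+P)$ of codimension $k$ is contained in a common zero set $V_I = \{\nu : \rho_i(\nu)=0,\,\forall i \in I\}$ for some $I\subseteq[s]$ with $|I|=k\le\omega$, and the Oleinik--Petrovsky--Milnor--Thom bound caps the number of connected components of $V_I$ at $O(D)^\omega$. Summing contributions $\sum_{k=0}^\omega \binom{s}{k} \cdot O(D)^\omega$ across all such $I$, and accounting via Warren's theorem for the finer subdivision induced by the signs of the remaining $s-|I|$ polynomials, gives a total of $s^\omega \cdot O(1)^\omega$ cells.

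For the algorithmic part, I would apply the \emph{critical points method} of Basu--Pollack--Roy. First, symbolically perturb $\+P$ using ordered infinitesimals $1\gg\eps_1\gg\eps_2\gg\cdots$, replacing each $\rho_i$ by $\rho_i+\eps_i$ so that the zero sets intersect transversally and each cell of $\mathscr{A}(\+P)$ lifts unambiguously to the perturbed arrangement. Next, for every subset $I\subseteq[s]$ with $|I|\le\omega$, form the system $\{\rho_i + \eps_i = 0 : i \in I\}$ augmented by $\omega-|I|$ equations asserting criticality of a generic linear projection, plus a bounding-sphere equation so that unbounded cells are captured too. By B\'ezout, each system has at most $O(D)^\omega = O(1)^\omega$ complex solutions, computable in $O(1)^{O(\omega)}$ arithmetic operations. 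A Morse-theoretic argument then guarantees that every connected component of the perturbed realization contains at least one critical point, and taking the limit in the real closure as each $\eps_i\to 0$ yields a sample-point set $Q$ that meets every cell of $\mathscr{A}(\+P)$. There are $\binom{s}{\le\omega} = s^\omega\cdot O(1)^\omega$ such systems, and evaluating the sign condition vector at each sample point takes $s\cdot O(1)^\omega$ time; multiplying gives the $s^{\omega+1}\cdot O(1)^{O(\omega)}$ bound.

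The main obstacle I anticipate is arguing that the critical-point construction truly reaches \emph{every} cell after taking limits: one must show that the bounding sphere catches unbounded components, that the infinitesimal perturbation preserves the combinatorial structure of the original arrangement (no cell is merged, created, or lost at the limit), and that the generic projection has a critical point on each connected component of each $V_I$. Morse-theoretically this rests on compactness of the perturbed bounded variety and the finiteness of critical values of generic projections, but getting the multi-infinitesimal bookkeeping right is delicate. The B\'ezout-based counting and the quantitative cell bound are more routine once this perturbation framework is in place.
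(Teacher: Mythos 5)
The paper does not prove Theorem~\ref{thm:arr}; it invokes it as a black-box result cited from Basu, Pollack, and Roy and Pollack--Roy, so there is no in-paper argument to compare against. Your sketch follows the standard route taken in those references---the Oleinik--Petrovsky--Milnor--Thom/Warren machinery for the combinatorial cell bound and the critical-points method with infinitesimal deformation and a bounding sphere for the algorithmic sample-point construction---and is consistent in outline with the cited sources, modulo small imprecisions (the perturbation usually thickens each hypersurface to $\rho_i = \pm\eps_i$ rather than replacing $\rho_i$ by $\rho_i + \eps_i$ alone, and a codimension-$k$ cell need not lie on exactly $k$ of the original hypersurfaces before perturbation).
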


We will need a data structure for $\mathscr{A}(\mathcal{P})$ so that for any query point $\nu \in \real^\omega$, we can determine implicitly the sign condition vector at $\nu$.  An easy way is to linearize the zero sets of the polynomials in $\+P$ to equations of hyperplanes in higher dimensions and then use a point location data structure for an arrangement of hyperplanes.  Let $x_1,\ldots,x_\omega$ denote the variables in $\+P$, and let $D$ denote the maximum degree.   Let $\Phi = \{(e_1,\ldots,e_\omega) : \forall\, i \in [\omega], \; e_i \in \mathbb{Z}_{\ge 0} \, \wedge \, \sum_{i \in [\omega]} e_i \leq D\}$.  A polynomial in $\+P$ is $\sum_{(e_1,\ldots,e_\omega) \in \Phi} c(e_1,\ldots,e_\omega) \cdot \prod_{i \in [\omega]} x_i^{e_1}$ for some coefficient $c(e_1,\ldots,e_\omega)$.  Linearizing it means introducing a new variable $y(e_1,\ldots,e_\omega)$ for each $(e_1,\ldots,e_\omega) \in \Phi$ and rewriting the polynomial as $\sum_{(e_1,\ldots,e_\omega) \in \Phi} c(e_1,\ldots,e_\omega) \cdot y(e_1,\ldots,e_\omega)$.  
%If there are multiple polynomials of degree $D$ in the same variables $x_1,\ldots,x_\omega$, the new variables $y(e_1,\ldots,e_\omega)$ are shared in their linearizations. 
%A straightforward bound on the number of new variables is $(D+1)^\omega$.  We will use an alternative bound. 
Let $M$ be the maximum number of variables in a polynomial in $\+P$. 
There are $O(\omega^M)$ subsets of $\{x_1,\ldots,x_\omega\}$ that have size $M$ or less.  There are less  than $(D+1)^M$ ways to assign exponents to the variables in each subset.  Therefore, the number of new variables is less than $O(\omega^M) \cdot (D+1)^M = O(\omega^{O(1)})$ if both $D$ and $M$ are constants.

\begin{lemma}
	\label{lem:linear}
	Let $\+P$ be a set of polynomials of constant degrees and sizes in $\omega$ variables.  The zero set of each polynomial in $\+P$ can be linearized to a hyperplane equation that has constant size.  The resulting hyerplanes reside in $O(\omega^{O(1)})$ dimensions.
\end{lemma}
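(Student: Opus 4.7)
The plan is to execute the linearization scheme sketched in the paragraph before the lemma and then carry out the two size estimates demanded by the statement. First I would invoke the hypothesis: let $D$ be the common constant upper bound on the degrees and let $M$ be the constant upper bound on the number of distinct variables appearing in any single polynomial of $\+P$ (such an $M$ exists because each polynomial has constant size, hence involves only constantly many of $x_1,\ldots,x_\omega$). It follows that every monomial of every polynomial in $\+P$ has the form $c\cdot\prod_{i\in S} x_i^{e_i}$, where $S\subseteq\{x_1,\ldots,x_\omega\}$ satisfies $|S|\leq M$ and the exponents $(e_i)_{i\in S}$ are positive integers with $\sum_{i\in S} e_i\leq D$.

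Next I would introduce one fresh variable $y_{S,(e_i)}$ for each such monomial type. The key point is that the indexing is \emph{global}: the same monomial pattern occurring in two different polynomials of $\+P$ is assigned the \emph{same} $y$-variable, so all linearized equations share a common ambient coordinate system. A polynomial $\rho=\sum_k c_k\prod_{i\in S_k} x_i^{e_{k,i}}\in\+P$ then rewrites as the linear form $\sum_k c_k\, y_{S_k,(e_{k,i})}$, and its zero set lifts to the hyperplane $\sum_k c_k\,y_{S_k,(e_{k,i})}=0$. Since $\rho$ has only constantly many monomials, this hyperplane equation has only constantly many nonzero coefficients, giving the desired constant size.

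Finally I would bound the ambient dimension by counting the fresh variables. The number of subsets $S\subseteq\{x_1,\ldots,x_\omega\}$ with $|S|\leq M$ is at most $\sum_{k=0}^{M}\binom{\omega}{k}=O(\omega^M)$. For each such $S$, the number of tuples of positive integers $(e_i)_{i\in S}$ with $\sum_{i\in S}e_i\leq D$ is at most $(D+1)^{|S|}\leq (D+1)^M$. Multiplying yields $O(\omega^M\cdot (D+1)^M)=O(\omega^{O(1)})$ fresh variables, because $D$ and $M$ are both constants.

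The only delicate point — and what I would flag as the main obstacle, though it is a bookkeeping rather than a mathematical one — is to insist that the exponents $e_i$ be strictly positive so that the support set $S$ is determined uniquely by the monomial. Otherwise a single monomial could be encoded by many different $(S,(e_i))$ pairs, introducing duplicated $y$-variables and creating distinct hyperplanes for a common polynomial; the canonical support-and-positive-exponent encoding avoids this and makes the count above tight.
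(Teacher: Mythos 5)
Your proposal is correct and takes essentially the same approach as the paper: introduce one fresh $y$-variable per monomial type and bound their number by $O(\omega^M)\cdot(D+1)^M=O(\omega^{O(1)})$ where $M$ is the per-polynomial variable count and $D$ the degree. The extra care you take — indexing by support set with strictly positive exponents and stressing that the $y$-variables are shared globally across $\+P$ — merely makes explicit what the paper's counting argument assumes.
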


We use the point location data strucutre by Erza~et~al.~\cite{ezra2020decomposing}.  Its query time has a big-O~constant that is independent of the dimension.

\begin{theorem}[\cite{ezra2020decomposing}]
	\label{thm:locate}
	Let $\mathcal{H}$ be a set of $s$ hyperplanes in $\real^\omega$.  For any $\eps > 0$, one can construct a data structure that locates any query point in $\mathscr{A}(\mathcal{H})$ in $O(\omega^4\log s)$ time. The space and preprocessing time are $O(s^{2\omega\log\omega+O(\omega)})$.
	% size in $O(s^{\omega+\eps})$ expected time such that any query point can be located 
\end{theorem}

\section{Overview of the decision procedure}
\label{sec: outline}

Let $\alpha = \Theta(\log m/\log\log m)$ be an integer.  Divide $[n]$ into $(n-1)/\alpha$ \emph{blocks} $B_k= [a_k, a_{k+1}]$ for $k \in [(n-1)/\alpha]$, where $a_k =(k-1)\alpha+1$.  For example, $B_1 = (1,\ldots,\alpha+1)$ and $B_2 = (\alpha+1,\ldots,2\alpha+1)$.  Note that $B_{k-1} \cap B_k = \{a_k\}$.  $B_k$ represents the $\alpha$ edges $v_{a_k}v_{a_k+1}, v_{a_k+1}v_{a_k+2},\ldots,v_{a_{k+1}-1}v_{a_{k+1}}$.  

We divide $[m]$ into blocks in a different way.  Let $\theta = \alpha^{\mu}$ for a constant $\mu \in (0,1)$ that will be specified later. Divide $[m]$ into $(m-1)/\theta$ blocks $B'_l = [b_l,b_{l+1}]$ for $l \in [(m-1)/\theta]$, where $b_l = (l-1)\theta+1$. Note that $B'_{l-1}\cap B'_l=\{b_l\}$. $B'_l$ represents the $\theta$ edges $w_{b_l}w_{b_l+1},\ldots,w_{b_{l+1}-1}w_{b_{l+1}}$.

Split the $(n-1) \times (m-1)$ dynamic programming table into $\alpha \times \theta$ boxes that are specified as $(B_k,B'_l)$ for some $k \in [(n-1)/\alpha]$ and $l \in [(m-1)/\theta]$.  We extend the definitions of $R_i[j]$ and $R'_j[i]$ as follows.  Let $R_i[j,j']$ be the set of reachability intervals of $v_i$ on the edges $w_jw_{j+1},\ldots, w_{j'}w_{j'+1}$.  Let $R'_j[i, i']$ be the set of reachability intervals of $w_j$ on the edges $v_iv_{i+1},\ldots,v_{i'}v_{i'+1}$.

We use two encoding schemes. The first is the \emph{signature} of a box $(B_k,B'_l)$ which encodes some useful information about $\tau[v_{a_k}, v_{a_{k+1}}]$ and $\sigma[w_{b_l}, w_{b_{l+1}}]$. A box signature contains $\alpha+\theta$ integers, and we will show that all box signatures can be stored in a table of size $o(mn)$.  We can thus refer to a box signature by its index. The second encoding is an encoding of reachability intervals. The encodings of $R_{a_k}[b_l, b_{l+1}-1]$ and $R'_{b_l}[a_k, a_{k+1}-1]$ constitute the \emph{input encoding} of the box $(B_k,B'_l)$. The encodings of $R_{a_{k+1}}[b_l, b_{l+1}-1]$ and $R'_{b_{l+1}}[a_k, a_{k+1}-1]$ constitute the \emph{output encoding} of the box $(B_k,B'_l)$.  We will show that each input or output encoding can be stored as one integer.

We will show how to construct the output encoding based on the input encoding and the box signature only.  Therefore, the idea is to construct a table in preprocessing so that we can look up the output encoding of a box using a key formed by the index of its signature and its input encoding.  The reachability information can thus be propagated in $O(1)$ time to the next box.  For this approach to work, we need to show that the number of distinct combinations of signatures and input encodings of boxes is subquadratic. We will prove such a bound using polynomials of constant degrees in $\theta$ variables. We also need fast point location data structures for arrangements of hyperplanes to efficiently retrieve the signature and input encoding of a given box.

Unfortunately, the above approach is still not fast enough.
There are two bottlenecks.  First, we need to spend $\Omega(\alpha)$ time to unpack the output encoding of one box and produce the input encoding of the next box.  Second, we need to spend $\Omega(\alpha)$ time to associate each box with its signature.  There are $\Theta(mn/(\alpha\theta))$ boxes, which makes the decision algorithm run in $\Omega(mn/\theta)$ time.  The time to compute $d_F$ is thus $\Omega(mn\log(mn)/\theta) = \Omega(mn)$.
To tackle these bottlenecks, we introduce a second level of batch processing of clusters of boxes.  The idea is to retrieve the signatures of all boxes in a cluster as a batch so that the average time per box is smaller.  Similarly, we can save the time for unpacking the output encodings and producing input encodings of boxes that lie within a cluster.

\section{Encoding schemes}
\label{sec: encoding}

For the rest of the paper, we will assume that $m = \Omega(n^\eps)$ for some fixed $\eps \in (0,1]$ whenever we say that some computation takes $o(mn)$ time.

\subsection{Box signature}
\label{sec:box-sig}

Consider the box $(B_k,B'_l)$. Its signature consists of the signatures of its rows and columns.  Take the column $j$ of the dynamic programming table for some $j \in [b_l,b_{l+1}-1]$.  Let $\+I_j$ be the sequence of points $(s_{a_k,j}, e_{a_k,j}, \ldots, s_{a_{k+1},j}, e_{a_{k+1},j})$.  As discussed in Section~\ref{sec:back-frechet}, the sorted order of $\+I_j$ from $w_j$ to $w_{j+1}$ along $w_jw_{j+1}$ helps deciding $R_{i+1}[j]$ from $R_i[j]$.  The signature of column $j$ is an array $\phi$ of size $2(a_{k+1}-a_k+1)$.  For every $i \in B_k$, $\phi[2i-2a_k+1]$ and $\phi[2i-2a_k+2]$ store the ranks of $s_{i,j}$ and $e_{i,j}$, respectively, when $\+I_j$ is ordered from $w_j$ to $w_{j+1}$ along $w_jw_{j+1}$.  If $s_{i,j}$ and $e_{i,j}$ are null, $\phi[2i-2a_k+1]$ and $\phi[2i-2a_k+2]$ are equal to $2\alpha+3$.   An example is given in Figure~\ref{fig: signature2}.  The signature of row $i$ for some $i \in [a_k,a_{k+1}-1]$ is defined analogously by sorting the $\+I'_i = (s'_{b_l,i}, e'_{b_l,i},\ldots,s'_{b_{l+1},i},e'_{b_{l+1},i})$ along the edge $v_iv_{i+1}$.  The signature of a row or column requires $O(\alpha\log\alpha) = O(\log m)$ bits, so it can be stored as an integer.  The box signature consists of the signatures of $\alpha$ rows and $\theta$ columns.

Computing a box signature as described above requires $O(\alpha\theta\log\alpha)$ time.  We cannot afford it during dynamic programming as there are $O(mn/(\alpha\theta))$ boxes.  Also, as the box signature consists of $\alpha + \theta$ integers, we will not be able to combine it with the input encoding of a box to form a key that can be stored in a word.  This is contrary to the high-level approach sketched in Section~\ref{sec: outline}.  Fortunately, we will show in Section~\ref{sec:preprocess} that there are $o(mn)$ distinct box signatures.  So we can store them in a table of $o(mn)$ size and represent a box signature by its table index.  We will construct a two-dimensional array whose $(k,l)$-entry stores the index of the signature of the box $(B_k,B'_l)$.

\begin{figure}[h]
	\centerline{\includegraphics[scale=0.6]{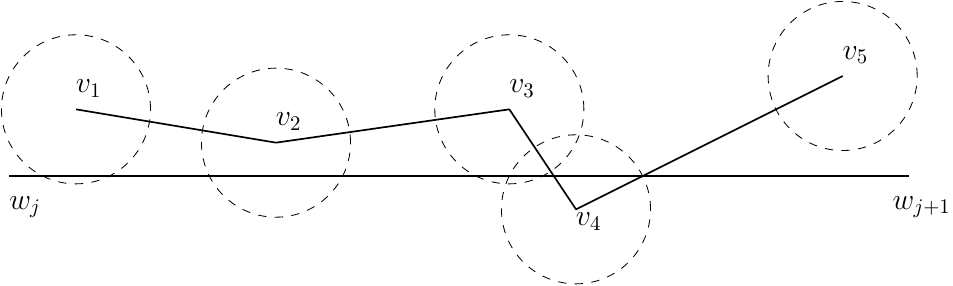}}
	\caption{Suppose that $B_k = [1,5]$ and all balls in this figure have radius $\delta$. The signature of column $j$ is $\phi = (1, 2, 3, 4, 5, 7, 6, 8, 11, 11)$.}
	\label{fig: signature2}
\end{figure}

\subsection{Encoding of reachability intervals}

The reachability interval $R_i[j]$ is encoded with respect to a block $B_k$ that contains $i$.  We denote the encoding by $\mathrm{code}_{B_k}(R_i[j])$.  Note that $R_{a_{k+1}}[j]$ may have different encodings $\mathrm{code}_{B_k}(R_{a_{k+1}}[j])$ and $\mathrm{code}_{B_{k+1}}(R_{a_{k+1}}[j])$ with respect to $B_k$ and $B_{k+1}$, respectively.  Similarly, $R'_j[i]$ is encoded with respect to a block $B'_l$ that contains $j$, and we denote the encoding by $\mathrm{code}_{B'_l}(R'_j[i])$.  

We use $\mathrm{code}_{B_k}(R_i[b_l,b_{l+1}-1])$ to denote $\mathrm{code}_{B_k}(R_i[j])$ for all $j \in [b_l,b_{l+1}-1]$.  The shorthand $\mathrm{code}_{B'_l}(R'_j[a_k,a_{k+1}-1])$ is defined analogously.  The \emph{input encoding} of the box $(B_k,B'_l)$ consists of $\mathrm{code}_{B_k}(R_{a_k}[b_l,b_{l+1}-1])$ and $\mathrm{code}_{B'_l}(R'_{b_l}[a_k,a_{k+1}-1])$.  The \emph{output encoding} of the box $(B_k,B'_l)$ consists of $\mathrm{code}_{B_k}(R_{a_{k+1}}[b_l,b_{l+1}-1])$ and $\mathrm{code}_{B'_l}(R'_{b_{l+1}}[a_k,a_{k+1}-1])$.

Consider a block $B_k$ and some $i \in B_k$.  As discussed in Section~\ref{sec:back-frechet}, if $R_i[j] = \emptyset$, both $\ell_{i,j}$ and $r_{i,j}$ are null; otherwise, $r_{i,j} = e_{i,j}$.  Therefore, it suffices to encode $\ell_{i,j}$, which belongs to $(\ell_{a_k,j},s_{a_k,j},s_{a_k+1,j}, \ldots, s_{a_{k+1},j})$ by Lemma~\ref{obs: end-ri}.  We define $\mathrm{code}_{B_k}(R_i[j])$ to be a 3-tuple $(\pi_{i,j},\beta_{i,j},\gamma_{i,j}) \in [2\alpha+3] \times \{0,1\} \times [0,\alpha+1]$ with the following meaning.

If $R_i[j] = \emptyset$, then $\pi_{i,j} = 2\alpha+3$, signifying that $\ell_{i,j}$ is null; the values of $\beta_{i,j}$ and $\gamma_{i,j}$ are irrelevant in this case.  Suppose that $R_i[j] \not= \emptyset$.  Then, $\gamma_{i,j}$ tells us which element in $(\ell_{a_k,j},s_{a_k,j},\ldots,s_{a_{k+1},j})$ is $\ell_{i,j}$.  We need not store the sequence $(\ell_{a_k,j},s_{a_k,j},\ldots,s_{a_{k+1},j})$ explicitly.  If $\gamma_{i,j} \in [\alpha+1]$, then $\ell_{i,j} = s_{a_k+\gamma_{i,j}-1,j}$.   If $\gamma_{i,j} = 0$, then $\ell_{i,j} = \ell_{a_k,j}$, and the $\pi_{i,j}$-th element in the sorted order of $\{s_{a_k,j}, e_{a_k,j}, \ldots, s_{a_{k+1},j}, e_{a_{k+1},j}\}$ along $w_jw_{j+1}$ is the immediate predecessor of $\ell_{a_k,j}$.  Also, $\beta_{i,j} = 1$ if and only if $\ell_{a_k,j}$ is equal to this immediate predecessor.

The encoding $\mathrm{code}_{B'_l}(R'_j[i])$ is defined analogously as a 3-tuple $(\pi'_{j,i}, \beta'_{j,i}, \gamma'_{j,i})$.  The input or output encoding of a box uses $O(\alpha\log\alpha) = O(\log m)$ bits, so it can be stored as an integer.

Suppose that $i+1 \in B_k$.  We describe how to produce $\mathrm{code}_{B_k}(R_{i+1}[j])$ using the signature of $(B_k,B'_l)$, $\mathrm{code}_{B_k}(R_i[j])$, and $\mathrm{code}_{B'_l}(R_j[i])$.  As described in Section~\ref{sec:back-frechet}, we need to decide whether $R_i[j]$ or $R'_j[i]$ is null, which can be done by checking whether $\pi_{i,j}$ or $\pi'_{j,i}$ is equal to $2\alpha+3$.  Next, we need to compare $\ell_{i,j}$ with $s_{i+1,j}$ and $e_{i+1,j}$.  There are two cases in handling this comparison.

Suppose that $\gamma_{i,j} \in [\alpha+1]$.   We have $\ell_{i,j} = s_{a_k + \gamma_{i,j}-1,j}$ in this case.  We use the signature $\phi$ of column $j$ to retrieve $\phi[2\gamma_{i,j}-1]$, which is the rank of $s_{a_k+\gamma_{i,j}-1,j}$ among the points in $\+I_j$ ordered from $w_j$ to $w_{j+1}$ along $w_jw_{j+1}$.  Comparing $\phi[2\gamma_{i,j}-1]$ with $\phi[2(i+1)-2a_k+1]$ and $\phi[2(i+1)-2a_k+2]$ give the results of comparing $\ell_{i,j}$ with $s_{i+1,j}$ and $e_{i+1,j}$, which allow us to determine $\ell_{i+1,j}$ as described in Section~\ref{sec:back-frechet}. If $\ell_{i+1,j}$ is not null, it is equal to either $s_{a_k+\gamma_{i,j}-1,j}$ or $s_{i+1,j}$.  We thus construct $\mathrm{code}_{B_k}(R_{i+1}[j])$ as follows. If $\ell_{i+1,j}$ is null, set $\pi_{i+1,j} = 2\alpha+3$; if $\ell_{i+1,j} = s_{a_k+\gamma_{i,j}-1,j}$, set $\gamma_{i+1,j} = \gamma_{i,j}$; if $\ell_{i+1,j} = s_{i+1,j}$, set $\gamma_{i+1,j} = i-a_k+2$.

Suppose that $\gamma_{i,j} = 0$.  So $\ell_{i,j} = \ell_{a_k,j}$. Given the signature $\phi$ of column $j$, we get the rank $\phi[\pi_{i,j}]$ of the immediate predecessor of $\ell_{i,j}$ among the points in $\+I_j$ sorted along $w_jw_{j+1}$.  We use $\beta_{i,j}$ and the comparsions of $\phi[\pi_{i,j}]$ with $\phi[2(i+1)-2a_k+1]$ and $\phi[2(i+1)-2a_k+2]$ to determine the comparsions of $\ell_{a_k,j}$ with $s_{i+1,j}$ and $e_{i+1,j}$.  This tells us whether $\ell_{i+1,j}$ is null, $\ell_{a_k,j}$, or $s_{i+1,j}$ as described in Section~\ref{sec:back-frechet}.  We construct $\mathrm{code}_{B_k}(R_{i+1}[j])$ as follows. If $\ell_{i+1,j}$ is null, set $\pi_{i+1,j} = 2\alpha+3$; if $\ell_{i+1,j} = \ell_{a_k,j}$, set $\gamma_{i+1,j} = 0$, $\pi_{i+1,j} = \pi_{i,j}$, and $\beta_{i+1,j} = \beta_{i,j}$; if $\ell_{i+1,j} = s_{i+1,j}$, set $\gamma_{i+1,j} = i-a_k+2$.

Hence, we can compute $\mathrm{code}_{B_k}(R_{i+1}[j])$ in $O(1)$ time.  Similarly, if $j+1 \in B'_l$, we can also compute $\mathrm{code}_{B'_l}(R'_{j+1}[i])$ in $O(1)$ time.  In all, given the signature of the box $(B_k,B'_l)$ and its input encoding, we can process the cells in the box row by row to obtain the output encoding.

\begin{lemma}
	\label{lem: encoding}
	Given the signature of a box $(B_k,B'_l)$ and its input encoding, we can compute its output encoding in $O(\alpha\theta)$ time.  The signature consists of $\alpha+\theta$ integers.  The input encoding occupies one integer, and so does the output encoding.
\end{lemma}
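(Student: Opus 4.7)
The plan is to execute the row-by-row dynamic programming inside the box using the constant-time per-cell update already described before the lemma statement, and then to verify the size claims directly from the encoding schemes.

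First I would unpack the input encoding into its $\alpha + \theta$ constituent $3$-tuples: $(\pi_{a_k,j}, \beta_{a_k,j}, \gamma_{a_k,j})$ for $j \in [b_l, b_{l+1}-1]$ and $(\pi'_{b_l,i}, \beta'_{b_l,i}, \gamma'_{b_l,i})$ for $i \in [a_k, a_{k+1}-1]$. Since the input encoding occupies a single word of $O(\alpha \log \alpha)$ bits and the real-RAM model supports $O(1)$-time table lookups on integers of $O(\log n)$ bits, this unpacking costs $O(\alpha + \theta)$ time. Next I would process the $\alpha\theta$ cells in row-major order from the bottom row upward.

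For each cell $(i,j)$, the text preceding the lemma supplies a case analysis on $\gamma_{i,j}$: if $\gamma_{i,j} \in [\alpha+1]$ then $\ell_{i,j} = s_{a_k + \gamma_{i,j}-1, j}$, otherwise $\gamma_{i,j} = 0$ and the predecessor rank $\pi_{i,j}$ together with $\beta_{i,j}$ encodes $\ell_{a_k,j}$. In each case, the comparisons of $\ell_{i,j}$ with $s_{i+1,j}$ and $e_{i+1,j}$ are carried out in $O(1)$ time by two rank lookups into the column signature $\phi$ of column $j$, which is part of the box signature. Symmetric reasoning with the row signature handles $\ell'_{j,i}$ versus $s'_{j+1,i}$ and $e'_{j+1,i}$. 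The rules from Section~\ref{sec:back-frechet} then pin down whether $\ell_{i+1,j}$ and $\ell'_{j+1,i}$ are null or equal to one of the stated candidate values, and I would write the corresponding $3$-tuples $(\pi_{i+1,j}, \beta_{i+1,j}, \gamma_{i+1,j})$ and $(\pi'_{j+1,i}, \beta'_{j+1,i}, \gamma'_{j+1,i})$ exactly as prescribed in the two paragraphs immediately before the lemma. With $O(1)$ work per cell and $\alpha\theta$ cells, the total time is $O(\alpha\theta)$, absorbing the unpacking cost.

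Finally I would assemble the $3$-tuples along the top row and the rightmost column of the box into the output encoding. Each of the two halves contains at most $\max(\alpha,\theta)+1$ tuples drawn from $[2\alpha+3] \times \{0,1\} \times [0,\alpha+1]$, so the output encoding uses $O(\alpha \log \alpha) = O(\log m)$ bits and fits in a single word. The signature is the concatenation of $\alpha$ row signatures and $\theta$ column signatures, giving $\alpha + \theta$ integers, as claimed. The only delicate step is the $\gamma_{i,j} = 0$ case, in which $\pi_{i,j}$ and $\beta_{i,j}$ must be propagated verbatim whenever $\ell_{i+1,j}$ inherits the value $\ell_{a_k,j}$, since the column signature alone only records the rank of the immediate predecessor of $\ell_{a_k,j}$ among $\+I_j$ and not its precise location; I would double-check this propagation in each sub-case to ensure the invariant of the encoding is preserved across the row.
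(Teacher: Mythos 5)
Your proposal is correct and matches the approach the paper itself uses (the lemma is stated immediately after, and as a direct summary of, the constant-time per-cell update rules you reproduce). The row-by-row DP over the $\alpha\theta$ cells with $O(1)$ work per cell via rank lookups in the row/column signatures, plus the bit-counting for the size claims, is exactly what the paper intends.
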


\cancel{
	
	which allows the encoding of the input for the next box specified by $(B_{k+1},B'_l)$.  In all, $\gamma_{i,j}$ is the compact representation of $\ell_{i,j}$ in this case.
	
	Suppose that $i$ is in fact $a_{k+1}$.  The coordinates of $\ell_{a_k,j}$ were already available when we encoded the input for this box specified by $(B_k, B'_l)$.  This gives us the coordinates of $\ell_{a_{k+1},j}$ as $\ell_{a_{k+1},j} = \ell_{a_k,j}$.
	
	We cannot unpack the compact representation of $\ell_{a_k,j}$ to produce the coordinates of $\ell_{i,j}$ because this defeats the goal of propagating reachability information efficiently within a box.

	Since the points $s_{1,j},\ldots,s_{\alpha+1,j}$ are not explicitly represented, we need $(\pi_{i,j},\beta_{i,j})$ to help us to determine the coordinates of $\ell_{i,j}$.  The integer $\pi_{i,j}$ is equal to $\phi(s_{\gamma_{i,j},j})$.

	in the vector $\phi(\+I_j)$.

	Next, we describe how to define an encoding $\psi(x)$ for every point $x$ in $\tau[v_1, v_{\alpha+1}]$ with respect to $\sigma[w_1, w_{\alpha+1}]$. Take a point $x$ on $v_iv_{i+1} \subseteq \tau[v_1, v_{\alpha+1}]$. The encoding $\psi(x)$ should enable us to derive which points in $\+I$ are in front of or equal to or behind $x$ along $v_iv_{i+1}$ by accessing the signature of row $i$ only. Define $\phi(x) = \phi(p)$, where $p\in \+I$ is the immediate predecessor of $x$ along $v_iv_{i+1}$.  Define $\psi(x)=(\phi(x), c)$ 
	%with respect to $\sigma[w_1, w_{\alpha+1}]$ and makes use of the ranks of elements in $\mathcal{I}$.  
	where $c$ indicates whether $x$ is equal to $p$. If so, $c=1$; otherwise, $c=0$. If all entries in $\+I$ are null, $\phi(x)$ is 1 and $c$ is 0. We encode a point in $\sigma[w_1, w_{\alpha+1}]$ with respect to $\tau[v_1, v_{\alpha+1}]$ in the same way. 
	
	We use the above encoding scheme to encode reachability intervals. Take $R'_{j}[i]$ for demonstration. According to Observation~\ref{obs: end-ri}, $r'_{j,i}$ equals to $e'_{j,i}$ if $\ell'_{j,i}$ is not null and is null if $\ell'_{j,i}$ is null. It is sufficient to represent $R'_j[i]$ by $\ell'_{j,i}$. We further encode $\ell'_{j,i}$ to keep it in $O(\log_2\alpha)$ bits. We will need to use the coordinate of $\ell'_{j,i}$ sometimes in our decision procedure. To calculate it on demand, we introduce one more entry $a$.  By Observation~\ref{obs: end-ri}, $\ell'_{j,i}\in\{\ell'_{1, i}, s'_{1, i}, s'_{2,i}, \ldots, s'_{\alpha+1, i}\}$. We index all points in this set from 0 to $\alpha+1$ with 0 indexing $\ell'_{1,i}$ and $\alpha+1$ indexing $s'_{\alpha+1,i}$. The entry $a$ indicates which point in the set equals to $\ell'_{j,i}$ and is an integer in $\{0\}\cup[\alpha+1]$.  The entry $a$ is useful in the sense that we are able to construct the set by accessing $\tau[v_1, v_{\alpha+1}]$ and $\sigma[w_1, w_{\alpha+1}]$. We encode $R'_j[i]$ by $(\psi(\ell'_{j,i}), a)$. If $R'_{j}[i]$ is empty, $\ell'_{j,i}$ is null, we set $\psi(\ell'_{j,i})$ to be $(2\alpha+3, 0)$ and $a$ to be 0. We encode all the other reachability intervals in the same way. We concatenate the encodings of a sequence of reachabiblity intervals to form the encoding of this sequence. %The encoding of a collection of reachability intervals consists of encodings of all reachability intervals in the collection. %We can compare two encodings according to lexicographical order.
	
	%We can encode all the other reachability intervals in $R_{v_1}[1, \alpha]$, $R_{w_1}[1, \alpha]$, $R_{v_{\alpha+1}}[1, \alpha]$ and $R_{w_{\alpha+1}}[1, \alpha]$ in the same way.
	
	%Note that $R_{w_1}[i]$ is the reachability interval on the left boundary of the $i$-th row in the box. Let $s_0$ and $e_0$ be the beginning and end points of $R_{w_1}[i]$, respectively. According to the Observation~\ref{obs: end-ri}, $e_0$ equals to $e_1$ if $R_{w_1}[i]$ is not empty. Hence, it is sufficient to store $s_0$ to represent $R_{w_1}[i]$. We describe an encoding scheme for $s_0$ with respect to the point set $D$, and use $s_0$'s encoding as the encoding of $R_{w_1}[i]$. With a little abuse of notation, we still use $\phi(s_0)$ to denote the encoding of $s_0$. If $R_{w_1}[i]$ is empty, set $\phi(s_0)$ to be $2\alpha+4$; otherwise, we set $\phi(s_0)$ to be the rank of $s_0$ in the sorted point sequence generated by sorting all points in $\{s_0\}\cup D$. We can encode all the other reachability intervals in $R_{v_1}[1, \alpha]$, $R_{w_1}[1, \alpha]$, $R_{v_{\alpha+1}}[1, \alpha]$ and $R_{w_{\alpha+1}}[1, \alpha]$ in the same way.
	
	The input encoding of the box consists of its rows-signature, its columns-signature, $R_1[1, \alpha]$'s encoding and $R'_1[1, \alpha]$'s encoding. The output encoding consists of encodings of $R_{\alpha+1}[1, \alpha]$ and $R'_{\alpha+1}[1, \alpha]$. We compute the output encoding for a box based on its input encoding only. 
	
	\begin{lemma}~\label{lem: encoding}
		For any $k\in [\frac{n}{\alpha}]$ and $l\in [\frac{m}{\alpha}]$, given the input encoding of the box specified by $(B_k, B'_l)$, it takes $O(\alpha^2)$ time to get the output encoding of this box. No additional knowledge is needed.
	\end{lemma}
	
	\begin{proof}
		We use dynamic programming to compute the output encoding using the input encoding. The input encoding includes the encodings of $R_{a_k}[b_l, b_{l+1}-1]$ and $R'_{b_l}[a_k, a_{k+1}-1]$. They serve as the boundary conditions for the box.
		
		First, we describe the recurrence to compute the encoding of $R_{i}[j]$ for all $i\in [a_k+1, a_{k+1}]$ and all $j\in [b_l, b_{l+1}-1]$. Suppose that we know the encodings of $R_{i-1}[j]$ and $R'_{j}[i-1]$. We test whether $R'_{j}[i-1]$ is empty by testing $\phi(\ell'_{j, i-1})$. If $\phi(\ell'_{j, i-1})$ is $2\alpha+3$, it means that $\ell'_{j, i-1}$ is null and $R'_{j}[i-1]$ is empty. We test whether $\ell_{i-1, j}$ is behind $s_{i, j}$ along $w_{j}w_{j+1}$ according to the encoding $(\phi(\ell_{i-1,i}), c)$ of $\ell_{i-1,j}$ in the encoding of $R_{i-1}[j]$. If $\phi(\ell_{i-1,j})<\phi(s_{i,j})$ or $\phi(\ell_{i-1,j})=\phi(s_{i,j})$ and $c=1$, then $\ell_{i-1,j}$ is not behind $s_{i,j}$; otherwise, it is behind. If either $R'_{j}[i-1]$ is not empty or $\ell_{i-1, j}$ is not behind $s_{i, j}$, $\ell_{i,j}=s_{i,j}$. We set $\phi(\ell_{i, j})$ to be $\phi(s_{i, j})$, $c$ to be 1, and $a$ to be $i-a_k+1$. In the remaining case, if $\phi(s_{i, j})< \phi(\ell_{i-1, j})\le \phi(e_{i, j})$, $\ell_{i,j}$ equals to $\ell_{i-1,j}$. We set the encoding of $R_{i}[j]$ to be the encoding of $R_{i-1}[j]$. Otherwise, $\ell_{i, j}$ is null, we set $\phi(\ell_{i, j})$, $c$ and $a$ to be $2\alpha+3$, 0 and 0, respectively. %The value of $\phi(r_{i', j'})$ is set according to $\phi(\ell_{i', j'})$. If $\phi(\ell_{i', j'})< 2\alpha+3$, we set $\phi(r_{i', j'})$ to be $\phi(e_{i', j'})$; otherwise, we set $\phi(r_{i',j'})$ to be $2\alpha+3$.
		
		We can develop a recurrence for computing the encoding of $R'_{j}[i]$ for all $i\in [a_k, a_{k+1}-1]$ and all $j\in [b_l+1, b_{l+1}]$ in the same way. Since the computation for every single $R_i[j]$ or $R'_j[i]$ takes $O(1)$ time and there are $O(\alpha^2)$ entries to be computed, the total time taken is $O(\alpha^2)$.
	\end{proof}
	
	%The signature of a box includes a rows-signature and a columns-signature. The rows-signature or columns-signature contains signatures of $\alpha$ rows/columns. For each row/column, its signature consists of $2\alpha+2$ integers in $[2\alpha+3]$. Hence, we can represent the signature of each row/column in $O(\alpha\log_2\alpha)$ bits. In this way, we can represent the row/column signature in $O(\alpha^2\log_2\alpha)$ bits. 
	
	As for the representation for the input and output encodings, we present how to compactly represent the signature of a box in $O(\alpha\log_2\alpha)$ bits in section~\ref{sec:ds_gen_input_encode}. For the encoding $((\phi(\ell'_{j,i}),c),a)$ of a reachability interval $R'_j[i]$, $\phi(\ell'_{j,i})$ is an integer in $[2\alpha+3]$, $c$ is a binary variable, and $a$ is an integer no more than $\alpha+1$. Hence, we can represent each reachability interval in $O(\log_2\alpha)$ bits. The encoding of a sequence of $\alpha$ reachability intervals takes $O(\alpha\log_2\alpha)$ bits. 
}

\cancel{
	Within the box, the signature and encodings of reachability intervals $R_{v_1}[1, \alpha]$ and $R_{w_1}[1, \alpha]$ reflects the geometric relations between the endpoints of these reachability intervals and the endpoints of intersections of balls and edges. To be more specific, the signature and encodings satisfy the following property.
	
	\begin{property}\label{pro: encoding}
		For any $i\in [1, \alpha]$ such that $R_{w_1}[i]$ is not empty and every $j\in [\alpha+1]$, if $\phi(s'_{j,i})\backslash \phi(e'_{j,i})< \phi(\ell'_{j,i})$, then $s'_{j, i} \backslash e'_{j,i} \le_{v_iv_{i+1}}  \ell'_{1,i}$; otherwise, $\ell'_{1,i} \le_{v_iv_{i+1}} s'_{j, i}\backslash e'_{j,i})$ or $s'_{j, i} \backslash e'_{j,i})$ is null. For any $j\in [1,\alpha]$ such that $R_{v_1}[j]$ is not empty and every $i\in [\alpha+1]$,  
	\end{property}
	
	We call the signature and the encoding of $R_{v_1}[1, \alpha]$ and $R_{w_1}[1, \alpha]$ \emph{compatible} if they satisfy Property.~\ref{pro: encoding}. In this case, we call the input encoding of the box compatible as well.
}

\section{Subquadratic decision algorithm}
\label{sec:preprocess}

%We describe the preprocessing and the subsequent dynamic programming that has a total running time of
%, especially the oracle for retrieving box signatures and the $X$-coder for every $X \in \{B_1,\ldots,B_{n/\alpha}\} \cup \{B'_1,\ldots,B'_{m/\alpha}\}$.  Then, we show how the preprocessing supports compact dynamic programming that propagates the reachability information from one box to the next in a single step.  The total running of time of the preprocessing and the compact dynamic programming 
%$O(mn\log\alpha/\alpha)$.  In Section~\ref{sec:faster}, we will improve the running time to $O(mn\log\alpha/\alpha^2)$.

%Our decision procedure includes two stages. In the first stage, we do some precomputation and construct several data structures. The second stage is a compact dynamic programming procedure in which we propagate the reachability information across a box in a single step.

\subsection{Preprocessing}
\label{sec: preprocessing}

Given an oriented line segment $pq$, we say $x \le_{pq} y$ for two points $x,y \in pq$ if $x$ is not behind $y$ along $pq$.  Consider a box $(B_k, B'_l)$ and the predicates in Table~\ref{tb:pred} over all $i, i' \in B_k$ and $j,j' \in B'_l$.  $P_1$ and $P_2$ capture the distances between the vertices of $\tau$ and the edges of $\sigma$ and vice versa.  For each $j$, $P_4$, $P_6$, and $P_8$ capture the ordering of the endpoints of $\+B_{v_i} \cap w_jw_{j+1}$ for $i \in B_k$ along the edge $w_jw_{j+1}$.  For each $i$, $P_3$, $P_5$, and $P_7$ capture the ordering of the endpoints of $\+B_{w_j} \cap v_iv_{i+1}$ for $j \in B'_l$ along the edge $v_iv_{i+1}$.  Hence, we can determine the signatures of all rows and columns of the box $(B_k,B'_l)$ from the truth values of these predicates without checking $\tau$ and $\sigma$ further.

\begin{table}
	\centerline{\begin{tabular}{lcl}
			\hline
			\addlinespace[1pt]
			$P_1(i,j) = \text{true}$ & $\iff$ &  $d(w_{j}, v_{i}v_{i+1})\leq\delta$ \\[.1em] \hline
			\addlinespace[1pt]
			$P_2(i,j) = \text{true}$ & $\iff$ & $d(v_{i},w_{j}w_{j+1}) \leq \delta$\\[.1em] \hline
			\addlinespace[1pt]
			$P_3(i,j,j') = \text{true}$ & $\iff$ &  $\text{neither $s'_{j, i}$ nor $s'_{j', i}$ is null} \, \wedge \, s'_{j, i}\le_{v_iv_{i+1}}s'_{j', i}$\\[.1em] \hline
			\addlinespace[1pt]
			$P_4(i,i',j) = \text{true}$ & $\iff$ & $\text{neither $s_{i, j}$ nor $s_{i', j}$ is null} \, \wedge \, s_{i, j}\le_{w_jw_{j+1}}s_{i', j}$ \\[.1em] \hline
			\addlinespace[1pt]
			$P_5(i, j, j') = \text{true}$ & $\iff$ & $\text{neither $e'_{j, i}$ nor $e'_{j', i}$ is null} \, \wedge \, e'_{j, i}\le_{v_iv_{i+1}}e'_{j', i}$\\[.1em] \hline
			\addlinespace[1pt]
			$P_6(i, i', j) = \text{true}$ & $\iff$ & $\text{neither $e_{i, j}$ nor $e_{i', j}$ is null} \, \wedge \, e_{i, j}\le_{w_jw_{j+1}}e_{i', j}$\\[.1em] \hline
			\addlinespace[1pt]
			$P_7(i, j, j') = \text{true}$ & $\iff$ & $\text{neither $s'_{j, i}$ nor $e'_{j', i}$ is null} \, \wedge \, s'_{j, i}\le_{v_iv_{i+1}}e'_{j', i}$\\[.1em] \hline
			\addlinespace[1pt]
			$P_8(i, i', j) = \text{true}$ & $\iff$ & $\text{neither $s_{i, j}$ nor $e_{i', j}$ is null} \, \wedge \, s_{i, j}\le_{w_jw_{j+1}}e_{i', j}$ \\[.1em]
			\hline
	\end{tabular}}
	\caption{There are $O(\alpha^2\theta)$ predicates over all $i,i' \in B_k$ and $j,j' \in B'_l$.}
	\label{tb:pred}
\end{table}

\subsubsection{Signature data structures}
\label{sec:sig}

Fix $B_k$.  
%Let $\theta = \alpha^\eta$ for some small constant $\theta \in (0,1)$ to be specified.  
Consider a sequence $(w_{\ell},\ldots,w_{\ell+\theta})$. 
The signatures of rows and columns of the box specified by $B_k$ and $[\ell,\ell+\theta]$ are completely determined by the truth values of $P_1,\ldots,P_8$ over all $i,i' \in B_k$ and $j,j \in [\ell,\ell+\theta]$.  By treating $\delta$ and the coordinates of $v_{a_k},\ldots,v_{a_{k+1}}$ as coefficients and the coordinates of $w_\ell,\ldots,w_{\ell+\theta}$ as $d(\theta+1)$ variables, we can formulate a set $\+P_k$ of $O(\alpha^2\theta)$ polynomials of constant degrees and sizes in these $d(\theta+1)$ variables that implement $P_1,\ldots,P_8$ over all $i,i' \in B_k$ and $j,j' \in B'_l$.  Each sign condition vector for $\+P_k$ determines the signature of a box that involves $B_k$.  These polynomials are specified in Appendix~\ref{app:predicate}.  By Theorem~\ref{thm:arr}, there are at most $\alpha^{O(\theta)}$ sign condition vectors of $\+P_k$, and we can compute them all in $\alpha^{O(\theta)}$ time.\footnote{It is customary to write $O(\alpha^{O(\theta)})$.  To simplify the exposition, when we have a term $t^{O(h)}$ inside the big-Oh such that $t$ is an increasing function in $m$ or $n$ and $h \geq 1$, we can increase the hidden constant in the exponent $O(h)$ to absorb the big-Oh constant.  We will do without the big-Oh this way.}  We assign unique indices from the range $[\alpha^{O(\theta)}]$ to these sign condition vectors.  For each sign condition vector $\nu$, we compute the corresponding box signature and store it in the array entry {\sc Signa}$[k,\mathit{id}]$, where $\mathit{id}$ is the index of $\nu$.  Repeating the above for all $k \in [(n-1)/\alpha]$ produces the array {\sc Signa} in $n\alpha^{O(\theta)}$~time.

\begin{lemma}
	\label{lem:sig}
	We can compute in $n\alpha^{O(\theta)}$ time the array {\sc Signa}$[1\,..\,(n-1)/\alpha, 1\,..\,\alpha^{O(\theta)}]$.  For every $k \in [(n-1)/\alpha]$, the subarray {\sc Signa}$[k]$ stores the signatures of all boxes that involve $B_k$.
\end{lemma}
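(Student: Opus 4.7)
The plan is to carry out the construction already sketched before the lemma. Fix a block $B_k = [a_k, a_{k+1}]$. Treating $\delta$ and the coordinates of $v_{a_k},\ldots,v_{a_{k+1}}$ as constants and the coordinates of a hypothetical sequence $w_\ell,\ldots,w_{\ell+\theta}$ as $d(\theta+1)$ variables, I would form the set $\+P_k$ of $O(\alpha^2\theta)$ polynomials whose sign patterns encode the truth values of $P_1,\ldots,P_8$ over all $i,i' \in B_k$ and $j,j' \in [\ell,\ell+\theta]$; the explicit formulas are deferred to Appendix~\ref{app:predicate}. Each predicate is either a Euclidean comparison (for $P_1, P_2$) or an ordering of two intersection endpoints along an oriented segment (for $P_3,\ldots,P_8$), which translate into polynomials of constant degree and constant size in the $d(\theta+1)$ variables, so $\+P_k$ satisfies the hypotheses of Theorem~\ref{thm:arr}.

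Next I would invoke Theorem~\ref{thm:arr} with $s = O(\alpha^2\theta)$ and $\omega = d(\theta+1) = O(\theta)$. This yields a set $Q$ of at most $s^\omega \cdot O(1)^\omega = \alpha^{O(\theta)}$ points, together with their sign condition vectors, computed in $s^{\omega+1}\cdot O(1)^{O(\omega)} = \alpha^{O(\theta)}$ time, covering every non-empty cell of $\mathscr{A}(\+P_k)$. By the paragraph preceding the lemma, the sign condition vector at any point of $\real^{d(\theta+1)}$ completely determines the signatures of the $\alpha$ rows and $\theta$ columns of any box $(B_k, B'_l)$ whose $w_{b_l},\ldots,w_{b_{l+1}}$ realize that sign vector. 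I assign each distinct realized sign vector a unique index in $[\alpha^{O(\theta)}]$; for each representative point in $Q$ I substitute its coordinates for $w_\ell,\ldots,w_{\ell+\theta}$, compute the corresponding box signature in $O(\alpha\theta\log\alpha)$ time by sorting $\+I_j$ and $\+I'_i$ for $j \in [\ell,\ell+\theta-1]$ and $i \in B_k$, and store the result in {\sc Signa}$[k,\mathit{id}]$. The total cost for a single $B_k$ remains $\alpha^{O(\theta)}$, and summing over the $(n-1)/\alpha$ choices of $k$ gives the claimed running time $n\alpha^{O(\theta)}$.

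The correctness of the subsequent lookup (used in later sections) is immediate once one observes that for any actual box $(B_k, B'_l)$ the tuple $(w_{b_l},\ldots,w_{b_{l+1}})$ lies in exactly one cell of $\mathscr{A}(\+P_k)$, and that cell is represented by a point in $Q$ with the same sign vector, hence the same box signature, hence the same index $\mathit{id}$.

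The main obstacle I expect is the polynomial encoding itself: the predicates $P_3,\ldots,P_8$ refer to points $s_{i,j}, e_{i,j}, s'_{j,i}, e'_{j,i}$ that may be null, and equalities like $x \le_{pq} y$ must be written without first solving for these intersection points. I would handle nullness by including, for each pair $(i,j)$, the discriminant-type polynomial $\|v_i - w_j\|^2 + \|v_i - w_{j+1}\|^2$-style expression whose sign decides whether $\+B_{v_i} \cap w_jw_{j+1}$ is empty, so that nullness becomes part of the sign vector. The ordering predicates can then be expressed, after clearing denominators, as signs of polynomials in the dot products between the vectors $w_{j+1}-w_j$ and the various $v_i - w_j$, all of which have constant degree in the variables of $\+P_k$. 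Once the bookkeeping in Appendix~\ref{app:predicate} is verified, the rest of the argument is a direct application of Theorem~\ref{thm:arr}.
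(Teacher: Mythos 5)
Your construction matches the paper's proof of Lemma~\ref{lem:sig}: fix $B_k$, treat the coordinates of $w_\ell,\ldots,w_{\ell+\theta}$ as $d(\theta+1)$ variables, form $\+P_k$ from $P_1,\ldots,P_8$, apply Theorem~\ref{thm:arr} to enumerate all sign condition vectors in $\alpha^{O(\theta)}$ time, extract a box signature per realized sign vector, and sum over the $(n-1)/\alpha$ choices of $k$. The only slip is in your speculative closing paragraph: the expression $\|v_i-w_j\|^2+\|v_i-w_{j+1}\|^2$ does not decide whether $\+B_{v_i}\cap w_jw_{j+1}=\emptyset$; that test is precisely the predicate $P_2(i,j)$, which is already in your predicate list and is realized in Appendix~\ref{app:predicate} by comparing $d(v_i,w_jw_{j+1})^2$ against $\delta^2$ via $f_{2,1}^{i,j},\ldots,f_{2,5}^{i,j}$, so no separate discriminant polynomial is needed.
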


\subsubsection{Reachability propagation and indexing}
\label{sec:structures}

We want to construct a table {\sc Map} so that for any index \emph{key} that gives the signature of a box and its input encoding, {\sc Map} can be accessed in $O(1)$ time to return the output encoding for \emph{key}.  The box signatures can be obtained by going through {\sc Signa}.  Since the encoding of a reachability interval is an element of $[2\alpha+3] \times \{0,1\} \times [0,\alpha+1]$, there are $O(\alpha^2)$ possibilities.  There are $\alpha + \theta$ reachability intervals, so there are 
$O(\alpha^2)^{\alpha+\theta} = \alpha^{O(\alpha)}$ possibilities which can be enumerated in $\alpha^{O(\alpha)}$ time.  An index of {\sc Map} consists of $k \in [(n-1)/\alpha]$ and $\mathit{id} \in [\alpha^{O(\theta)}]$ that specify a box signature in {\sc Signa} as well as $O(\alpha\log\alpha)$ bits that represent an input encoding.  Such an index can be stored in $\log n + O(\alpha\log\alpha)$ bits which can fit into one word.  As a result, {\sc Map} can be accessed in $O(1)$ time.  Every entry of {\sc Map} can be computed in $O(\alpha\theta)$ time by Lemma~\ref{lem: encoding}.

\begin{lemma}
	\label{lem:map}
	We can compute {\sc Map} in $n\alpha^{O(\alpha)}$ time and space.
\end{lemma}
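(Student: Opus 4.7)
The plan is a direct enumeration: iterate over every possible key, invoke Lemma~\ref{lem: encoding} to compute the corresponding output encoding, and store the result. The bulk of the argument is a counting estimate showing that both the number of entries and the per-entry work fit within the claimed $n\alpha^{O(\alpha)}$ budget.

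First I would enumerate the key space. A key is a triple $(k, \mathit{id}, \eta)$ where $k \in [(n-1)/\alpha]$ identifies a block of $\tau$, $\mathit{id} \in [\alpha^{O(\theta)}]$ selects a box signature from the subarray ${\sc Signa}[k]$ produced by Lemma~\ref{lem:sig}, and $\eta$ is an input encoding. By the discussion preceding the lemma, $\eta$ consists of $\alpha+\theta$ reachability-interval encodings, each of which is a 3-tuple in $[2\alpha+3]\times\{0,1\}\times[0,\alpha+1]$, giving $O(\alpha^2)$ options per tuple. Therefore the total number of triples is
\[
\frac{n-1}{\alpha}\cdot\alpha^{O(\theta)}\cdot\bigl(O(\alpha^2)\bigr)^{\alpha+\theta}
\;=\;\frac{n-1}{\alpha}\cdot\alpha^{O(\alpha)}\;=\;n\,\alpha^{O(\alpha)},
\]
using $\theta=\alpha^{\mu}<\alpha$ so that the signature factor is absorbed by the input-encoding factor.

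Next I would argue that each entry is computable in $\alpha^{O(1)}$ time and storable in $O(1)$ words. The index $(k,\mathit{id},\eta)$ fits in $\log n + O(\alpha\log\alpha) = O(\log n)$ bits (since $\alpha = O(\log m/\log\log m)$), hence in a single word, so ${\sc Map}$ can be addressed by table lookup in $O(1)$ time. For each key, retrieve the box signature from ${\sc Signa}[k,\mathit{id}]$ in $O(1)$ time, unpack $\eta$ into the $\alpha+\theta$ individual 3-tuples, and then apply Lemma~\ref{lem: encoding} to produce the output encoding in $O(\alpha\theta)$ time; pack the result into a single word and write it into ${\sc Map}$. The per-key cost $O(\alpha\theta) = \alpha^{O(1)}$ is dominated by the counting factor $\alpha^{O(\alpha)}$, so the running time and the space are both $n\,\alpha^{O(\alpha)}$.

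The main obstacle here is bookkeeping rather than mathematics: one must check that $\theta=\alpha^{\mu}<\alpha$ really lets the $\alpha^{O(\theta)}$ signature index factor be absorbed into $\alpha^{O(\alpha)}$, and that the key fits in a word so that the table ${\sc Map}$ supports $O(1)$-time lookup under the computational model stated in the introduction. No non-trivial algorithmic ingredient beyond Lemmas~\ref{lem: encoding} and~\ref{lem:sig} is needed, since each entry is computed independently and the only requirement is that the enumeration procedure visits every realizable $(k,\mathit{id},\eta)$; entries whose input encoding $\eta$ is not actually realizable during dynamic programming still receive a well-defined output encoding, but that is harmless because such entries will never be queried.
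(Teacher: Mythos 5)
Your proof is correct and follows essentially the same route as the paper: enumerate the key space $(k,\mathit{id},\eta)$, bound its size by $n\alpha^{O(\alpha)}$ using the $O(\alpha^2)$ possibilities per reachability-interval encoding and the $\alpha^{O(\theta)}$ signatures per block, and fill each entry in $O(\alpha\theta)=\alpha^{O(1)}$ time via Lemma~\ref{lem: encoding}, noting that the key fits in one word so lookup is $O(1)$. The only small slip is that reading the box signature from {\sc Signa}$[k,\mathit{id}]$ takes $O(\alpha+\theta)$ time (it consists of $\alpha+\theta$ integers), not $O(1)$, but this is absorbed by the $O(\alpha\theta)$ per-entry cost and does not affect the bound.
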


We want to construct an array {\sc Index}$[1\,..\,(n-1)/\alpha, 1\,..\,(m-1)/\theta]$ so that {\sc Signa}$[k,\text{\sc Index}[k,l]]$ stores the signature of the box $(B_k,B'_l)$.  We linearize the zero sets of the polynomials in $\+P_k$ to a set $\+H_k$ of hyperplanes.  There are $d(\theta+1)$ variables in $\+P_k$.  By Lemma~\ref{lem:linear}, the linearization gives rise to $O(\theta^L)$ variables for some constant $L > 1$.  By Theorem~\ref{thm:locate}, we can construct a point location data structure for $\mathscr{A}(\+H_k)$ such that a query can be answered in $O(\theta^{4L}\log \alpha)$ time.  For each $l \in [(m-1)/\theta]$, we use the coordinates of $w_{b_l},\ldots,w_{b_{l+1}}$ to compute the values of the variables for $\+H_k$ and then query the point location data structure in $O(\theta^{4L}\log\alpha)$ time. This gives the index $\mathit{id}$ for the sign condition vector of $\+P_k$ induced by $(B_k,B'_l)$. We set {\sc Index}$[k,l] = \mathit{id}$. The above construction takes $\alpha^{O(\theta^L\log\theta)}$ expected time by Theorem~\ref{thm:locate}. Recall that $\theta = \alpha^\mu$. We choose a constant $\mu < 1/L$ so that $\theta^L\log\theta < \alpha$.  If $\alpha = c\log m/\log\log m$ for a small enough constant $c$, the time to construct the point location data structures is $n\alpha^{O(\theta^L\log\theta)} = n\alpha^{O(\alpha)} = O(nm^\eps)$ for some $\eps \in (0,1)$, which is dominated by the point location cost for the entries of {\sc Index}.

\begin{lemma}
	\label{lem:index}
	If $\theta^L\log \theta < \alpha$, we can construct {\sc Index} in $O(mn\theta^{4L-1}\log\alpha/\alpha)$ expected time and space.
\end{lemma}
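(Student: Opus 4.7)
The plan is to organize the construction as a loop over $k \in [(n-1)/\alpha]$ and to bound time and space inside each iteration before summing up. The discussion preceding the lemma already sets up the main ingredients; what remains is to package them carefully and verify the parameter choices.

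For a fixed $k$, I would carry out two tasks. First, invoke Lemma~\ref{lem:linear} on the $O(\alpha^2\theta)$ polynomials of $\mathcal{P}_k$ to obtain a set $\mathcal{H}_k$ of $\alpha^{O(1)}$ hyperplanes in $\omega = O(\theta^L)$ dimensions, and feed $\mathcal{H}_k$ into the point-location construction of Theorem~\ref{thm:locate}. Second, for each $l \in [(m-1)/\theta]$, evaluate the linearized coordinates of $w_{b_l},\ldots,w_{b_{l+1}}$, perform a single point location to obtain the sign-condition index $\mathit{id}$ of the box $(B_k,B'_l)$, and write $\textsc{Index}[k,l] = \mathit{id}$. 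After processing all $l$ for this $k$, I would discard the DS before advancing to $k{+}1$ so that its storage can be reused.

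For the running time, the DS construction costs $s^{\omega+\eps} = \alpha^{O(\theta^L)}$ expected time per $k$, and each of the $(m-1)/\theta$ queries costs $O(\omega^3\log s) = O(\theta^{3L}\log\alpha)$. Summing over $k$ gives $n\alpha^{O(\theta^L)}/\alpha$ for construction and $O(mn\theta^{3L-1}\log\alpha/\alpha)$ for queries. The main technical point is to show that the construction term is absorbed by the query term. Using $\theta^L\log\theta < \alpha$, $\theta = \alpha^\mu$ with $\mu < 1/L$, and $\alpha = c\log m/\log\log m$ for small enough $c$, I would bound $\alpha^{O(\theta^L)} = 2^{O(\theta^L\log\alpha)} = O(m^\eps)$ for an $\eps < 1$ that shrinks with $c$. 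Combined with $m = \Omega(n^\eps)$, the construction contribution $O(nm^\eps/\alpha)$ is dominated by the query cost, giving the claimed expected time.

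For space, the table $\textsc{Index}$ holds $(n-1)/\alpha \times (m-1)/\theta$ entries, contributing $O(mn/(\alpha\theta))$ words. The per-$k$ DS occupies $O(s^{2\omega\log\omega + O(\omega)}) = \alpha^{O(\theta^L\log\theta)}$ space by Theorem~\ref{thm:locate}, which under the hypothesis becomes $\alpha^{O(\alpha)} = O(m^\eps)$. Processing the blocks sequentially and discarding each DS immediately ensures only one DS is live at any time, so total space is $O(mn/(\alpha\theta) + m^\eps) = O(mn/(\alpha\theta))$. The main obstacle is the tight parameter juggling: $\mu$ and $c$ must be chosen jointly so that the linearization dimension $O(\theta^L)$ stays below $\alpha/\log\theta$ and so that $\alpha^{O(\alpha)}$ remains subpolynomial in $m$; once these constants are pinned down, the rest is a routine combination of Lemma~\ref{lem:linear} and Theorem~\ref{thm:locate}.
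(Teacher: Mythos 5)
Your proposal is correct and matches the paper's own argument essentially step for step: loop over $k$, linearize $\mathcal{P}_k$ via Lemma~\ref{lem:linear}, build the point-location structure of Theorem~\ref{thm:locate} over the resulting $O(\theta^L)$-dimensional arrangement, and answer one query per $l$ to fill $\textsc{Index}[k,l]$, with the same time and space bookkeeping (your explicit observation that each per-$k$ structure can be discarded before advancing is a clean way to justify the space bound, though the bound also holds if the structures are retained since each occupies only $\alpha^{O(\alpha)} = m^{o(1)}$ space).
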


\subsubsection{$\pmb{B_k}$-coder and $\pmb{B'_l}$-coder} 
\label{sec:coder}

When we reach the box $(B_k,B'_l)$ during dynamic programming, we have read $\mathrm{code}_{B_{k-1}}(R_{a_k}[b_l,b_{l+1}-1])$ and $\mathrm{code}_{B'_{l-1}}(R'_{b_l}[a_k,a_{k+1}-1])$ from {\sc Map}.  To read the output encoding of the box $(B_k,B'_l)$ from {\sc Map}, we must first produce its input encoding $\mathrm{code}_{B_k}(R_{a_k}[b_l,b_{l+1}-1])$ and $\mathrm{code}_{B'_l}(R'_{b_l}[a_k,a_{k+1}-1])$.  The $B_k$-coder and $B'_l$-coder serve this purpose.

Assuming that we have inductively computed $\ell_{a_{k-1},j}$, it is easy to unpack $\mathrm{code}_{B_{k-1}}(R_{a_k}[b_l,b_{l+1}-1])$ and $\mathrm{code}_{B'_{l-1}}(R'_{b_l}[a_k,a_{k+1}-1])$ to $R_{a_k}[b_l,b_{l+1}-1]$ and $R'_{b_l}[a_k,a_{k+1}-1]$ in $O(\alpha\log\alpha)$ time, respectively.  This gives us the coordinates of $\ell_{a_k,j}$ and $r_{a_k,j}$ for $j \in [b_l,b_{l+1}-1]$ and those of $\ell'_{b_l,i}$ and $r'_{b_l,i}$ for $i \in [a_k,a_{k+1}-1]$.  We describe the $B_k$-coder.  The $B'_l$-coder is built similarly.  Given any $R_{a_k}[j]$, the $B_k$-coder is to return a 3-tuple $(\pi_{a_k,j},\beta_{a_k,j},\gamma_{a_k,j})$ as $\mathrm{code}_{B_k}(R_{a_k}[j])$.  Let $(\pi_{a_{k-1},j}, \beta_{a_{k-1},j}, \gamma_{a_{k-1},j})$ be $\mathrm{code}_{B_{k-1}}(R_{a_k}[j])$.   

If $R_{a_k}[j] = \text{null}$, set $\pi_{a_k,j} = 2\alpha+3$ and we are done.  Suppose not.  If $\gamma_{a_{k-1},j} \in [\alpha+1]$, then $\ell_{a_k,j} = s_{a_{k-1}+\gamma_{a_{k-1},j}-1,j}$. 
If $\gamma_{a_{k-1},j} = 0$, then $\ell_{a_k,j} = \ell_{a_{k-1},j}$.  We have inductively computed $\ell_{a_{k-1},j}$.  As a result, we know the coordinates of $\ell_{a_k,j}$ in both cases, and it suffices to locate $\ell_{a_k,j}$ among the sorted order of $s_{a_k,j}, e_{a_k,j}, \ldots, s_{a_{k+1},j}, e_{a_{k+1},j}$ along $w_jw_{j+1}$ to determine $(\pi_{a_k,j}, \beta_{a_k,j}, \gamma_{a_k,j})$. We cannot afford to maintain the sorted sequence for all choices of $w_jw_{j+1}$ explicitly.  Instead, we formulate $O(\alpha)$ polynomials of constant degrees in $3d$ variables to model the positioning of an unknown point among the sorted order of $s_{a_k,j}, e_{a_k,j}, \ldots, s_{a_{k+1},j}, e_{a_{k+1},j}$ along $w_jw_{j+1}$. The $3d$ variables correspond to the coordinates of the unknown point, $w_j$, and $w_{j+1}$.  These polynomials are specified in Appendix~\ref{app:coder}.  We linearize the zero sets of these polynomials to hyperplanes and construct the point location data structure in Theorem~\ref{thm:locate}.  During dynamic programming, we query this data structure with $(\ell_{a_k,j},w_j,w_{j+1})$ to get $\mathrm{code}_{B_k}(R_{a_k}[j])$.  
%We do the same for all $j \in [b_l,b_{l+1}-1]$.

\begin{lemma}
	\label{lem:coder}
	Each $B_k$-coder can be built in $\alpha^{O(1)}$ space and expected time. Given $R_{a_k}[b_l,b_{l+1}-1]$, computing $\mathrm{code}_{B_k}(R_{a_k}[b_l,b_{l+1}-1])$ takes $O(\theta\log\alpha)$ time.  Each $B'_l$-coder can be built in $\theta^{O(1)}$ space and expected time.  Given $R'_{b'_l}[a_k,a_{k+1}-1]$, computing $\mathrm{code}_{B'_l}(R'_{b_l}[a_k,a_{k+1}-1])$ takes $O(\alpha\log\theta)$~time. 
\end{lemma}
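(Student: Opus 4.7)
The plan is to realize each $B_k$-coder as a point-location structure over an arrangement of hyperplanes, built by linearizing a small polynomial system that captures the placement of an unknown point in the sorted order along an edge of $\sigma$. Fix $B_k$. The coder, on input a non-null $R_{a_k}[j]$ given by the coordinates of $\ell_{a_k,j}$, must return a 3-tuple $(\pi_{a_k,j},\beta_{a_k,j},\gamma_{a_k,j})$; we always set $\gamma_{a_k,j}=0$, treating the incoming left endpoint as external to $B_k$. This reduces to locating $\ell_{a_k,j}$ among the $2(\alpha+1)$ points $s_{a_k,j},e_{a_k,j},\ldots,s_{a_{k+1},j},e_{a_{k+1},j}$ along $w_jw_{j+1}$, with null entries sharing rank $2\alpha+3$. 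Treating $\delta$ and $v_{a_k},\ldots,v_{a_{k+1}}$ as constants and the coordinates of the unknown point, $w_j$, and $w_{j+1}$ as $3d=O(1)$ variables, existence tests such as $d(v_i,w_jw_{j+1})\le\delta$ and comparisons such as $s_{i,j}\le_{w_jw_{j+1}}s_{i',j}$ or $e_{i,j}\le_{w_jw_{j+1}}\ell_{a_k,j}$ are expressed by $O(\alpha)$ polynomials of constant degree and constant size.

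Since the number of variables is $O(1)$, Lemma~\ref{lem:linear} linearizes these polynomials into $O(\alpha)$ hyperplanes in $O(1)$ dimensions. I then invoke Theorem~\ref{thm:locate} with $\omega=O(1)$ and $s=O(\alpha)$ to build a point-location structure on the resulting arrangement in $\alpha^{O(1)}$ expected time and $\alpha^{O(1)}$ space, with $O(\log\alpha)$ query time. During construction, for each of the $\alpha^{O(1)}$ cells I precompute, from the sign-condition vector of the original polynomials at any representative point, the pair $(\pi,\beta)$ that the coder should report for every point realizing that sign pattern; this still fits within the $\alpha^{O(1)}$ budget.

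Querying is then straightforward. We already hold the explicit list $R_{a_k}[b_l,b_{l+1}-1]$, obtained by unpacking the previous block's output encoding in $O(\alpha\log\alpha)$ time. For each $j\in[b_l,b_{l+1}-1]$: if $R_{a_k}[j]=\emptyset$, set $\pi_{a_k,j}=2\alpha+3$; otherwise, query the structure with the concatenated vector $(\ell_{a_k,j},w_j,w_{j+1})\in\real^{3d}$ in $O(\log\alpha)$ time, retrieve the precomputed $(\pi_{a_k,j},\beta_{a_k,j})$, and set $\gamma_{a_k,j}=0$. Summing over $\theta$ columns yields $O(\theta\log\alpha)$. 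The $B'_l$-coder is entirely symmetric: $O(\theta)$ polynomials in $3d$ variables yield a structure of $\theta^{O(1)}$ size built in $\theta^{O(1)}$ expected time, and the $\alpha$ queries for $R'_{b_l}[a_k,a_{k+1}-1]$ cost $O(\alpha\log\theta)$ in total. The main subtlety will be handling null entries correctly; the cleanest fix is to include the existence predicates ($P_2$-like for the $B_k$-coder and $P_1$-like for the $B'_l$-coder) among the polynomials so that the null cases collapse onto a common rank inside every realizable cell, making the precomputed $(\pi,\beta)$ table well-defined.
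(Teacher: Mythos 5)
Your construction matches the paper's Section~\ref{sec:coder} and Appendix~\ref{app:coder}: formulate constant-degree polynomials in the $3d$ variables $(\ell_{a_k,j},w_j,w_{j+1})$ that locate $\ell_{a_k,j}$ among $s_{a_k,j},e_{a_k,j},\ldots,s_{a_{k+1},j},e_{a_{k+1},j}$ along $w_jw_{j+1}$, linearize via Lemma~\ref{lem:linear}, build the point-location structure of Theorem~\ref{thm:locate}, and set $\gamma_{a_k,j}=0$ while reading off a precomputed $(\pi,\beta)$ per cell, giving $\alpha^{O(1)}$ build cost and $O(\log\alpha)$ per query. One small inconsistency: you list pairwise comparisons such as $s_{i,j}\le_{w_jw_{j+1}}s_{i',j}$, which would give $O(\alpha^2)$ rather than the $O(\alpha)$ polynomials you claim; the paper's Appendix~\ref{app:coder} uses only $O(\alpha)$ predicates ($P_9$--$P_{13}$, comparing $\ell_{a_k,j}$ to each $s_{i,j},e_{i,j}$ plus existence checks), which already pin down $(\pi,\beta)$ because the ranks of the non-null elements not behind $\ell_{a_k,j}$ form a prefix, making $\pi$ just a count---but your redundant set is harmless since $\alpha^2=\alpha^{O(1)}$ and all stated bounds still go through.
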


\cancel{
	By Lemma~\ref{obs: end-ri}, the encoding depends on $\ell_{i,j}$ being equal to which element in $\{\ell_{a_k,j},s_{a_k,j}, \ldots, s_{a_{k+1},j}\}$.  If $\hat{\pi}_{a_k,j}$

	in the case that $\ell_{i,j} = \ell_{a_k,j}$, it then depends on where $\ell_{a_k,j}$ lies in the sorted order of $s_{a_k,j}, e_{a_k,j}, \ldots, s_{a_{k+1},j}, e_{a_{k+1},j}$ along $w_jw_{j+1}$.

	Next, we linearize the zero sets of the polynomials in $\+P_k$ to become a set $\+H_k$ of equations of hyperplanes in $d^{O(1)}\alpha^3$ dimensions.

	%There are $n/\alpha$ possible values of $k$.  So we can compute in $(n/\alpha) \cdot \alpha^{O(\alpha)} = n\alpha^{O(\alpha)}$  time all possible $n\alpha^{O(\alpha)}$ distinct box signatures.

	\cancel{
		\begin{lemma}~\label{lem: predicates}
			Given the truth values of the predicates for $i, i'\in B_k$ and $j, j'\in B'_l$, it takes $O(\alpha^2\log\alpha)$ time to get the signature of the box specified by $(B_k,B'_l)$. No additional knowledge is needed.
		\end{lemma}
		
		\begin{proof}
			The signature of the box specified by $(B_k, B'_l)$ consists of signatures of rows and columns. For every row $i\in [a_k, a_{k+1}-1]$, its signature is determined by the set $\{s'_{b_l, i}, e'_{b_l, i},\ldots, s'_{b_{l+1}, i}, e'_{b_{l+1}, i}\}$. Based on the truth values of $P_1(i, j)$ for all $j\in [b_l, b_{l+1}]$, we can determine which elements in the set are null in $O(\alpha)$ time. Based on the truth values of $P_3(i, j, j')$, $P_5(i, j, j')$ and $P_7(i, j, j')$ for all $j, j'\in [b_l, b_{l+1}]$, we can sort these points along $v_iv_{i+1}$ in $O(\alpha\log\alpha)$ time. Hence, we can compute the signature of row $i$ based on truth values of these predicates in $O(\alpha\log\alpha)$ time. We can show that the signature of every column $j\in [b_l, b_{l+1}]$ can be computed in $O(\alpha\log\alpha)$ time based on the truth values of $P_2(i, j)$ for all $i\in [a_k, a_{k+1}]$, and the truth values of $P_4(i, i', j)$, $P_6(i, i', j)$ and $P_8(i, i', j)$ for all $i, i'\in [a_k, a_{k+1}]$ in an analogous way. Given that there are $\alpha$ rows and $\alpha$ columns, we can compute the signature of this box in $O(\alpha^2\log\alpha)$ time.
		\end{proof}
		
		We also have a useful corollary for getting the signatures of rows and columns.
		\begin{corollary}\label{cor: row-col}
			Suppose that $i, i+1\in B_k$. Given the truth values of $P_1(i, j)$, $P_3(i, j, j')$, $P_5(i, j, j')$ and $P_7(i, j, j')$ for all $j, j'\in B'_l$, it takes $O(\alpha\log \alpha)$ time to get the signature of row $i$ in the box specified by $B_k$ and $B'_l$. Suppose that $j, j+1\in B'_l$. Given the truth values of $P_2(i, j)$, $P_4(i, i', j)$, $P_6(i, i', j)$ and $P_8(i, i', j)$ for all $i, i'\in B_k$, it takes $O(\alpha\log \alpha)$ time to get the signature of column $j$ in the box specified by $B_k$ and $B'_l$.
		\end{corollary}
		
		We construct a set of polynomials such that their signs encode the truth values of the predicates. We will use these polynomials to form different polynomial systems to design several data structures. The setting of variables in these polynomials changes for different purposes. How to set the variables to cater for our need will be clear later.
		
		\vspace{8pt}
		
	}

	We present the polynomials for implementing the other predicates in the appendix. For $k\in [\frac{n}{\alpha}]$, we describe how to use these polynomials to enumerate distinct signatures for all boxes involving $B_k$. We treat $k$, all vertices of $\tau[v_{a_k}, v_{a_{k+1}}]$, and $\delta$ as fixed. The block $B'_l$ corresponds to the subcurve $\sigma[w_{b_l}, w_{b_{l+1}}]$ whose vertices are treated as variables. Let $\mathcal{P}_k$ be the set of polynomials in variables $(w_{b_l},..., w_{b_{l+1}})$ implementing those predicates with $i, i'\in B_k$ and $j, j'\in B'_l$. $\mathcal{P}_k$ models the box represented by the given $B_k$ and any other unspecified $B'_l$. By Lemma~\ref{lem: predicates}, we get:

	%We can further encode $P_5(i', j', j'')$, $P_6(i', i'', j')$, $P_7(i', j', j'')$ and $P_8(i', i'', j')$ following the same idea as the encoding of $P_3(i', j', j'')$. The details are shown in the appendix.
	
	\begin{corollary}
		\label{cor:sign}
		Given a sign condition vector for $\mathcal{P}_k$, we can compute in $O(\alpha^2\log(\alpha))$ time the signature of a box represented by $B_k$ and any $B'_l$ that is consistent with that sign condition vector.
	\end{corollary}
	
	We are ready to enumerate all distinct signatures among all boxes. For every block $B_k$, the cells of the arrangement $\mathscr{A}(\+P_k)$ capture all possible sign condition vectors for $\+P_k$. $\+P_k$ is a polynomial system in variables $(w_{b_l},\ldots, w_{b_{l+1}})$, and it contains $O(\alpha^3)$ polynomials that have $O(1)$ degree. By Corollary~\ref{cor:sign}, for each cell of $\mathscr{A}(\+P_k)$, the boxes induced by $\tau[v_{a_k}, v_{a_{k+1}}]$ and $(w_{b_l},\ldots, w_{b_{l+1}})$ have the same signature for all points $(w_{b_l},\ldots, w_{b_{l+1}})$ in this cell. By Theorem~\ref{thm:arr}(i), the number of cells in $\mathscr{A}(\+P_k)$ is $O\bigl((\alpha^3)^{d(\alpha+1)}\bigr)=\alpha^{O(d\alpha)}$. Hence, there are $\alpha^{O(d\alpha)}$ distinct signatures for all boxes that are specified by $B_k$ and blocks in $\{B'_1, B'_2,\ldots, B'_{\frac{m}{\alpha}}\}$. We compute these distinct signatures for $B_k$ as follows. By Theorem~\ref{thm:arr}(ii), we compute a set $Q$ of points, as well as the sign condition vectors for $\+P_k$ at these points, in $\alpha^{O(d\alpha)}$ time. The set $Q$ contains at least one point in every cell of $\mathscr{A}(\+P_k)$. By Corollary~\ref{cor:sign}, for each cell, it takes another $O(\alpha^2\log\alpha)$ time to determine the signature.
	
	We then present how to enumerate the encodings of two input reachability interval sequences. For each input reachability interval, there are $O(\alpha)$ distinct values for the encoding of its beginning point, and the entry $a$ has $O(\alpha)$ distinct values. As there are at most $2\alpha$ input reachability intervals, there are $O(\alpha^2)^{2\alpha}=\alpha^{O(\alpha)}$ distinct encodings of the input reachability intervals among all boxes. %We can enumerate all of them.
	
	Since there are $\frac{n}{\alpha}$ blocks $B_k$, and for each $B_k$, there are $\alpha^{O(\alpha)}\cdot \alpha^{O(d\alpha)}=\alpha^{O(d\alpha)}$ distinct input encodings, we need to process $\frac{n}{\alpha}\alpha^{O(d\alpha)}=n\alpha^{O(d\alpha)}$ distinct input encodings. By Lemma~\ref{lem: encoding}, we compute in $O(\alpha^2)$ time for each input encoding the corresponding output encoding. After processing all these input encodings, we store the corresponding output encodings in a table and index them by the input encodings. Including the time for computing all these input encodings, the total time spent on constructing the table is $n\alpha^{O(d\alpha)}$. Fix any value $\varepsilon\in (0,1)$. Set $\alpha=c_1\log m/d\log\log m$ for some constant $c_1$ depending on $\varepsilon$ such that $n\alpha^{O(d\alpha)}=O(nm^{1-\varepsilon})$. We get the following lemma:
	
	\begin{lemma}\label{lem: table}
		For any $\varepsilon\in (0,1)$, we can construct a table in $O(nm^{1-\varepsilon})$ time so that given any possible input encoding of a box, we can retrieve the corresponding output encoding of that box in $O(1)$ time.
	\end{lemma}
	
	\subsubsection{Data structures for generating input encoding}
	\label{sec:ds_gen_input_encode}
	
	We present data structures for generating the signature for a given box efficiently. %According to Corollary~\ref{cor:sign}, for a box that is specified by $B_k$ and $B'_l$, it suffices to perform a point location in $\mathscr{A}(\+P_k)$ using the query point $(w_{b_l},\ldots, w_{b_{l+1}})$ to determine the signature of this box. This suggests applying the point location result in Theorem~\ref{thm:locate}. However, a direct application is inefficient.  Indeed, each polynomial in $\+P_k$ has $O(1)$ degree\footnote{A careful examination shows that all polynomials in $\+P_k$ have degree at most 8.} and involves at most two vertices $w_{j}$ and $w_{j'}$; therefore, if we linearize the polynomials in $\+P_k$ and build a point location data structure in Theorem~\ref{thm:locate}, the total number of variables is $O(d^{O(1)}\alpha^2)$ after linearization, which leads to a query time of $O(d^{O(1)}\alpha^6\log\alpha)$.  This is worse than computing the signature from scratch.
	%We improve the efficiency by generating the signatures of rows and columns separately for each box.  Specifically, 
	For each block $B_k$, we process $\tau[v_{a_k}, v_{a_{k+1}}]$ into a data structure such that for every block $B'_l$ and every pair $j,j+1 \in B'_l$, the data structure can efficiently report the signature of column $j$ in the box specified by $B_k$ and $B'_l$. We also process each $\sigma[w_{b_l}, w_{b_{l+1}}]$ into a data structure such that for every block $B_k$ and every pair $i,i+1 \in B_k$, the data structure can efficiently report the signature of row $i$ in the box specified by $B'_l$ and $B_k$.
	
	We show how to process $\tau[v_{a_k}, v_{a_{k+1}}]$ into the aforementioned data structure. By Corollary~\ref{cor: row-col}, the signature of column $j$ can be determined by the truth values of $P_2(i, j)$, $P_4(i, i', j)$, $P_6(i, i', j)$ and $P_8(i, i', j)$ for $i, i'\in B_k$. Let $\tilde{\+P}_k$ be the set of polynomials that implement these predicates, which treat $\delta$ and the vertices of $\tau[v_{a_k}, v_{a_{k+1}}]$ as fixed, and treat $w_{j}$ and $w_{j+1}$ as variables. A query is specified by giving the coordinates of $w_j$ and $w_{j+1}$, and the query can be answered by performing a point location in $\mathscr{A}(\tilde{\+P}_k)$ using $(w_j,w_{j+1})$. We linearize the polynomials in $\tilde{\+P}_k$. Since each polynomial in $\tilde{\+P}_k$ has $O(1)$ degree, there are at most $d^{O(1)}$ variables after linearization. The arrangement $\mathscr{A}(\tilde{\+P}_k)$ transforms into an arrangement of hyperplanes in higher dimensions. We then build the point location data structure in Theorem~\ref{thm:locate} for this arrangement of hyperplanes.
	
	We process $\sigma[w_{b_l}, w_{b_{l+1}}]$ to handle the query of $v_{i}v_{i+1}$ similarly. We use another set $\tilde{\+P'}_l$ of polynomials that implement the predicates $P_1(i, j)$, $P_3(i, j, j')$, $P_5(i, j, j')$ and $P_7(i, j, j')$ for $j, j'\in B'_l$. All vertices of $\sigma[w_{b_l}, w_{b_{l+1}}]$ and $\delta$ are treated as fixed.  The vertices $v_{i}$ and $v_{i+1}$ are variables. 
	
	The signatures of rows and columns can be represented compactly as follows. For each block $B_k$, there are $\alpha^{O(d)}$ cells in $\mathscr{A}(\tilde{\+P}_k)$ by Theorem~\ref{thm:arr}(i). It means that there are $\alpha^{O(d)}$ distinct signatures of columns for all boxes specified by $B_k$ and blocks in $\{B'_1,\ldots, B'_{\frac{m}{\alpha}}\}$. We compute them in $\alpha^{O(d)}$ time by Theorem~\ref{thm:arr}(ii) and store them in an array. We represent each signature by the index of its entry in this array, which is an integer at most $\alpha^{O(d)}$ that occupies $O(d\log_2 \alpha)$ bits. We call it \emph{the index of the signature}.
	%In this way, we store the columns-signature in $O(d\alpha\log_2 \alpha)$ bits. 
	%Recall that $\alpha=c_1\log m/d\log\log m$. So the columns-signature takes up no more than one word with an appropriate constant $c_1$.  
	By the same approach, the index of the signature of a row can be stored in
	$O(d\log_2 \alpha)$ bits. 
	%no more than one word as well.  
	We treat $d^{O(1)}$ as $O(1)$ in the following lemma.

	%To realize this idea, we linearize the polynomials in $\+P_i$. Specifically, we introduce a new variable for every product of monomials of variables. Since each polynomial in $\+P_i$ has degree at most 8 and involves at most two vertices $w_{j'}$ and $w_{j''}$, the total number of variables is at most $O(d^8\beta^2)$ after linearization. The arrangement $\mathscr{A}(\+P_i)$ transforms into an arrangement of hyperplanes in higher dimensions. We can build the point location data structure in Theorem~\ref{thm:locate} for this arrangement of hyperplanes to solve our problem.
	
	\begin{lemma}~\label{lem: signature}
		\hspace*{10pt}
		\begin{enumerate}[{\em (i)}]
			
			\item For every $B_k$, we can process $\tau[v_{a_k}, v_{a_{k+1}}]$ into a data structure of $\alpha^{O(d)}$ size in $\alpha^{O(d)}$ expected time such that for every block $B'_l$ and every pair $j, j+1 \in B'_l$, the signature of column $j$ in the box specified by $B_k$ and $B'_l$ can be represented by its index that occupies $O(d\log_2\alpha)$ bits and can be reported in $O(\log \alpha)$ time.
			
			\item For every $B'_l$, we can process $\sigma[w_{b_l}, w_{b_{l+1}}]$ into a data structure of $\alpha^{O(d)}$ size in $\alpha^{O(d)}$ expected time such that for every block $B_k$ and every pair $i, i+1\in B_k$, the signature of row $i$ in the box specified by $B_k$ and $B'_l$ can be represented by its index that occupies $O(d\log_2\alpha)$ bits and can be reported in $O(\log \alpha)$ time.
			
		\end{enumerate}
	\end{lemma}

	%We can derive a compact representations for the output encodings following the same idea. Given that there are $O(\alpha^{O(\alpha)})=O(m)$ possible values for all output encodings, we can organize all these values into an array and represent each value by its index in $O(1)$ words.
	
	We also need data structures to efficiently retrieve the encodings of input reachability intervals for a given box. Take a box specified by $B_k$ and $B'_l$. Its input reachability intervals are $R_{a_k}[b_l, b_{l+1}-1]$ and $R'_{b_l}[a_k, a_{k+1}-1]$. For each $i\in [a_k, a_{k+1}-1]$, recall that the encoding of $R'_{b_l}[i]$ is $(\psi(\ell'_{b_l, i}),a)$. Given that the entry $a$ is used to indicate which element in the set $\{\ell'_{b_l,i}, s'_{b_l, i}, s'_{b_l, i},\ldots,s'_{b_{l+1},i}\}$ equals to $\ell'_{b_l,i}$, $a$ always equals to 0 for $R'_{b_l}[i]$.  So it suffices to generate $\psi(\ell'_{b_l,i})$ which occupies $O(\log \alpha)$ bits. By the same argument, it suffices to generate $\psi(\ell_{a_k, j})$ for each input reachability interval $R_{a_k}[j]$.  We follow the idea for generating the signature and reduce the problem to a point location in the arrangement of the zero sets of polynomials. The details can be found in the appendix.

	\begin{lemma}~\label{lem: pt-encoding}
		\hspace*{10pt}
		\begin{enumerate}[{\em (i)}]
			\item For every $k\in [\frac{n}{\alpha}]$, we can process $\tau[v_{a_{k}}, v_{a_{k+1}}]$ into a data structure of $\alpha^{O(1)}$ size in $\alpha^{O(1)}$ expected time such that for any edge $w_{j}w_{j+1}$ and a point $y\in w_{j}w_{j+1}$, the encoding $\psi(y)$ with respect to $\tau[v_{a_{k}}, v_{a_{k+1}}]$ occupies $O(\log_2\alpha)$ bits and can be reported in $O(\log \alpha)$ time. 
			
			\item For every $l\in [\frac{m}{\alpha}]$, we can process $\sigma[w_{b_{l}}, w_{b_{l+1}}]$ into a data structure of $\alpha^{O(1)}$ size in $\alpha^{O(1)}$ expected time such that for any edge $v_{i}v_{i+1}$ and a point $x\in v_{i}v_{i+1}$, the encoding $\psi(x)$ with respect to $\sigma[w_{b_{l}}, w_{a_{l+1}}]$ occupies $O(\log_2\alpha)$ bits and can be reported in $O(\log \alpha)$ time. 
		\end{enumerate}
	\end{lemma}

	%Recall that each reachability interval is encoded using $O(\log \alpha)$ bits.  So a sequence of $\alpha$ reachability intervals for the box is encoded using $O(\alpha\log_2\alpha)$ bits.
	
	%The input encoding of a box consists of the rows-signature, columns-signature, and two sequences of $\alpha$ reachability intervals.  The output encoding of a box consists of two sequences of $\alpha$ reachability intervals.  Therefore, the input and output encodings of a box take up $O(d\alpha\log_2 \alpha)$ bits. Recall that $\alpha=c_1\log m/d\log\log m$. So the input encoding of a box takes up no more than one word with an appropriate constant $c_1$.  So does the output encoding of a box. 
	%and the output encoding of every box can be represented in $O(1)$ words. 
	%We treat $d^{O(1)}$ as $O(1)$ in the following lemma. 
	
}

\subsection{Dynamic programming}
\label{sec:compact-dp}

Initialize $R_1[1,m-1]$ and $R'_1[1,n-1]$ in $O(m+n)$ time.   By Lemma~\ref{lem:coder}, we query the $B_1$-coder and the $B'_1$-coder to obtain $\mathrm{code}_{B_1}(R_1[b_l,b_{l+1}-1])$ for all $l \in [(m-1)/\theta]$ and $\mathrm{code}_{B'_1}(R'_1[a_k,a_{k+1}-1])$ for all $k \in [(n-1)/\alpha]$.  The time needed is $O(m\log\alpha + n\log\theta) = O(n\log\alpha)$.

We now have the input encoding of the box $(B_1,B'_1)$. Concatenate %the binary representation of~1, 
{\sc Signa}$[1,\text{\sc Index}[1,1]]$, and this input encoding to form an index {\em key}.  {\sc Map}$[\mathit{key}]$ stores the output encoding $\mathrm{code}_{B_1}(R_{a_2}[b_1,b_2-1])$ and $\mathrm{code}_{B'_1}(R'_{b_2}[a_1,a_2-1])$.  Unpack $\mathrm{code}_{B'_1}(R'_{b_2}[a_1,a_2-1])$ to $R'_{b_2}[a_1,a_2-1]$ in $O(\alpha\log\alpha)$ time.  Use the $B'_2$-coder to compute $\mathrm{code}_{B'_2}(R'_{b_2}[a_1,a_2-1])$ in $O(\alpha\log\theta)$ time.  We now have the input encoding of the box $(B_1,B'_2)$.  %Use it to form an index with {\sc RowInd}$[1,2]$ and {\sc ColInd}$[1,2]$ as before, and then read from {\sc Map} the output encoding of the box $(B_1,B'_2)$.  
We can then repeat to process $(B_1,B'_l)$ for all $l \in [(m-1)/\theta]$.  The time needed is $O(m\alpha\log\alpha/\theta)$.

We now have $\mathrm{code}_{B_1}(R_{a_2}[b_l,b_{l+1}-1])$ for all $l \in [(m-1)/\theta]$.  Unpack them to $R_{a_2}[b_l,b_{l+1}-1]$ and use the $B_2$-coder to compute $\mathrm{code}_{B_2}(R_{a_2}[b_l,b_{l+1}-1])$ for all $l \in [(m-1)/\theta]$. This takes $O((m/\theta) \cdot \theta\log\alpha) = O(m\log\alpha)$ time.  We proceed as before to process $(B_2,B'_l)$ for all $l \in [(m-1)/\theta]$ in $O(m\alpha\log\alpha/\theta)$ time.  There are $(n-1)/\alpha$ rows, hence a total of $O(mn\log\alpha/\theta)$ time.

\begin{lemma}
	\label{lem:DP}
	After the proprocessing in Section~\ref{sec: preprocessing}, computing $R_n[m-1]$ takes $O(mn\log\alpha/\theta)$ time.
\end{lemma}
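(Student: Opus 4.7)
The plan is to account for the running time along three axes: (i) initial boundary setup and first-box encoding, (ii) per-box processing inside a single row block, and (iii) the hand-off between consecutive row blocks. Summing these and verifying that the per-box term dominates will give the stated bound.

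First I would dispose of the base case: initialize $R_1[1,m-1]$ and $R'_1[1,n-1]$ directly from $\tau$ and $\sigma$ in $O(m+n)$ time, then invoke the $B_1$-coder on each of the $(m-1)/\theta$ contiguous pieces of $R_1[1,m-1]$ and the $B'_1$-coder on each of the $(n-1)/\alpha$ contiguous pieces of $R'_1[1,n-1]$. By Lemma~\ref{lem:coder} these calls cost $O(\theta\log\alpha)$ and $O(\alpha\log\theta)$ each, so the total setup cost is $O(m\log\alpha + n\log\theta) = O(n\log\alpha)$, which is absorbed by the main term.

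Second, for the inner loop I would fix a row block $B_k$ and sweep $l = 1,2,\ldots,(m-1)/\theta$. At box $(B_k,B'_l)$, the input encoding has just been produced; concatenating the binary representations of $k$, {\sc Signa}$[k,\text{\sc Index}[k,l]]$, and the input encoding yields a key of $O(\log n + \alpha\log\alpha) = O(\log n)$ bits that fits in one machine word, so {\sc Map}$[\mathit{key}]$ is read in $O(1)$. The returned output encoding contains $\mathrm{code}_{B_k}(R_{a_{k+1}}[b_l,b_{l+1}-1])$, which I buffer for the row transition, and $\mathrm{code}_{B'_l}(R'_{b_{l+1}}[a_k,a_{k+1}-1])$, which must be re-expressed relative to $B'_{l+1}$: I unpack its $\alpha$ three-tuples in $O(\alpha\log\alpha)$ time and then feed the resulting intervals into the $B'_{l+1}$-coder, which by Lemma~\ref{lem:coder} returns the re-encoding in $O(\alpha\log\theta)$. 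Thus each box costs $O(\alpha\log\alpha)$, giving $O(m\alpha\log\alpha/\theta)$ per row block.

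Third, between rows $k$ and $k+1$, I batch-process the $(m-1)/\theta$ buffered encodings $\mathrm{code}_{B_k}(R_{a_{k+1}}[b_l,b_{l+1}-1])$: unpack each in $O(\theta\log\alpha)$ and apply the $B_{k+1}$-coder once per column block in $O(\theta\log\alpha)$, giving $O(m\log\alpha)$ per transition. Over $(n-1)/\alpha$ rows this amounts to $O(mn\log\alpha/\alpha)$, which is strictly dominated by the sum of the per-row contributions $((n-1)/\alpha)\cdot O(m\alpha\log\alpha/\theta) = O(mn\log\alpha/\theta)$ because $\theta = \alpha^\mu < \alpha$. The final $R_n[m-1]$ is recovered from the last output encoding. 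The only non-routine check is that the key fits in one word, which follows from the $O(\log n)$-bit signature index guaranteed by Lemma~\ref{lem:sig} combined with the one-integer input encoding of Lemma~\ref{lem: encoding}; this is precisely the structural property the preprocessing in Section~\ref{sec: preprocessing} was designed to supply, so no further obstacle remains.
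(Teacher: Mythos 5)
Your proposal is correct and follows essentially the same route as the paper: the same $O(n\log\alpha)$ initialization via the $B_1$- and $B'_1$-coders, the same $O(1)$ {\sc Map} lookup per box followed by an $O(\alpha\log\alpha)$ unpack-and-recode step for the column encoding, and the same $O(m\log\alpha)$ batched row transition, summing to $O(mn\log\alpha/\theta)$. The only cosmetic remark is that the word-sized key should be read as the pair $(k,\text{\sc Index}[k,l])$ identifying the signature (as in Section~\ref{sec:structures}) rather than the signature entry itself, but this matches the paper's own phrasing and does not affect the bound.
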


\cancel{
	
	We calculate all reachability intervals of $v_1$ and $w_1$ according to the boundary conditions in section~\ref{sec: pre}. It takes $O(n+m)$ time. After the initialization, we are ready to process the box at the left bottom corner of the dynamic programming table. This box is specified by $B_1$ and $B'_1$. 
	
	By Lemma~\ref{lem: signature}, we generate the signature of this box in $O(\alpha\log\alpha)$ time.
	%as it has $\alpha$ rows and $\alpha$ columns. 
	By Lemma~\ref{lem: pt-encoding},  we generate the encodings of $R_{1}[1, \alpha]$ and $R'_{1}[1, \alpha]$ in $O(\alpha\log \alpha)$ time.  We combine them to get the input encoding for this box which takes up $O(d\alpha\log_2 \alpha)$ bits. Recall that $\alpha=c_1\log m/d\log\log m$. So the input encoding takes up no more than one word with an appropriate constant $c_1$. We can thus use the input encoding as the index to access the table in Lemma~\ref{lem: table} to retrieve the output encoding of the box in $O(1)$ time.
	
	The output encoding includes $(\psi(\ell_{\alpha+1,j}),a)$ and $(\psi(\ell'_{\alpha+1,i}),a')$ of $R_{\alpha+1}[j]$ and $R'_{\alpha+1}[i]$ for all $i, j \in [\alpha]$, where the value $a$ indicates which point in the set $\{\ell_{1,j}, s_{1,j},\ldots,s_{\alpha+1,j}\}$ is equal to $\ell_{\alpha+1,j}$, and the value $a'$ indicates which point in the set $\{\ell'_{1,i}, s'_{1,i},\ldots,s'_{\alpha+1,i}\}$ is equal to $\ell'_{\alpha+1,i}$. Hence, for all $i, j \in [\alpha]$, we can calculate $R_{\alpha+1}[j]$ and $R'_{\alpha+1}[i]$ in $O(1)$ time. In this way, we calculate the reachability intervals $R_{\alpha+1}[1, \alpha]$ and $R'_{\alpha+1}[1, \alpha]$ in $O(\alpha)$ time using the output encoding. 
	%We use $R_{\alpha+1}[1, \alpha]$ and $R'_{\alpha+1}[1, \alpha]$ to generate input encodings for the other boxes. %This recovering procedure is inevitable since the output encoding of the box specified by $B_1$ and $B'_1$ may be inconsistent with the encodings of $R_{v_{\alpha+1}}[1, \alpha+1]$ and $R_{w_{\alpha+1}}[1, \alpha+1]$ with respect to the other box. 
	The time spent on the box specified by $B_1$ and $B'_1$ is $O(\alpha \log\alpha)$.  We repeat the above procedure for all boxes to get $R_{n}[m-1]$. Given that there are $O(mn/\alpha^2)$ boxes. The total running time is $O(nm\log\alpha/\alpha)$, which is $O(nm(\log\log m)^2/\log m)$ because $\alpha=c_1\log m/d\log\log m$.
	
}

\section{Further improvement}
\label{sec:faster}

Lemmas~\ref{lem:sig}--\ref{lem:DP} imply that the decision algorithm runs in $O(mn\theta^{4L-1}\log\alpha/\alpha + mn\log\alpha/\theta)$ expected time.  In this section, we improve it to $O(mn\log^2\alpha/(\alpha\theta))$ by proposing a second level batch processing to tackle the bottlenecks in Lemmas~\ref{lem:index} and~\ref{lem:DP}.

\subsection{Signatures}

To deal with the bottleneck in Lemma~\ref{lem:index}, we cluster the boxes into $\alpha \times 1$ disjoint tiles which we call \emph{racks}.  A rack is an $\alpha \times 1$ array of boxes or an $\alpha^2 \times \theta$ array of cells of the dynamic programming table.  Each rack is specified as $(\hat{B}_\kappa,B'_l)$, where $\hat{B}_\kappa$ consists of $B_k,B_{k+1},\ldots,B_{k+\alpha-1}$ for some $k$.

Consider the predicates $P_1,\ldots,P_8$ over all $i,i' \in \hat{B}_\kappa$ and $j, j' \in [\ell,\ell+\theta]$ for some $\ell$. There are $O(\alpha^4\theta)$ of them, and their truth values determine the signatures of the boxes inside the rack $(\hat{B}_\kappa, [\ell,\ell+\theta])$.  By treating the coordinates of $w_{\ell},\ldots,w_{\ell+\theta}$ as variables, we can formulate a set $\+Q_k$ of $O(\alpha^4\theta)$ polynomials of constant degrees and sizes that implement the predicates.

By Theorem~\ref{thm:arr}, we compute in $\alpha^{O(\theta)}$ time all sign condition vectors of $\+Q_\kappa$ and the box signatures associated with them.  Assign unique indices from the range $[\alpha^{O(\theta)}]$ to these sign condition vectors.  Take a sign condition vector $\nu$ of $\+Q_\kappa$.  For each $i \in [\alpha]$, the vector $\nu$ induces one box signature $\xi_{i}$ that involves $B_{k+i-1}$, and we set {\sc Signa}$[k+i-1,\mathit{id}] = \xi_{i}$, where $\mathit{id}$ is the index of $\nu$.
%We also compute an array $A_\nu[1\,..\,\alpha]$ for $\nu$ such that $A_\nu[i]$ stores $\mathit{id}_\nu$.

For each $\kappa$, we fill the subarrays {\sc Index}$[k+i-1]$ for all $i\in[\alpha]$ as follows. Linearize $\+Q_\kappa$ to a set $\+H_\kappa$ of hyperplanes.  There are $O(\theta^L)$ variables in $\+H_\kappa$ for some constant $L > 1$.  Construct the point location data structure for $\mathscr{A}(\+H_\kappa)$ in $\alpha^{O(\theta^L\log\theta)}$ expected time.  For each $l \in [(m-1)/\theta]$, use the coordinates of $w_{b_l},\ldots,w_{b_{l+1}}$ to query this data structure in $O(\theta^{4L}\log \alpha)$ time.  It returns a sign condition vector $\nu$.  Let $\mathit{id}$ be the index of $\nu$.  We set {\sc Index}$[k+i-1,l] = \mathit{id}$ for all $i \in [\alpha]$.   
%We spend $O(1)$ time for every box which sums to $O(mn/(\alpha\theta))$.  

We spend $n\alpha^{O(\theta)}$ time to compute the sign conditions vectors of $\+Q_\kappa$ for all $\kappa$ and the array {\sc Signa}.  We spend $n\alpha^{O(\theta^L\log\theta)} \leq n\alpha^{O(\alpha)}$ expected time to obtain the point location data structures for $\mathscr{A}(\+H_\kappa)$ for all $\kappa$.  We also spend $O(\theta^{4L}\log\alpha)$ to perform the point location for every rack; the total over all racks is $O(mn\theta^{4L}\log\alpha/(\alpha^2\theta)) = O(mn\log\alpha/(\alpha\theta))$, provided that $\theta^{4L} = \alpha^{4\mu L} < \alpha$, i.e., choose $\mu < 1/(4L)$.  Note that $n\alpha^{O(\theta)} \leq n\alpha^{O(\alpha)}$ is dominated by $O(mn\log\alpha/(\alpha\theta))$.

\begin{lemma}
	\label{lem:index2}
	If $\theta^{4L} < \alpha$, we can compute {\sc Signa} and {\sc Index} in $O(mn\log\alpha/(\alpha\theta))$ expected time.
\end{lemma}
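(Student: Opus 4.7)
The plan is to carry out the two-level batching strategy described just above: cluster $\alpha$ consecutive vertical boxes into an $\alpha\times 1$ rack so that a single arrangement-enumeration and a single point-location query per rack determine the signature indices of all $\alpha$ boxes in the rack, saving a factor of $\alpha$ over Lemma~\ref{lem:index}.

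First, for each $\kappa$, I would formalize the polynomial set $\+Q_\kappa$ as outlined: treat $\delta$ and the coordinates of the $\tau$-vertices spanned by $\hat{B}_\kappa$ as coefficients and the coordinates of $w_\ell,\ldots,w_{\ell+\theta}$ as $d(\theta+1)$ variables, and encode $P_1,\ldots,P_8$ over all $i,i'\in\hat{B}_\kappa$ and $j,j'\in[\ell,\ell+\theta]$ using $O(\alpha^4\theta)$ constant-degree, constant-size polynomials (same template as Section~\ref{sec:sig} and Appendix~\ref{app:predicate}, with one more row index). Each sign condition vector for $\+Q_\kappa$ fixes an $\alpha$-tuple of box signatures. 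By Theorem~\ref{thm:arr} I enumerate all sign condition vectors together with one representative point per cell in $\alpha^{O(\theta)}$ time, assign each a unique id in $[\alpha^{O(\theta)}]$, and write the $i$-th induced signature into {\sc Signa}$[k+i-1,\mathit{id}]$ for every $i\in[\alpha]$. Summed over $\kappa$, this populates {\sc Signa} in $n\alpha^{O(\theta)}$ time.

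To populate {\sc Index}, I linearize $\+Q_\kappa$ via Lemma~\ref{lem:linear} to a hyperplane set $\+H_\kappa$ in $O(\theta^L)$ dimensions for some constant $L>1$, and build the point-location data structure of Theorem~\ref{thm:locate}, which costs $\alpha^{O(\theta^L\log\theta)}$ expected time per $\kappa$. For each $l\in[(m-1)/\theta]$, I evaluate the linearization at $(w_{b_l},\ldots,w_{b_{l+1}})$ and issue one point-location query, which takes $O(\theta^{3L}\log\alpha)$ time and returns an id; I then copy this id into {\sc Index}$[k+i-1,l]$ for all $i\in[\alpha]$. Since there are $\Theta(mn/(\alpha^2\theta))$ racks, the total query cost is $O(mn\theta^{3L}\log\alpha/(\alpha^2\theta))$, which equals $O(mn\log\alpha/(\alpha\theta))$ exactly under the hypothesis $\theta^{3L}<\alpha$ (i.e.\ choosing $\mu<1/(3L)$).

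The main thing to verify is that the preprocessing terms are absorbed. With $\alpha=c\log m/\log\log m$ for a small enough $c$ and with $\mu<1/(3L)$, we have $\theta^L\log\theta<\alpha$, so both $n\alpha^{O(\theta)}$ and $n\alpha^{O(\theta^L\log\theta)}$ are bounded by $n\alpha^{O(\alpha)}=O(nm^{\eps'})$ for some $\eps'<1$; this is $o(mn)$ under $m=\Omega(n^\eps)$ and is dominated by $O(mn\log\alpha/(\alpha\theta))$. The only delicate point is this joint balancing of constants, namely choosing $\mu$ and the constant in $\alpha$ so that the linearization-dimension overhead in the exponent stays below $\alpha$ simultaneously with $\theta^{3L}<\alpha$; once these are set, adding the contributions of {\sc Signa} and {\sc Index} gives the stated bound.
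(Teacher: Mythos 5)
Your proposal is correct and follows essentially the same route as the paper: the same $\alpha\times 1$ racks, the same set $\+Q_\kappa$ of $O(\alpha^4\theta)$ polynomials enumerated via Theorem~\ref{thm:arr} to fill {\sc Signa}, and the same linearization plus point-location query per rack (amortized over the $\alpha$ boxes in the rack) to fill {\sc Index}, with identical accounting of the preprocessing and query costs.
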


\cancel{
	We can similarly compute two arrays {\sc SrowSig}$[1\,..\,m/\alpha,1\,..\,\alpha^{O(1)}$ and {\sc SrowInd}$[1\,..\,n/\alpha,1\,..\,m/\alpha]$ that are analogous {\sc RowSig} and {\sc RowInd}.  In all, given a superbox, we can obtain the signature of any row or column in $O(1)$ time.

	Consider an $\alpha^2\times\alpha^2$ superbox that is specified by $\bigcup_{k'\in [k, k+\alpha-1]}B_{k'}$ and $\bigcup_{l'\in [l, l+\alpha-1]}B'_{l'}$. We define the signature of this superbox and the encodings of its input reachability intervals, i.e., the input encoding of this superbox, in the same way as the input encoding of an $\alpha\times\alpha$ box. The signature of row $i$ in the superbox specifies the order of endpoints of the intersections $\+B_{w_j}\cap v_iv_{i+1}$ for all $j\in [b_l, b_{l+\alpha}]$ along $v_iv_{i+1}$. The encoding of $R'_{b_l}[i]$ is generated by encoding its beginning $\ell'_{b_l,i}$ with respect to these endpoints as well. We can extend Lemma~\ref{lem: signature} and~\ref{lem: pt-encoding} to work for superboxes. That is, we obtain data structures for $\tau[v_{a_k}, v_{a_{k+\alpha}}]$ and $\sigma[w_{b_l}, w_{b_{l+\alpha}}]$ as stated in Lemma~\ref{lem: signature} and~\ref{lem: pt-encoding}. As discussed in Section~\ref{sec:ds_gen_input_encode}, all these data structures are point location structures for the arrangement of zero-sets of polynomials. Each related polynomial still has $O(d)$ variables and $O(1)$ degree. So linearizing it gives $O(d^{O(1)})=O(1)$ variables. We need to increase the number of polynomials from $\alpha^2$ for handling a box to $\alpha^4$ for handling a superbox. By Theorem~\ref{thm:locate}, the expected preprocessing time and the size of each data structure are still $\alpha^{O(1)}$, and the query time of these data structures is still $O(\log \alpha)$. We present how to generate the input encodings for all boxes inside a superbox faster based on the input encoding of this superbox.
	
	We first present how to generate the signatures for all boxes inside the superbox faster. Take 
	a block $B_{k'}$ and suppose that $i,i+1\in B_{k'}$. For any unspecified $B'_{l'}$ such that the box represented by $B_{k'}$ and $B'_{l'}$ is inside this superbox, given that the signature of row $i$ in the superbox specifies the order of endpoints of $\+B_{w_j}\cap v_iv_{i+1}$ for all $j\in [b_l,b_{l+\alpha}]$ along $v_iv_{i+1}$, the order of all points in $\+I=\{s'_{b_{l'}, i}, e'_{b_{l'}, i}, \ldots, s'_{b_{l'+1}, i}, e'_{b_{l'+1}, i}\}$ can be determined by the signature of row $i$ in the superbox. It enables us to generate the signature of row $i$ in the box specified by $B_{k'}$ and $B'_{l'}$.

	%For each column $j$ in this box, we can compute the signature of column $j$ in the box based on the signature of column $j$ in the superbox in the same way.
	
	%It is useful to observe that the signature of each box specified by $B_{i'}$ and $B'_{j'}$ for some $i'\in [i, i+\alpha-1]$ and $j'\in [j, j+\alpha-1]$ can be constructed easily based on te signature of the superbox. 
	
	%Consider the box specified by $B_{i'}$ and $B'_{j'}$. For each row $i''\in B_{i'}$, we can compute the signature of row $i''$ in this box based on the signature of row $i''$ in the superbox in the following way. Let the signature of row $i''$ in the superbox be $\+S$. We can partition $\+S$ into $\alpha$ fields evenly such that the $j'$-th field is $\+S_{j'}=\{\bar{\phi}(s'_{b_{j'}, i''}), \bar{\phi}(e'_{b_{j'}, i''}), \ldots, \bar{\phi}(s'_{b_{j'+1}, i''}), \bar{\phi}(e'_{b_{j'+1}, i''})\}$. We compute the signature $S=\{\phi(s'_{b_{j'}, i''}), \phi(e'_{b_{j'}, i''}), \ldots, \phi(s'_{b_{j'+1}, i''}), \phi(e'_{b_{j'+1}, i''})\}$ of row $i''$ in the box specified by $B_{i'}$ and $B'_{j'}$ based on the $j'$-th field. At first, we can know which entry in $D=\{s'_{b_{j'}, i''}, e'_{b_{j'}, i''}, \ldots, s'_{b_{j'+1}, i''}, e'_{b_{j'+1}, i''}\}$ is null according to $\+S_{j'}$. For the entries in $D$ that are not null, given that $\phi(p)\le \phi(q)$ if and only if $\bar{\phi}(p)\le \bar{\phi}(q)$ for all $p, q\in D$, we can determine the ranks for them based on $\+S_{j'}$ as well. 
	
	Based on the above observation, we generate the signature of row $i$ of the superbox for all $i\in [a_{k'},a_{k'+1}-1]$ and then extract the rows-signature of all boxes specified by $B_{k'}$ and blocks in $\{B'_l,\ldots,B'_{l+\alpha-1}\}$ inside the superbox.
	%and columns of the superbox and then extract the signatures of all boxes inside this superbox. %We can extend Lemma~\ref{lem: signature} to work for superboxes.  That is, we obtain data structures for $\tau[v_{a_k}, v_{a_{k+\alpha}}]$ and $\sigma[w_{b_l}, w_{b_{l+\alpha}}]$ that can return the signatures of a row and a column of a superbox.  As discussed in Section~\ref{sec:ds_gen_input_encode} before Lemma~\ref{lem: signature}, the data structure is a point location structure for the arrangement of zero-sets of polynomials.  Each polynomial still has $O(d)$ variables and $O(1)$ degree.  So linearizing it gives $O(d^{O(1)}) = O(1)$ variables. The number of polynomials increases from $\alpha$ for handling a box to $\alpha^2$ for handling a superbox.  By Theorem~\ref{thm:locate}, the expected preprocessing time and the size of each data structure are still $\alpha^{O(1)}$, and the query time to obtain the signature of a row or column of a superbox is still $O(\log \alpha)$.
	To facilitate the extraction of rows-signatures for all these boxes, we need the point location data structure to return a refined output described as follows.  Consider a subcurve $\sigma[w_{b_l}, w_{b_{l+\alpha}}]$ and indices $i,i+1\in B_{k'}$. Take the output signature of the row $i$ of the superbox. We extract from this output signature a list of $\alpha$ entries such that for each $l' \in [\alpha]$, the $l'$-th entry is the signature of the row $i$ in the box specified by $B_{k'}$ and $B'_{l+l'-1}$.  
	%The collection of these arrays for all possible signatures of the rows and columns form a table of $\alpha^{O(d)}$ entries.  
	Since $\alpha = c_1\log_2 m/\log_2\log_2 m$, for an appropriate value of $c_1$, the whole list can be stored in one word.  The word is partitioned into $\alpha$ fields, each representing one entry and taking up $O(\log \alpha)$ bits.  The point location data structure returns such a word as the refined output.  We query this point location data structure with $(v_i,v_{i+1})$ for every $i \in [a_{k'},a_{k'+1}-1]$ in this order.  Each query returns one word, and these words form a sequence $X$ of length $\alpha$.  The total query time is $O(\alpha\log\alpha)$.

	%The output for the query $w_{j}w_{j+1}$ contains $\alpha$ fields such that the $k'$-th field is the signature of column $j$ in the box specified by $B_{k+k'-1}$ and $B'_{l'}$ that takes up $O(d\log_2\alpha)$ bits. 
	
	%The output occupies $O(d\alpha\log_2\alpha)$ bits, which can be stored in one word for an appropriate constant $c_1$ when $\alpha=c_1\log m/\log\log m$. 
	
	%We refine the point location data structure outputs for $\sigma[w_{b_l}, w_{b_{l+\alpha}}]$, every $k' \in [k,k+\alpha-1]$, and every $i,i+1 \in B_{k'}$ in the same way.  We also query this point location data structure with $v_i$ and $v_{i+1}$ for every $i \in [a_k,a_{k+\alpha}-1]$ in this order.  Each query returns one word, and these words form a sequence $Y$ of length $\alpha$.  The total query time is $O(\alpha^2\log\alpha)$.
	
	%{\color{red} After using the data structure for $\tau[v_{a_k},v_{a_k+\alpha}]$ to answer the queries with with $w_jw_{j+1}$ for all $j\in [b_l, b_{l+\alpha}-1]$ and using the data structure for $\sigma[w_{b_l}, w_{b_{l+\alpha}}]$ to answer the queries with $v_iv_{i+1}$ for all $i\in [a_k, a_{k+\alpha}-1]$ in $O(\alpha^2\log\alpha)$ time,} we get two sequences of query results. Let $X$ and $Y$ contain query results returned by dats structures for $\sigma[w_{b_l},w_{b_{l+\alpha}}]$ and $\tau[v_{a_k}, v_{a_{k+\alpha}}]$, respectively. %and a sequence $Y$ of $\alpha^2$ query results. 
	
	For any $l'\in [l, l+\alpha-1]$, we can further get the rows-signature of the box specified by $B_{k'}$ and $B'_{l'}$ as follows. For the row $i$, it corresponds to the $(i-a_k+1)$-th element in $X$, which is the query result for the edge $v_iv_{i+1}$. The $(l'-l+1)$-th field in this element is the signature of row $i$ in this box. Hence, we can extract the $(l'-l+1)$-th fields of the $(i-a_k+1)$-th elements in $X$ for all $i\in B_{k'}$ to form the rows-signature of the box specified by $B_{k'}$ and $B'_{l'}$. However, we still need to extract one field from every element in $X$ to handle a single box, which causes an $O(\alpha)$ processing time for each box. We use the following lemma to transform $X$ to a sequence $X'$ of $\alpha$ words such that the $(l'-l+1)$-th element in $X'$ is the rows-signature of the box specified by $B_{k'}$ and $B'_{l'}$.%For $j\in B'_{l'}$, the $(j-b_l+1)$-th element in $Y$ is the query resut for $w_jw_{j+1}$. The $(k'-k+1)$-th field in this element is the signature of column $j$ in the box specified by $B_{k'}$ and $B'_{l'}$. Hence, we can extract the $(k'-k+1)$-th fields of the $(j-b_l+1)$-th elements in $Y$ for all $j\in B_{l'}$ to form the columns-signature of the box specified by $B_{k'}$ and $B'_{l'}$.
	
	%For every block $B_{k'}$ with $k'\in [k, k+\alpha-1]$, consider a box specified by $B_{k'}$ and $B'_{l'}$ inside the superbox. For $i\in B_{k'}$, the $(i-a_k+1)$-th element in $X$ occupies $O(1)$ words and contains $\alpha$ fields. The $(l'-l+1)$-th field in this element is the signature of row $i$ in the box specified by $B_{k'}$ and $B'_{l'}$. Hence, we can extract the $(l'-l+1)$-th fields of the $(i-a_k+1)$-th elements in $X$ for all $i\in B_{k'}$ to form the row signature of the box specified by $B_{k'}$ and $B'_{l'}$. For every block $B'_{l'}$ with $l'\in [l, l+\alpha-1]$, consider $j''\in B'_{j}$. The $(j''-b_j+1)$-th element in $Y$ occupies $O(1)$ words and contains $\alpha$ fields. The $a$-th field in this element is the signature of column $j''$ in the box specified by $B_{i+a-1}$ and $B'_{j'}$. Hence, we can extract the $a$-th fields of the $a$-th fields of the $(j''-b_j+1)$-th elements in $Y$ for all $j''\in B_{j'}$ to form the column signature of the box specified by $B_{i+a-1}$ and $B'_{j'}$.

	%The sequences $X$ and $Y$ contain the signatures of the rows and columns of the superbox.

	%The following lemma provides us with an subroutine to construct the rows-signature and the columns-signatures for all boxes inside the superbox quickly.

	By Lemma~\ref{lem: transposition}, it takes another $O(\alpha\log\alpha)$ to get $X'$ based on $X$. Extracting the $(l'-l+1)$-th element in $X'$ to get the rows-signature of the box specified by $B_{k'}$ and $B'_{l'}$ takes $O(1)$ time for every $l'\in [l, l+\alpha-1]$. The average time spent on each box is $O(\log\alpha)$. We use the same approach to construct the rows-signature for the other boxes inside the superbox. The construction of the columns-signature for all boxes inside the superbox follows the same idea as well. This establishes that we can generate the signatures for all boxes inside the superbox in $O(\log\alpha)$ amortized time.
}

\subsection{Input encoding generation}

The bottleneck in Lemma~\ref{lem:DP} arises from unpacking the output encoding of the box $(B_k,B'_l)$ and calling the $B_{k+1}$-coder and the $B'_{l+1}$-coder.  %The total time needed is $O(\alpha\log\alpha)$ per box and hence a total of $O(mn\log\alpha/\theta)$ time over all $O(mn/(\alpha\theta))$ boxes.
We cluster the boxes into $\alpha \times \alpha^2/\theta$ tiles which we call \emph{superboxes}.  A superbox is an $\alpha \times \alpha^2/\theta$ array of boxes or an $\alpha^2 \times \alpha^2$ array of cells of the dynamic programming table.  So there are $\Theta(mn/\alpha^4)$ superboxes.  We specify each superbox as $(\hat{B}_\kappa,\hat{B}'_\lambda)$ where $\hat{B}_\kappa$ consists of $B_k, B_{k+1}, \ldots, B_{k+\alpha-1}$ for some appropriate $k$, and $\hat{B}_\lambda$ consists of $B'_l,B'_{l+1},\ldots, B'_{l+\alpha^2/\theta-1}$ for some appropriate $l$.
%The need for unpacking and calling a coder is due to the fact that $R_{a_{k+1}}[b_l,b_{l+1}-1]$ is encoded with respect to $B_k$ in the output encoding of $(B_k,B'_l)$, and we require $R_{a_{k+1}}[b_l,b_{l+1}-1]$ to be encoded with respect to $B_{k+1}$ in the input encoding of $(B_{k+1},B'_l)$.  

Our solution is to derive the input and output encodings of all boxes within the superbox $(\hat{B}_\kappa,\hat{B}'_\lambda)$ with respect to $\hat{B}_\kappa$ and $\hat{B}'_\lambda$.  No unpacking and recoding is needed within a superbox.  We only need to perform such unpacking and recoding when the dynamic programming process proceeds from $(\hat{B}_\kappa,\hat{B}'_\lambda)$ to $(\hat{B}_{\kappa+1},\hat{B}'_\lambda)$ and $(\hat{B}_\kappa,\hat{B}'_{\lambda+1})$.  
%In this case, the unpacking and recoding takes $O(\alpha^2\log\alpha)$ time.  Since there are $O(mn/\alpha^4)$ superboxes, the total time is $O(mn\log\alpha/\alpha^2)$.

What does it mean to encode a reachability interval $R_i[j]$ with respect to $\hat{B}_\kappa$?  The encoding is also a 3-tupe $(\hat{\pi}_{i,j}, \hat{\beta}_{i,j}, \hat{\gamma}_{i,j})$ as in the case of boxes.  The difference is that $\hat{\pi}_{i,j}$ and $\hat{\gamma}_{i,j}$ come from bigger domains.  Specifically, $(\hat{\pi}_{i,j}, \hat{\beta}_{i,j}, \hat{\gamma}_{i,j}) \in [2\alpha^2+3] \times \{0,1\} \times [0,\alpha^2+1]$.  
%The meaning is this 3-tuple is analogous to the case of encoding for a box.  
The element $\hat{\gamma}_{i,j}$ tells us which element of $(\ell_{a_k,j}, s_{a_k,j}, \ldots, s_{a_{k+\alpha},j})$ is equal to $\ell_{i,j}$.  If $\hat{\gamma}_{i,j} \in [\alpha^2+1]$, then $\ell_{i,j} = s_{a_k+\hat{\gamma}_{i,j}-1,j}$.  If $\hat{\gamma}_{i,j} = 0$, then $\ell_{i,j} = \ell_{a_k,j}$, and the $\hat{\pi}_{i,j}$-th element in the sequence $(s_{a_k,j},e_{a_k,j}, \ldots, s_{a_{k+\alpha},j}, e_{a_{k+\alpha},j})$ is the immediate predecessor of $\ell_{i,j}$ in the sorted order of this sequence along $w_jw_{j+1}$.  Also, $\hat{\beta}_{i,j} = 1$ if and only if $\ell_{a_\kappa,j}$ is equal to this immediate predecessor.

As a result, we can construct the $\hat{B}_\kappa$-coder as in the construction of a $B_k$-coder in Section~\ref{sec:coder}.  The only difference is that there are $O(\alpha^2)$ polynomials instead of $O(\alpha)$ polynomials.  The construction time remains to be $\alpha^{O(1)}$.  The query time remains to be $O(\log\alpha)$.

The $\hat{B}'_\lambda$-coder is also organized like the $B'_l$-coder.  The difference is that the $B'_l$-coder involves $\theta$ columns and hence $O(\theta)$ polynomials, but the $\hat{B}'_\lambda$-coder involves $\alpha^2$ and hence $O(\alpha^2)$ polynomials like the $\hat{B}_\kappa$-coder.  So the $\hat{B}'_\lambda$ has the same performance as the $\hat{B}_\kappa$-coder.

\begin{lemma}
	\label{lem:coder2}
	Each $\hat{B}_\kappa$-coder and each $\hat{B}'_\lambda$-coder can be built in $\alpha^{O(1)}$ space and expected time. Given $R_{a_\kappa}[b_\lambda,b_{\lambda+1}-1]$, computing $\mathrm{code}_{\hat{B}_\kappa}(R_{a_\kappa}[b_\lambda,b_{\lambda+1}-1])$ takes $O(\alpha^2\log\alpha)$ time.  The same result holds for computing $\mathrm{code}_{\hat{B}'_\lambda}(R'_{b_\lambda}[a_\kappa,a_{\kappa+1}-1])$ from $R'_{b_\lambda}[a_\kappa,a_{\kappa+1}-1]$.
\end{lemma}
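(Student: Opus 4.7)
The plan is to mirror the construction of the $B_k$-coder and $B'_l$-coder from Section~\ref{sec:coder}, scaling the parameters appropriately. The only substantive change is the number of reference points each coder must locate an unknown point against, which in turn changes the count of polynomials in the defining system; the dimension of the ambient space for point location stays constant.

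I would first make precise what $\mathrm{code}_{\hat{B}_\kappa}(R_{a_\kappa}[j])$ requires. As already spelled out in the excerpt, the 3-tuple $(\hat{\pi}_{i,j},\hat{\beta}_{i,j},\hat{\gamma}_{i,j})$ either names $\ell_{i,j}$ as an element of $(s_{a_\kappa,j},\dots,s_{a_{\kappa+\alpha},j})$ (when $\hat{\gamma}_{i,j}\in[\alpha^2+1]$), or else records the immediate predecessor of $\ell_{i,j}$ in the order of $(s_{a_\kappa,j},e_{a_\kappa,j},\dots,s_{a_{\kappa+\alpha},j},e_{a_{\kappa+\alpha},j})$ along $w_jw_{j+1}$ together with an equality flag. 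Thus, given the Euclidean coordinates of $\ell_{a_\kappa,j}$ (which the caller of the coder possesses inductively), the coder must compare this point against $2(\alpha^2+1)$ reference points on the line through $w_j$ and $w_{j+1}$.

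I would then encode these comparisons by exactly the family of polynomials used in Appendix~\ref{app:coder}, with the three-point variable set (the unknown point, $w_j$, $w_{j+1}$) giving $3d=O(1)$ variables, with constant degree, and with one polynomial per reference point — yielding $O(\alpha^2)$ polynomials. Linearize their zero sets by Lemma~\ref{lem:linear} to a set of $O(\alpha^2)$ hyperplanes in $O(1)$ dimensions, and build the Ezra~et~al.\ point-location data structure of Theorem~\ref{thm:locate}. Since the dimension is a constant, the size and expected construction time are $(\alpha^2)^{O(1)}=\alpha^{O(1)}$, and each query takes $O(\log\alpha)$ time. The $\hat{B}'_\lambda$-coder is built identically by swapping the roles of $\tau$ and $\sigma$: it compares an unknown point on an edge $v_iv_{i+1}$ against the $2(\alpha^2+1)$ reference points $(s'_{b_\lambda,i},e'_{b_\lambda,i},\ldots,s'_{b_{\lambda+1},i},e'_{b_{\lambda+1},i})$, giving $O(\alpha^2)$ polynomials, constantly many variables, and the same asymptotic performance.

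For the query phase, recall that $R_{a_\kappa}[b_\lambda,b_{\lambda+1}-1]$ consists of $b_{\lambda+1}-b_\lambda=\alpha^2$ reachability intervals, each of which the caller has already unpacked to explicit Euclidean coordinates. Feeding each $\ell_{a_\kappa,j}$ into the coder takes $O(\log\alpha)$ time, producing the 3-tuple $(\hat{\pi}_{a_\kappa,j},\hat{\beta}_{a_\kappa,j},\hat{\gamma}_{a_\kappa,j})$; multiplying by $\alpha^2$ queries yields $O(\alpha^2\log\alpha)$ overall, matching the bound. The same accounting, with $a_{\kappa+1}-a_\kappa=\alpha^2$ intervals on the $\tau$-side, gives the bound for the $\hat{B}'_\lambda$-coder. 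No step of the proof is particularly hard; the only thing to verify carefully is that the polynomial count grows only to $O(\alpha^2)$ (since exactly one polynomial is needed per reference point, and the number of reference points is $O(\alpha^2)$) while the variable count stays at $3d=O(1)$, so that Theorem~\ref{thm:locate} still gives a polynomial-in-$\alpha$ construction with $O(\log\alpha)$ queries.
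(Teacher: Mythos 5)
Your proposal is correct and mirrors the paper's own argument: the paper likewise obtains the $\hat{B}_\kappa$-coder and $\hat{B}'_\lambda$-coder by rerunning the $B_k$-coder construction of Section~\ref{sec:coder} with the reference-point count (and hence polynomial count) scaled from $O(\alpha)$ or $O(\theta)$ up to $O(\alpha^2)$, while the variable count stays at $3d=O(1)$, so Lemma~\ref{lem:linear} and Theorem~\ref{thm:locate} give $\alpha^{O(1)}$ construction and $O(\log\alpha)$ query time. Your accounting of $\alpha^2$ reachability intervals per superbox side then yields the stated $O(\alpha^2\log\alpha)$ bound, exactly as in the paper.
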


When proceeding from $(\hat{B}_\kappa,\hat{B}'_\lambda)$ to $(\hat{B}_{\kappa+1},\hat{B}'_\lambda)$, we unpack $\mathrm{code}_{\hat{B}_\kappa}(R_{a_{\kappa+1}}[b_\lambda-b_{\lambda+1}-1])$ to $R_{a_{\kappa+1}}[b_\lambda-b_{\lambda+1}-1]$ in $O(\alpha^2\log\alpha)$ time.  We compute $\mathrm{code}_{\hat{B}_{\kappa+1}}(R_{a_\kappa}[b_\lambda,b_{\lambda+1}-1])$ in $O(\alpha^2\log\alpha)$ time by Lemma~\ref{lem:coder2}.  %Hence, we produce $\mathrm{code}_{\hat{B}_{\kappa+1}}(R_{a_{\kappa+1}}[b_l,b_{l+\alpha}-1])$ in $O(\alpha^2\log\alpha)$.  
The same can be said for proceeding from $(\hat{B}_\kappa,\hat{B}'_\lambda)$ to $(\hat{B}_\kappa,\hat{B}'_{\lambda+1})$.  
%Since we spend $O(\alpha^2\log\alpha)$ time per superbox, we obtain the following result.
The total over all $O(mn/\alpha^4)$ superboxes is $O(mn\log\alpha/\alpha^2)$.

\begin{lemma}
	\label{lem:DP2}
	Suppose that {\sc Signa}, {\sc Index}, and the $\hat{B}_\kappa$-coder and $\hat{B}'_\lambda$-coder are available for all $\kappa \in [(n-1)/\alpha^2]$ and $\lambda \in [(m-1)/\alpha^2]$.  We can compute $R_n[m-1]$ in $O(mn\log\alpha/\alpha^2)$ time.
\end{lemma}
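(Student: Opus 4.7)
The plan is to drive the dynamic programming at the granularity of superboxes, visiting the $\Theta(mn/\alpha^4)$ superboxes $(\hat{B}_\kappa,\hat{B}'_\lambda)$ in row-major order. The point of the hat-encoding is that every box $(B_k,B'_l)$ inside a fixed superbox refers to the same reference sequences $(s_{a_\kappa,j},e_{a_\kappa,j},\ldots,s_{a_{\kappa+\alpha},j},e_{a_{\kappa+\alpha},j})$ along each column-edge $w_jw_{j+1}$ and the analogous sequence along each row-edge $v_iv_{i+1}$; consequently the hat-encoded output of any inner box already matches the hat-encoded input expected by its right or top neighbor inside the same superbox, so no unpacking and no $\hat{B}_\kappa$- or $\hat{B}'_\lambda$-coder call is needed until we cross a superbox boundary.

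First I would initialize $R_1[1,m-1]$ and $R'_1[1,n-1]$ in $O(m+n)$ time and, using Lemma~\ref{lem:coder2}, produce $\mathrm{code}_{\hat{B}_1}(R_1[b_\lambda,b_{\lambda+\alpha^2/\theta}-1])$ and $\mathrm{code}_{\hat{B}'_1}(R'_1[a_\kappa,a_{\kappa+\alpha}-1])$ for all $\kappa$ and $\lambda$, seeding the hat-encoded inputs along the bottom and left boundaries in $O((m+n)\log\alpha)$ time.  Then I sweep the superboxes in row-major order.  When the sweep reaches $(\hat{B}_\kappa,\hat{B}'_\lambda)$, the hat-encoded inputs $\mathrm{code}_{\hat{B}_\kappa}(R_{a_\kappa}[b_\lambda,b_{\lambda+\alpha^2/\theta}-1])$ and $\mathrm{code}_{\hat{B}'_\lambda}(R'_{b_\lambda}[a_\kappa,a_{\kappa+\alpha}-1])$ are already in hand.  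I process the $\alpha \cdot \alpha^2/\theta$ inner boxes row by row; each box $(B_k,B'_l)$ fetches its signature via {\sc Signa}$[k,\text{\sc Index}[k,l]]$ and its output from {\sc Map} using a single word-sized key formed from that signature index together with the current hat-encoded input, and the returned hat-encoded output slots directly into the hat-encoded input of the next inner box without any unpacking or recoding.

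At the superbox boundary I unpack the hat-encoded reachability intervals along its top ($\alpha^2$ intervals) and right ($\alpha^2$ intervals) boundaries into explicit endpoints in $O(\alpha^2\log\alpha)$ time, then invoke the $\hat{B}_{\kappa+1}$- and $\hat{B}'_{\lambda+1}$-coders of Lemma~\ref{lem:coder2} in another $O(\alpha^2\log\alpha)$ time to produce the hat-encoded inputs needed by $(\hat{B}_{\kappa+1},\hat{B}'_\lambda)$ and $(\hat{B}_\kappa,\hat{B}'_{\lambda+1})$.  Summing this $O(\alpha^2\log\alpha)$ charge over the $\Theta(mn/\alpha^4)$ superboxes yields $O(mn\log\alpha/\alpha^2)$ for the between-superbox unpacking and recoding, and under the parameter choice this dominates the within-superbox $O(1)$-per-box table lookups.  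Finally, $R_n[m-1]$ is read off from the top-right hat-encoded output of the last superbox after one final $O(\alpha^2)$ unpack.

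The main obstacle is that {\sc Map}, as constructed in Section~\ref{sec:structures}, is indexed by box-level rather than hat-level input encodings.  The remedy is that once the signature of box $(B_k,B'_l)$ is known, the hat-encoding $(\hat{\pi}_{i,j},\hat{\beta}_{i,j},\hat{\gamma}_{i,j})$ records the position of $\ell_{i,j}$ in the superbox-wide sorted sequence, and the signature encodes precisely the permutation that maps this global position to the corresponding box-local position, so the projection from hat-encoding to box-encoding for the relevant $\alpha+\theta$ intervals of one box takes $O(1)$ per interval; alternatively, one can extend {\sc Map} during preprocessing to accept hat-encoded keys of $O(\alpha\log\alpha)$ bits, which still fit in a word and do not change its asymptotic size.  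Either way the per-box lookup stays $O(1)$, keeping within-superbox work subdominant and yielding the claimed $O(mn\log\alpha/\alpha^2)$ bound.
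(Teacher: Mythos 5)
Your first three paragraphs reproduce the paper's own argument for this lemma: sweep the $\Theta(mn/\alpha^4)$ superboxes, keep every inner box's input and output encoded with respect to $\hat{B}_\kappa$ and $\hat{B}'_\lambda$ so that nothing is recoded between adjacent boxes inside a superbox, and pay $O(\alpha^2\log\alpha)$ only when crossing a superbox boundary (unpack the $\alpha^2$ intervals on the outgoing side and re-encode them with the coders of Lemma~\ref{lem:coder2}), for a total of $O(mn\log\alpha/\alpha^2)$. Up to that point the accounting and the structure are exactly the paper's.

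The problem is in your final paragraph, where you try to reconcile the hat-encodings with \textsc{Map}; the issue you flag is real, but neither remedy holds up. Your first remedy asserts that the box signature ``encodes precisely the permutation that maps the global position to the box-local position.'' It does not: the signature of column $j$ of the box $(B_k,B'_l)$ stores only the ranks of the $2(\alpha+1)$ points $s_{i,j},e_{i,j}$ with $i\in B_k$ within $\+I_j$ itself, and carries no information about where a point $s_{i',j}$ with $i'\in\hat{B}_\kappa\setminus B_k$ (which is what $\hat{\gamma}_{a_k,j}$ may reference), or the superbox-wide predecessor rank $\hat{\pi}_{a_k,j}$, falls relative to those box-local points; so the claimed $O(1)$ projection cannot be read off the signature. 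Moreover, even granting it, $(\alpha+\theta)$ intervals per box over the $mn/(\alpha\theta)$ boxes costs $O(mn/\theta)=O(mn/\alpha^{\mu})$, which exceeds $O(mn\log\alpha/\alpha^2)$ --- this is essentially the recoding cost the superboxes are meant to eliminate, so the remedy reintroduces the bottleneck. Your second remedy fails for a different reason: the obstacle is not the key length but the computability of the table entry. Propagating a hat-encoded $R_i[j]$ through a box requires comparing $\ell_{i,j}$ --- possibly a point outside $B_k$'s local sequence, known only by a superbox-wide rank --- against $s_{i+1,j}$ and $e_{i+1,j}$, and that requires the interleaving of the $2(\alpha^2+1)$ superbox-level points with the box-local ones, which a box-level signature index together with a word-sized key does not determine. (For the column direction this interleaving is determined by the rack-level sign condition vector of $\+Q_\kappa$ that \textsc{Index} points to, so a repair is plausible there; but the racks are $\alpha\times 1$ and no analogous structure is built for the row direction.) So the step you correctly identify as the crux of the lemma remains unjustified as written.
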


Putting Lemmas~\ref{lem:map}, \ref{lem:index2}, \ref{lem:coder2}, and \ref{lem:DP2} together gives our main results.  Recall that $\theta = \alpha^\mu$ and $\alpha = \Theta(\log m/\log\log m)$.

\begin{theorem}
	Let $\sigma$ and $\tau$ be two polygonal curves of sizes $m$ and $n$ in $\mathbb{R}^d$ for some fixed $d$.  Assume that $m \leq n$.  There exists a constant $\mu \in (0,1)$ such that {\em (i)}~one can decide whether $d_F(\sigma,\tau) \leq \delta$ in $O(mn(\log\log m)^{2+\mu}/\log^{1+\mu} m)$ expected time, and {\em (ii)}~one can compute $d_F(\sigma,\tau)$ in $O(mn(\log\log m)^{2+\mu}\log n/\log^{1+\mu} m)$ expected time.  Hence, if $m = \Omega(n^\eps)$ for any fixed $\eps \in (0,1]$, one can compute $d_F(\sigma,\tau)$ in $o(mn)$ time.
\end{theorem}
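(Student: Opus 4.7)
The plan is to sum the costs from Lemmas~\ref{lem:map}, \ref{lem:index2}, \ref{lem:coder2}, and~\ref{lem:DP2}, choose $\alpha$ and $\theta$ so that the preconditions of each lemma hold simultaneously, and then invoke Alt and Godau's parametric search (Theorem~6 of~\cite{AG1995}) to lift the decision procedure of part~(i) into the exact algorithm of part~(ii).

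First I would fix the parameters. Take $\alpha = c\log m/\log\log m$ for a sufficiently small constant $c > 0$, and $\theta = \alpha^\mu$ for a constant $\mu \in (0, 1/(3L))$, where $L$ is the constant from Lemma~\ref{lem:linear}. The inequality $\mu < 1/(3L)$ yields $\theta^{3L} < \alpha$, which is exactly the precondition of Lemma~\ref{lem:index2}. For $c$ small enough, $n\alpha^{O(\alpha)} = n \cdot m^{O(c)} = o(mn)$, which absorbs both the {\sc Map} construction of Lemma~\ref{lem:map} and the $\alpha^{O(1)} \cdot O((n+m)/\alpha^2)$ cost of building all coders in Lemma~\ref{lem:coder2}. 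The remaining dominant contributions are $O(mn\log\alpha/(\alpha\theta))$ from Lemma~\ref{lem:index2} and $O(mn\log\alpha/\alpha^2)$ from Lemma~\ref{lem:DP2}; since $\mu < 1$, $\alpha\theta = \alpha^{1+\mu} < \alpha^2$, so the bottleneck is $O(mn\log\alpha/\alpha^{1+\mu})$. Plugging in $\alpha = \Theta(\log m/\log\log m)$ turns $\log\alpha$ into $\Theta(\log\log m)$ and $\alpha^{1+\mu}$ into $\Theta(\log^{1+\mu} m/(\log\log m)^{1+\mu})$, producing the decision bound $O(mn(\log\log m)^{2+\mu}/\log^{1+\mu} m)$ claimed in~(i).

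For~(ii), I would simply multiply the decision time by the $O(\log(mn)) = O(\log n)$ overhead from Alt and Godau's reduction, since $m \leq n$. For the concluding $o(mn)$ assertion, if $m = \Omega(n^\eps)$ then $\log m = \Omega(\log n)$, so $\log n/\log^{1+\mu} m = O(1/\log^\mu n) = o(1)$; this factor dominates the $(\log\log m)^{2+\mu}$ factor and the total becomes $o(mn)$. The main obstacle is really just the coupling of the constants: $\mu$ must be strictly less than $1/(3L)$ (dictated by the linearization constant of Lemma~\ref{lem:linear}), and the constant $c$ in $\alpha = c\log m/\log\log m$ must be chosen small enough that $n\alpha^{O(\alpha)} = o(mn)$; both are routine once $L$ is fixed and the order of choice ($L$, then $\mu$, then $c$) is respected.
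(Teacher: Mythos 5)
Your proposal is correct and takes essentially the same route as the paper: the paper simply says "Putting Lemmas~\ref{lem:map}, \ref{lem:index2}, \ref{lem:coder2}, and \ref{lem:DP2} together gives our main results," and your write-up is the explicit bookkeeping of that combination, with the right constraint $\mu < 1/(3L)$, the right choice $\alpha = c\log m/\log\log m$ with $c$ small enough so that $n\alpha^{O(\alpha)} = o(mn)$, the right identification of $O(mn\log\alpha/(\alpha\theta))$ as the dominant term, and the standard Alt--Godau parametric-search overhead for part~(ii).
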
 

\section{Conclusion}

A natural question is whether one can shave more log factors from the running time.  Our running time is not deterministic because of the expected construction time bound of the point location data structure in Theorem~\ref{thm:locate}.  It may be possible to design a deterministic construction algorithm.

\newpage

\appendix 

\section{Polynomials for predicates}
\label{app:predicate}

We restate the predicates we are going to implement as follows.
\cancel{
	\begin{itemize}
		\item $P_1(i,j)$ returns true if and only if $d(w_{j}, v_{i}v_{i+1})\leq \delta$. 
		\item $P_2(i,j)$ returns true if and only if $d(v_{i},w_{j}w_{j+1}) \leq \delta$. 
		\item $P_3(i,j,j')$ returns true if and only if both $s'_{j, i}$ and $s'_{j', i}$ are not null, and $s'_{j, i}\le_{v_iv_{i+1}}s'_{j', i}$. 
		\item $P_4(i,i',j)$ returns true if and only if both $s_{i, j}$ and $s_{i', j}$ are not null, and $s_{i, j}\le_{w_jw_{j+1}}s_{i', j}$.
		\item $P_5(i, j, j')$ returns true if and only if both $e'_{j, i}$ and $e'_{j', i}$ are not null, and $e'_{j, i}\le_{v_iv_{i+1}}e'_{j', i}$. 
		\item $P_6(i, i', j)$ returns true if and only if both $e_{i, j}$ and $e_{i', j}$ are not null, and $e_{i, j}\le_{w_jw_{j+1}}e_{i', j}$.
		\item $P_7(i, j, j')$ returns true if and only if both $s'_{j, i}$ and $e'_{j', i}$ are not null, and $s'_{j, i}\le_{v_iv_{i+1}}e'_{j', i}$.
		\item $P_8(i, i', j)$ returns true if and only if both $s_{i, j}$ and $e_{i', j}$ are not null, and $s_{i, j}\le_{w_jw_{j+1}}e_{i', j}$.
	\end{itemize}
}

\begin{table}[h!]
	\centerline{\begin{tabular}{lcl}
			\hline
			\addlinespace[1pt]
			$P_1(i,j) = \text{true}$ & $\iff$ &  $d(w_{j}, v_{i}v_{i+1})\leq\delta$ \\[.1em] \hline
			\addlinespace[1pt]
			$P_2(i,j) = \text{true}$ & $\iff$ & $d(v_{i},w_{j}w_{j+1}) \leq \delta$\\[.1em] \hline
			\addlinespace[1pt]
			$P_3(i,j,j') = \text{true}$ & $\iff$ &  $\text{neither $s'_{j, i}$ nor $s'_{j', i}$ is null} \, \wedge \, s'_{j, i}\le_{v_iv_{i+1}}s'_{j', i}$\\[.1em] \hline
			\addlinespace[1pt]
			$P_4(i,i',j) = \text{true}$ & $\iff$ & $\text{neither $s_{i, j}$ nor $s_{i', j}$ is null} \, \wedge \, s_{i, j}\le_{w_jw_{j+1}}s_{i', j}$ \\[.1em] \hline
			\addlinespace[1pt]
			$P_5(i, j, j') = \text{true}$ & $\iff$ & $\text{neither $e'_{j, i}$ nor $e'_{j', i}$ is null} \, \wedge \, e'_{j, i}\le_{v_iv_{i+1}}e'_{j', i}$\\[.1em] \hline
			\addlinespace[1pt]
			$P_6(i, i', j) = \text{true}$ & $\iff$ & $\text{neither $e_{i, j}$ nor $e_{i', j}$ is null} \, \wedge \, e_{i, j}\le_{w_jw_{j+1}}e_{i', j}$\\[.1em] \hline
			\addlinespace[1pt]
			$P_7(i, j, j') = \text{true}$ & $\iff$ & $\text{neither $s'_{j, i}$ nor $e'_{j', i}$ is null} \, \wedge \, s'_{j, i}\le_{v_iv_{i+1}}e'_{j', i}$\\[.1em] \hline
			\addlinespace[1pt]
			$P_8(i, i', j) = \text{true}$ & $\iff$ & $\text{neither $s_{i, j}$ nor $e_{i', j}$ is null} \, \wedge \, s_{i, j}\le_{w_jw_{j+1}}e_{i', j}$ \\[.1em]
			\hline
	\end{tabular}}
	\caption{Predicates for determining the signature of a box.}
	%\label{tb:pred}
\end{table}

We describe how to implement them by polynomials.  We view $\delta$ and the coordinates of $v_i, v_{i+1}, v_{i'}$ as fixed coefficienets and the coordinates of $w_j, w_{j+1}, w_{j'}$ as variables.

\vspace{20pt}

\noindent \pmb{$P_1(i,j)$.}
We present several polynomials to implement $P_1(i,j)$.  We use $\langle \nu, \nu' \rangle$ to  denote the inner product of vectors $\nu$ and $\nu'$. Let $\text{aff}(v_{i}v_{i+1})$ be the oriented line that contains $v_{i}v_{i+1}$ and has the same orientation as $v_iv_{i+1}$. We first write $d(w_{j}, \text{aff}(v_{i}v_{i+1}))^2 \cdot \lVert v_{i}-v_{i+1}\rVert^{2}$ as a polynomial. %of degree 2 in the coordinates of $w_{j}$.
\begin{eqnarray}
	& & d(w_{j}, \text{aff}(v_{i}v_{i+1}))^2 \cdot \lVert v_{i}-v_{i+1}\rVert^{2} \nonumber \\
	& = & \lVert w_{j}- v_{i+1}\rVert^2 \cdot \lVert v_{i}-v_{i+1}\rVert^{2}-\langle w_{j}-v_{i+1},v_{i}-v_{i+1} \rangle^2. \label{eq:expand}
\end{eqnarray}

The first polynomial $f_{1,1}^{i,j}$ compares the distance $d(w_{j},\text{aff}(v_{i}v_{i+1}))$ with $\delta$.  %It has degree 2. 
\begin{align*}
	f_{1,1}^{i,j} & = \bigl(d(w_{j}, \text{aff}(v_{i}v_{i+1}))^2-\delta^2\bigr) \cdot \lVert v_{i}-v_{i+1}\rVert^2 \\
	& = \lVert w_{j}- v_{i+1}\rVert^2 \cdot \lVert v_{i}-v_{i+1}\rVert^{2}-\langle w_{j}-v_{i+1},v_{i}-v_{i+1} \rangle^2 - \delta^2 \cdot \norm{v_{i}-v_{i+1}}^2.
\end{align*}
Since $\lVert v_{i}-v_{i+1}\rVert^2$ is positive, we have $d(w_{j},\text{aff}(v_{i}v_{i+1})) \leq \delta$ if and only if $f_{1,1}^{i,j} \leq 0$.  Therefore, if $f_{1,1}^{i,j} > 0$, then  $P_1(i',j')$ is false.  If $f_{1,1}^{i,j} \leq 0$, we use the following two polynomials to check whether the projection of $w_{j}$ in $\text{aff}(v_{i}v_{i+1})$ lies on $v_{i}v_{i+1}$. 
\[
f_{1,2}^{i,j}= \langle w_{j}-v_{i}, v_{i+1}-v_{i} \rangle, \quad\quad f_{1,3}^{i,j}=\langle w_{j}-v_{i+1}, v_{i}-v_{i+1} \rangle.
\]
The projection of $w_{j}$ in $\text{aff}(v_{i}v_{i+1})$ lies on $v_{i}v_{i+1}$ if and only if $f_{1,2}^{i,j} \geq 0$ and $f_{1,3}^{i,j} \geq 0$.  Hence, if $f_{1,1}^{i,j} \leq 0$, $f_{1,2}^{i,j} \geq 0$, and $f_{1,3}^{i,j} \geq 0$, then $P_1(i,j)$ is true.  When $f_{1,2}^{i,j} < 0$ or $f_{1,3}^{i,j} < 0$, we use the following polynomials to compare $d(w_{j},v_{i})$ and $d(w_{j},v_{i+1})$ with $\delta$.
\[
f_{1,4}^{i,j}=\lVert w_{j}-v_{i}\rVert^2-\delta^2, \quad\quad f_{1,5}^{i,j}=\lVert w_{j}-v_{i+1}\rVert^2-\delta^2.
\]
If $f_{1,1}^{i,j} \leq 0$ and $f_{1,2}^{i,j} < 0$, then $v_{i}$ is the nearest point in $v_{i}v_{i+1}$ to $w_{j}$. So, $P_1(i,j)$ is true if and only if $f_{1,4}^{i,j} \leq 0$.  Similarly, in the case that $f_{1,1}^{i,j} \leq 0$ and $f_{1,3}^{i,j} < 0$, $P_1(i,j)$ is true if and only if $f_{1,5}^{i,j} \leq 0$.

\vspace{20pt}   

\noindent \pmb{$P_2(i,j)$.}  We can define polynomials $f_{2,1}^{i,j}$, $f_{2,2}^{i,j}$, $f_{2,3}^{i,j}$, $f_{2,4}^{i,j}$, and $f_{2,5}^{i,j}$ to implement $P_2(i,j)$ in a way analogous to the definitions of polynomials for $P_1(i,j)$. 

\vspace{8pt}

\noindent \pmb{$P_3(i,j,j')$.}  We first check whether $P_1(i, j)$ and $P_1(i, j')$ are true according to the signs of polynomials designed for $P_1(i, j)$ and $P_1(i, j')$. If either $P_1(i, j)$ or $P_1(i,j')$ is false, then either $s'_{j, i}$ or $s'_{j', i}$ is null. Hence, $P_3(i,j, j')$ is false.   Suppose that both $P_1(i, j)$ and $P_1(i, j')$ are true. Let $\tilde{s}'_{j, i}$ be the start of the oriented segment $\+B_{w_j}\cap \mathrm{aff}(v_{i}v_{i+1})$.  Let $\tilde{s}'_{j', i}$ be the start of the oriented segment $\+B_{w_{j'}}\cap \mathrm{aff}(v_{i}v_{i+1})$. We use several polynomials to check whether $\tilde{s}'_{j, i}$ is not behind $\tilde{s}'_{j', i}$ in the direction of $v_{i+1}-v_i$.

We use the polynomial $f_{3,1}^{i,j,j'}$ below to check whether the order of the projections of $w_{j}$ and $w_{j'}$ in $\mathrm{aff}(v_{i}v_{i+1})$ is consistent with the direction of $v_{i+1}-v_i$.
\[
f_{3,1}^{i,j,j'} = \langle w_{j'}-w_{j}, v_{i+1}-v_{i} \rangle.
\]
Let $z_1$ be the distance between the projection of $w_{j}$ in $\text{aff}(v_{i}v_{i+1})$ and the projection of $w_{j'}$ in $\text{aff}(v_{i}v_{i+1})$. Let $z_2$ be the distance between $\tilde{s}'_{j,i}$ and the projection of $w_{j}$ in $\text{aff}(v_{i}v_{i+1})$.  Let $z_3$ be the distance between $\tilde{s}'_{j', i}$ and the projection of $w_{j'}$ in $\text{aff}(v_{i}v_{i+1})$.

If $f_{3,1}^{i, j, j'}\geq 0$, the projection of $w_j$ is not behind the projection of $w_{j'}$. In this case, $\tilde{s}'_{j, i}$ is not behind  $\tilde{s}'_{j', i}$ if and only if $z_1+z_2\ge z_3$. See Figure.~\ref{fig:case1} for illustration. We write $z_1^2$, $z_2^2$ and $z_3^2$ as follows:
\begin{align*}
	z_1^2 &= \langle w_{j'}-w_{j}, v_{i+1}-v_{i} \rangle^2 \cdot \norm{v_{i+1}-v_{i}}^{-2}, \\
	z_2^2 &=\delta^2 - d(w_{j}, \text{aff}(v_{i}v_{i+1}))^2,\\
	z_3^2 &= \delta^2-d(w_{j'}, \text{aff}(v_{i}v_{i+1}))^2.
\end{align*}

\begin{figure}
	\centering
	\includegraphics[scale=0.7]{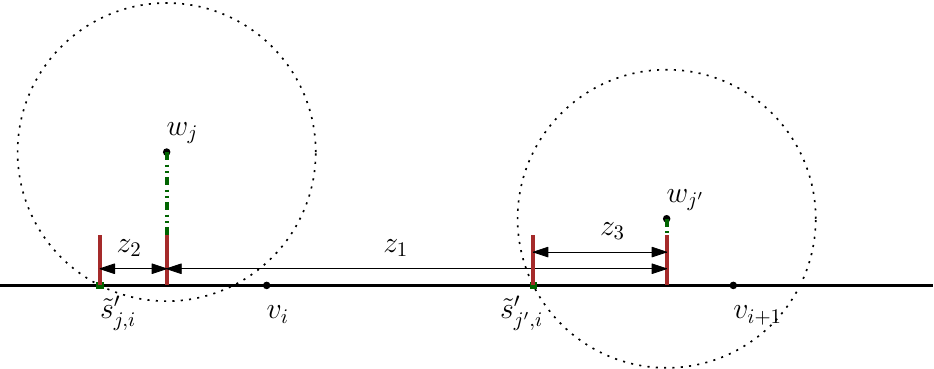}
	\caption{Illustration for the case that the projection of $w_j$ is not behind that of $w_{j'}.$}\label{fig:case1}
\end{figure}

We present several polynomials to check whether $z_1+z_2\geq z_3$. Since $z_1$, $z_2$ and $z_3$ are non-negative, $z_1+z_2\geq z_3$ if and only if $z_1^2+z_2^2+2z_1z_2\geq z_3^2$. By moving $z_1^2+z_2^2$ to the right-hand side, we get $2z_1z_2\geq z_3^2-z_1^2-z_2^2$. We use a polynomial $f_{3,2}^{i, j, j'}$ to represent $(z_3^2-z_1^2-z_2^2)\norm{v_{i+1}-v_{i}}^2$.
\[
f_{3,2}^{i, j, j'}=\bigl( d(w_{j}, \text{aff}(v_{i}v_{i+1}))^2-d(w_{j'}, \text{aff}(v_{i}v_{i+1}))^2 \bigr)\norm{v_{i+1}-v_{i}}^2-\langle w_{j'}-w_{j}, v_{i+1}-v_{i} \rangle^2.
\]
The terms $d(w_{j},\text{aff}(v_{i}v_{i+1}))^2 \norm{v_{i+1}-v_{i}}^2$ and $d(w_{j'}, \text{aff}(v_{i}v_{i+1}))^2 \norm{v_{i+1}-v_{i}}^2$ are expanded as in~\eqref{eq:expand}.  As a shorthand, we do not explicitly show the expansion.

Since $\norm{v_{i+1}-v_{i}}>0$, we have $z_3^3-z_1^2-z_2^2\leq 0$ if and only if $f_{3,2}^{i, j, j'}\leq 0$. Since $2z_1z_2\geq 0$, if $f_{3,2}^{i, j, j'}\leq 0$, we always have $2z_1z_2\ge z_3^2-z_1^2-z_2^2$. It means that $\tilde{s}'_{j, i}$ is not behind $\tilde{s}'_{j', i}$.

Suppose that $f_{3,2}^{i, j, j'}>0$. The inequality $z_1+z_2\geq z_3$ holds if and only if $4z_1^2z_2^2\ge (z_3^2-z_1^2-z_2^2)^2$, i.e. $4z_1^2z_2^2- (z_3^2-z_1^2-z_2^2)^2\geq 0$. We use a polynomial $f_{3,3}^{i, j, j'}$ to denote $(4z_1^2z_2^2- (z_3^2-z_1^2-z_2^2)^2)\norm{v_{i+1}-v_{i}}^4$.
\[
f_{3,3}^{i, j, j'}= 4\langle w_{j'}-w_{j}, v_{i+1}-v_{i} \rangle^2\bigl(\delta^2-d(w_{j},\text{aff}(v_{i}v_{i+1}))^2 \bigr)\norm{v_{i+1}-v_{i}}^2-(f_{3,2}^{i, j, j'})^2
\] 

Since $\norm{v_{i+1}-v_{i}}>0$, $\tilde{s}'_{j, i}$ is not behind $\tilde{s}'_{j', i}$ if and only if $f_{3,3}^{i,j,j'}\geq 0$.

We then consider the case that $f_{3,1}^{i,j,j'} < 0$, i.e., the projection of $w_j$ is behind the projection of $w_{j'}$. In this case, $\tilde{s}'_{j, i}$ is not behind $\tilde{s}'_{j', i}$ if and only if $z_2>z_1+z_3$.  We can define polynomials $f_{3,4}^{i, j, j'}$ and $f_{3,5}^{i, j, j'}$ to check whether $z_2>z_1+z_3$ in the same way as how we check $z_1+z_2\ge z_3$. We finish defining the polynomials to determine whether $\tilde{s}'_{j, i}$ is not behind $\tilde{s}'_{j', i}$ when $s'_{j, i}$ and $s'_{j', i}$ are not null.

We describe how to determine whether $s'_{j, i}=s'_{j', i}$ or $s'_{j, i}\le_{v_iv_{i+1}}s'_{j', i}$. Given that both $s'_{j, i}$ and $s'_{j', i}$ are not null, if $d(v_{i}, w_{j})> \delta$, then $s'_{j, i} = \tilde{s}'_{j, i}$; otherwise, $s'_{j, i} = v_{i}$. Similarly, if $d(v_i,w_{j'}) > \delta$, then $s'_{j',i} = \tilde{s}'_{j',i}$; otherwise, $s'_{j',i} = v_i$.  These decisions can be made using the following polynomials.
\[
f_{3,6}^{i, j}=\norm{v_{i}-w_{j}}^2-\delta^2,\quad\quad
f_{3,7}^{i, j'}=\norm{v_{i}-w_{j'}}^2-\delta^2.
\]
%The value of $s'_{j', i}$ can be determined by whether $d(v_{i}, w_{j'})$ is larger than $\delta$ in the same way. 
Using these decisions, the comparison between $s'_{j, i}$ and $s'_{j', i}$ falls into the following four cases. If $s'_{j, i}=\tilde{s}'_{j, i}$ and $s'_{j', i}=\tilde{s}'_{j', i}$, we are done.
%it boils down to the comparison between $\tilde{s}'_{j, i}$ and $\tilde{s}'_{j', i}$ whose result is indicated by the signs of $f_{3,1}^{i, j, j'}$, $f_{3,2}^{i, j, j'}$, $f_{3,3}^{i, j, j'}$, $f_{3,4}^{i, j, j'}$ and $f_{3,5}^{i, j, j'}$; 
If $s'_{j, i}=v_{i}$ and $s'_{j', i}=\tilde{s}'_{j', i}$, then $P_3(i, j, j')$ is true because $v_{i}$ is not behind any point on $v_{i}v_{i+1}$ along $v_{i}v_{i+1}$.  If $s'_{j, i} \not= v_i$ and $s'_{j', i}=v_{i}$, then $s'_{j,i} = \tilde{s}'_{j, i}$ is an interior point of $v_iv_{i+1}$, which makes $P_3(i, j, j')$ false.  In the remaining case, both $s'_{j, i}$ and $s'_{j', i}$ are $v_{i}$, so $P_3(i, j, j')$ is true.

\vspace{20pt}

\noindent \pmb{$P_4(i,i',j), P_5(i,j,j'), P_6(i,i',j), P_7(i,j,j'), P_8(i,i',j)$.}  We can define polynomials to implement these predicate in a way analogous to the definitions of the polynomials for $P_3(i,j,j')$.

\cancel{
	We define polynomials $f_{4,1}^{i,i'\!\!,j}$, $f_{4,2}^{i,i'\!\!,j}$,  $f_{4,3}^{i,i'\!\!,j}$, $f_{4,4}^{i,i'\!\!,j}$, $f_{4,5}^{i,i'\!\!,j}$, $f_{4,6}^{i,j}$, and $f_{4,7}^{i'\!\!,j}$ to implement $P_4(i,i',j)$ in a way analogous to the implementation of $P_3(i,j,j')$.  
	
	\vspace{20pt}
	
	\noindent \pmb{$P_5(i,j,j')$.} We first check whether $P_1(i, j)$ and $P_1(i, j')$ are true according to the signs of polynomials designed for $P_1(i, j)$ and $P_1(i, j')$. If either $P_1(i, j)$ or $P_1(i,j')$ is false, then either $e'_{j, i}$ or $e'_{j', i}$ is null. Hence, $P_5(i,j, j')$ is false. Suppose that both $P_1(i, j)$ and $P_1(i, j')$ are true. Let $\tilde{e}'_{j, i}$ be the end of $\+B_{w_j}\cap \mathrm{aff}(v_{i}v_{i+1})$, and let $\tilde{e}'_{j', i}$ be the end of $\+B_{w_{j'}}\cap \mathrm{aff}(v_{i}v_{i+1})$. We use several polynomials to check whether $\tilde{e}'_{j, i}$ is not behind $\tilde{e}'_{j', i}$ in the direction of $\overrightarrow{v_iv_{i+1}}$.
	
	We use the polynomial $f_{3,1}^{i,j,j'}$ to check whether the order of the projections of $w_{j}$ and $w_{j'}$ in $\mathrm{aff}(v_{i}v_{i+1})$ is consistent with the direction of $\overrightarrow{v_{i}v_{i+1}}$. As with how we deal with $P_3(i,j,j')$, we use three distances: the distance between the projection of $w_{j}$ in $\text{aff}(v_{i}v_{i+1})$ and the projection of $w_{j'}$ in $\text{aff}(v_{i}v_{i+1})$, the distance between $\tilde{e}'_{j,i}$ and the projection of $w_{j}$ in $\text{aff}(v_{i}v_{i+1})$, the distance between $\tilde{e}'_{j', i}$ and the projection of $w_{j'}$ in $\text{aff}(v_{i}v_{i+1})$. Due to symmetry, these three distances equal to $z_1$, $z_2$ and $z_3$, respectively.
	
	If $f_{3,1}^{i, j, j'}\geq 0$, the projection of $w_j$ is not behind the projection of $w_{j'}$. In this case, $\tilde{e}'_{j, i}$ is not behind $\tilde{e}'_{j', i}$ if and only if $z_1+z_3\ge z_2$. We present several polynomials to check whether $z_1+z_3\geq z_2$. Since $z_1$, $z_2$ and $z_3$ are non-negative, $z_1+z_3\geq z_2$ if and only if $z_1^2+z_3^2+2z_1z_3\geq z_2^2$. By moving $z_1^2+z_3^2$ to the right-hand side, we get $2z_1z_3\geq z_2^2-z_1^2-z_3^2$. We use a polynomial $f_{5,1}^{i, j, j'}$ to represent $(z_2^2-z_1^2-z_3^2)\norm{v_{i+1}-v_{i}}^2$.
	\[
	f_{5,1}^{i, j, j'}=(d(w_{j'}, \text{aff}(v_{i}v_{i+1}))^2-d(w_{j}, \text{aff}(v_{i}v_{i+1}))^2)\norm{v_{i+1}-v_{i}}^2-\langle w_{j'}-w_{j}, v_{i+1}-v_{i} \rangle^2
	\]
	Since $\norm{v_{i+1}-v_{i}}>0$, we have $z_2^3-z_1^2-z_3^2\leq 0$ if and only if $f_{5,1}^{i, j, j'}\leq 0$. Provided that $2z_1z_3\geq 0$, if $f_{5,1}^{i, j, j'}\leq 0$, then we always have $2z_1z_3\ge z_2^2-z_1^2-z_3^2$. It means that $\tilde{e}'_{j, i}$ is not behind $\tilde{e}'_{j', i}$.
	
	Suppose that $f_{5,1}^{i, j, j'}>0$. The inequality $z_1+z_3\geq z_2$ holds if and only if $4z_1^2z_3^2\ge (z_2^2-z_1^2-z_3^2)^2$, i.e. $4z_1^2z_3^2- (z_2^2-z_1^2-z_3^2)^2\geq 0$. We use a polynomial $f_{5,2}^{i, j, j'}$ to denote $(4z_1^2z_3^2- (z_2^2-z_1^2-z_3^2)^2)\norm{v_{i+1}-v_{i}}^4$.
	\[
	f_{5,2}^{i, j, j'}= 4\langle w_{j'}-w_{j}, v_{i+1}-v_{i} \rangle^2(\delta^2-d(w_{j'},\text{aff}(v_{i}v_{i+1})))\norm{v_{i+1}-v_{i}}^2-(f_{5,1}^{i, j, j'})^2
	\] 
	Since $\norm{v_{i+1}-v_{i}}>0$, $\tilde{e}'_{j, i}$ is not behind $\tilde{e}'_{j', i}$ if and only if $f_{5,2}^{i,j,j'}\geq 0$.

	We then consider the case that $f_{3,1}^{i,j,j'} < 0$, i.e., the projection of $w_j$ is behind the projection of $w_{j'}$. In this case, $\tilde{e}'_{j, i}$ is not behind $\tilde{e}'_{j', i}$ if and only if $z_3>z_1+z_2$.  We can define polynomials $f_{5,3}^{i, j, j'}$ and $f_{5,4}^{i, j, j'}$ to check whether $z_3>z_1+z_2$ in the same way as how we check $z_1+z_3\ge z_2$. We finish describing how to use polynomials to determine whether $\tilde{e}'_{j, i}$ is not behind $\tilde{e}'_{j', i}$ when $e'_{j, i}$ and $e'_{j', i}$ are not null.
	
	We proceed to show how to determine whether $e'_{j, i}=e'_{j', i}$ or $e'_{j, i}\le_{v_iv_{i+1}}e'_{j', i}$. Given that both $e'_{j, i}$ and $e'_{j', i}$ are not null, if $d(v_{i+1}, w_{j})> \delta$, $e'_{j, i}$ equals to $\tilde{e}'_{j, i}$; otherwise, $e'_{j, i}$ is $v_{i+1}$. The value of $e'_{j', i}$ can be determined by whether $d(v_{i+1}, w_{j'})$ is larger than $\delta$ in the same way. The comparison between $e'_{j, i}$ and $e'_{j', i}$ falls into the following four cases. If $e'_{j, i}=\tilde{e}'_{j, i}$ and $e'_{j', i}=\tilde{e}'_{j', i}$, it boils down to the comparison between $\tilde{e}'_{j, i}$ and $\tilde{e}'_{j', i}$ whose result is indicated by the signs of $f_{3,1}^{i, j, j'}$, $f_{5,1}^{i, j, j'}$, $f_{5,2}^{i, j, j'}$, $f_{5,3}^{i, j, j'}$ and $f_{5,4}^{i, j, j'}$; if $e'_{j, i}=v_{i+1}$ and $e'_{j', i}$ equals to $\tilde{e}'_{j', i}$ that is an interior point in $v_iv_{i+1}$, then $P_5(i, j, j')$ is false as $v_{i+1}$ is behind any interior point in $v_{i}v_{i+1}$ along $v_{i}v_{i+1}$; if $e'_{j, i}=\tilde{e}'_{j, i}$ and $e'_{j', i}=v_{i+1}$, $P_5(i, j, j')$ is true; in the remaining case, both $e'_{j, i}$ and $e'_{j', i}$ are $v_{i+1}$, and $P_5(i, j, j')$ is true.
	
	To realize the above idea, it is sufficient to add the following two polynomials to compare both $d(v_{i+1}, w_j)$ and $d(v_{i+1}, w_{j'})$ to $\delta$.
	\[
	f_{5,5}^{i, j}=\norm{v_{i+1}-w_{j}}^2-\delta^2,\quad\quad
	f_{5,6}^{i, j'}=\norm{v_{i+1}-w_{j'}}^2-\delta^2
	\]
	
	\vspace{8pt}
	
	\noindent \pmb{$P_6(i,i',j)$.}
	We define polynomials $f_{6,1}^{i,i'\!\!,j}$, $f_{6,2}^{i,i'\!\!,j}$,  $f_{6,3}^{i,i'\!\!,j}$, $f_{6,4}^{i,i'\!\!,j}$, $f_{6,5}^{i,j}$, and $f_{6,6}^{i'\!\!,j}$ to implement $P_6(i,i',j)$ in a way analogous to the implementation of $P_5(i,j,j')$.
	
	\vspace{8pt}
	
	\noindent \pmb{$P_7(i,j,j')$.} We first check whether $P_1(i, j)$ and $P_1(i, j')$ are true according to the signs of polynomials designed for $P_1(i, j)$ and $P_1(i, j')$. If either $P_1(i, j)$ or $P_1(i,j')$ is false, then either $s'_{j, i}$ or $e'_{j', i}$ is null. Hence, $P_7(i,j, j')$ is false. Suppose that both $P_1(i, j)$ and $P_1(i, j')$ are true. Let $\tilde{s}'_{j, i}$ be the beginning of $\+B_{w_j}\cap \mathrm{aff}(v_{i}v_{i+1})$, and let $\tilde{e}'_{j', i}$ be the end of $\+B_{w_{j'}}\cap \mathrm{aff}(v_{i}v_{i+1})$. We use several polynomials to check whether $\tilde{s}'_{j, i}$ is not behind $\tilde{e}'_{j', i}$ in the direction of $\overrightarrow{v_iv_{i+1}}$.
	
	We use the polynomial $f_{3,1}^{i,j,j'}$ to check whether the order of the projections of $w_{j}$ and $w_{j'}$ in $\mathrm{aff}(v_{i}v_{i+1})$ is consistent with the direction of $\overrightarrow{v_{i}v_{i+1}}$. As with how we deal with $P_3(i,j,j')$, we use three distances: the distance between the projection of $w_{j}$ in $\text{aff}(v_{i}v_{i+1})$ and the projection of $w_{j'}$ in $\text{aff}(v_{i}v_{i+1})$, the distance between $\tilde{s}'_{j,i}$ and the projection of $w_{j}$ in $\text{aff}(v_{i}v_{i+1})$, the distance between $\tilde{e}'_{j', i}$ and the projection of $w_{j'}$ in $\text{aff}(v_{i}v_{i+1})$. Due to symmetry, these three distances equal to $z_1$, $z_2$ and $z_3$, respectively.
	
	If $f_{3,1}^{i, j, j'}\geq 0$, the projection of $w_j$ is not behind the projection of $w_{j'}$. In this case, $\tilde{s}'_{j, i}$ is not behind $\tilde{e}'_{j', i}$ as $\tilde{s}'_{j, i}$ is not behind the projection of $w_j$ in $\mathrm{aff}(v_iv_{i+1})$ along $v_iv_{i+1}$ and $\tilde{e}'_{j', i}$ is not in front of the projection of $w_{j'}$ in $\mathrm{aff}(v_iv_{i+1})$ along $v_iv_{i+1}$. Otherwise, $\tilde{s}'_{j, i}$ is not behind $\tilde{e}'_{j', i}$ if and only if $z_2+z_3\ge z_1$. We present several polynomials to check whether $z_2+z_3\geq z_1$. Since $z_1$, $z_2$ and $z_3$ are non-negative, $z_2+z_3\geq z_1$ if and only if $z_2^2+z_3^2+2z_2z_3\geq z_1^2$. By moving $z_2^2+z_3^2$ to the right-hand side, we get $2z_2z_3\geq z_1^2-z_2^2-z_3^2$. We use a polynomial $f_{7,1}^{i, j, j'}$ to represent $(z_1^2-z_2^2-z_3^2)\norm{v_{i+1}-v_{i}}^2$.
	\[
	f_{7,1}^{i, j, j'}=\langle w_{j'}-w_{j}, v_{i+1}-v_{i} \rangle^2-(2\delta^2-d(w_{j}, \text{aff}(v_{i}v_{i+1}))^2-d(w_{j'}, \text{aff}(v_{i}v_{i+1}))^2)\norm{v_{i+1}-v_{i}}^2
	\]
	Since $\norm{v_{i+1}-v_{i}}>0$, we have $z_1^3-z_2^2-z_3^2\leq 0$ if and only if $f_{7,1}^{i, j, j'}\leq 0$. Provided that $2z_2z_3\geq 0$, if $f_{7,1}^{i, j, j'}\leq 0$, then we always have $2z_2z_3\ge z_1^2-z_2^2-z_3^2$. It means that $\tilde{s}'_{j, i}$ is not behind $\tilde{e}'_{j', i}$.
	
	Suppose that $f_{7,1}^{i, j, j'}>0$. The inequality $z_2+z_3\geq z_1$ holds if and only if $4z_2^2z_3^2\ge (z_1^2-z_2^2-z_3^2)^2$, i.e. $4z_2^2z_3^2- (z_1^2-z_2^2-z_3^2)^2\geq 0$. We use a polynomial $f_{7,2}^{i, j, j'}$ to denote $(4z_2^2z_3^2- (z_1^2-z_2^2-z_3^2)^2)\norm{v_{i+1}-v_{i}}^4$.
	\[
	f_{7,2}^{i, j, j'}= 4(\delta^2-d(w_{j},\mathrm{aff}(v_iv_{i+1})))^2(\delta^2-d(w_{j'},\text{aff}(v_{i}v_{i+1})))\norm{v_{i+1}-v_{i}}^2-(f_{7,1}^{i, j, j'})^2
	\] 
	Since $\norm{v_{i+1}-v_{i}}>0$, $\tilde{s}'_{j, i}$ is not behind $\tilde{e}'_{j', i}$ if and only if $f_{7,2}^{i,j,j'}\geq 0$.
	
	We finish describing how to use polynomials to determine whether $\tilde{s}'_{j, i}$ is not behind $\tilde{e}'_{j', i}$ when $s'_{j, i}$ and $e'_{j', i}$ are not null.
	
	We proceed to show how to determine whether $s'_{j, i}=e'_{j', i}$ or $s'_{j, i}\le_{v_iv_{i+1}}e'_{j', i}$. Given that both $s'_{j, i}$ and $e'_{j', i}$ are not null, if $d(v_{i}, w_{j})> \delta$, $s'_{j, i}$ equals to $\tilde{s}'_{j, i}$; otherwise, $s'_{j, i}$ is $v_{i}$. If $d(v_{i+1}, w_{j'})> \delta$, $e'_{j', i}$ equals to $\tilde{e}'_{j, i}$; otherwise, $e'_{j, i}$ is $v_{i+1}$. The comparison between $s'_{j, i}$ and $e'_{j', i}$ falls into the following four cases. If $s'_{j, i}=\tilde{s}'_{j, i}$ and $e'_{j', i}=\tilde{e}'_{j', i}$, it boils down to the comparison between $\tilde{s}'_{j, i}$ and $\tilde{e}'_{j', i}$ whose result is indicated by the signs of $f_{3,1}^{i, j, j'}$, $f_{7,1}^{i, j, j'}$, and $f_{5,2}^{i, j, j'}$; if $s'_{j, i}=v_{i}$ and $e'_{j', i}=\tilde{e}'_{j', i}$, then $P_7(i, j, j')$ is true as $v_{i}$ is not behind any point on $v_{i}v_{i+1}$ along $v_{i}v_{i+1}$; if $s'_{j, i}=\tilde{s}'_{j, i}$ and $e'_{j', i}=v_{i+1}$, $P_7(i, j, j')$ is true; in the remaining case, $s'_{j, i}$ is $v_i$ and $e'_{j', i}$ is $v_{i+1}$, and $P_5(i, j, j')$ is true. The polynomials $f_{3,6}^{i,j}$ and $f_{5,6}^{i,j'}$ can help us realize the above idea.
	
	\vspace{8pt}
	
	\noindent \pmb{$P_8(i,i',j)$.}
	We define polynomials $f_{8,1}^{i,i'\!\!,j}$ and $f_{8,2}^{i,i'\!\!,j}$ to implement $P_8(i,i',j)$ in a way analogous to the implementation of $P_7(i,j,j')$.  
}

\section{Predicates and polynomials for Lemma~\ref{lem:coder}}
\label{app:coder}

We want to process $v_{a_k},\ldots,v_{a_{k+1}}$ into a data structure that takes a query point $y$ on a query edge $w_jw_{j+1}$ and locate the immediate predecessor of $y$ in the sorted order of $s_{a_k,j}, e_{a_k,j}, \ldots, s_{a_{k+1},j}, e_{a_{k+1},j}$ along $w_jw_{j+1}$.
%to handle the query of $w_jw_{j+1}$ and a point $\ell_{a_k,j}\in w_jw_{j+1}$ for every $k\in [n/\alpha]$. The output is the encoding $(\pi_{a_k,j}, \beta_{a_k,j}, \gamma_{a_k,j})$ of $\ell_{a_k,j}$. We can construct the $B'_l$-coder for every $l\in [m/\theta]$ in the same way.  
%We also need to encode points $x\in\tau$ and points $y\in\sigma$. 
%We construct data structures to accelerate the encoding procedure. To handle the encoding of any point $x\in \tau[v_{a_i}, a_{i+1}]$ with respect to $\sigma[w_{b_j}, w_{b_{j+1}}]$, 
%We use the algebraic geometric method again. 
%The vertices $v_{a_k},v_{a_k+1},\ldots, v_{a_{k+1}}$ and $\delta$ are treated as fixed and $\ell_{a_k,j}$, $w_{j}$ and $w_{j+1}$ are treated as variables. 
The truth values of the following predicates determine the query result.
%the encoding $(\pi_{a_k,j}, \beta_{a_k,j}, \gamma_{a_k,j})$. For ease of notation, we will use $y$ in place of $\ell_{a_k,j}$.

\begin{table}[h!]
	\centerline{\begin{tabular}{lcl}
			\hline
			\addlinespace[1pt]
			$P_9(i,j) = \text{true}$ & $\iff$ &  $d(v_i, w_{j}w_{j+1})\leq\delta$ \\[.1em] \hline
			\addlinespace[1pt]
			$P_{10}(i,j) = \text{true}$ & $\iff$ & $\text{$s_{i,j}$ is not null}\, \wedge \,y\le_{w_jw_{j+1}}s_{i,j}$\\[.1em] \hline
			\addlinespace[1pt]
			$P_{11}(i,j) = \text{true}$ & $\iff$ &  $\text{$e_{i, j}$ is not null} \, \wedge \, y \le_{w_jw_{j+1}}e_{i,j}$\\[.1em] 
			\hline
			\addlinespace[1pt]
			$P_{12}(i,j) = \text{true}$ & $\iff$ & $s_{i,j}$ is not null and $y = s_{i,j}$ \\[.1em]
			\hline
			\addlinespace[1pt]
			$P_{13}(i,j) = \text{true}$ & $\iff$ & $e_{i,j}$ is not null and $y = e_{i,j}$ \\[.1em]
			\hline
	\end{tabular}}
	\caption{There are $O(\alpha)$ predicates over all $i\in B_k$.}
	%\label{tb:pred}
\end{table}

\cancel{
	\begin{itemize}
		\item $P_9(i, j)$ returns true if and only if $d(v_{i}, w_{j}w_{j+1})\le \delta$.
		\item $P_{10}(i, j)$ returns true if and only if $s_{i, j}$ is not null and $y$ is not behind $s_{i,j}$ along $w_{j}w_{j+1}$.
		\item $P_{11}(i, j)$ returns true if and only if $e_{i, j}$ is not null and $y$ is not behind $e_{i,j}$ along $w_{j}w_{j+1}$.
	\end{itemize}
}

%Let $\+I=\{s_{a_k, j}, e_{a_k, j}, \ldots, s_{a_{k+1}, j}, e_{a_{k+1}, j}\}$. If $y$ is null, we can generate the encoding by setting $\pi_{a_k,j}$ to be $2\alpha+3$. We do not need to specify the values for $\beta_{a_k,j}$ and $\gamma_{a_k,j}$. Suppose that $y$ is not null. Recall that the values of $\pi_{a_k,j}$, $\beta_{a_k,j}$ and $\gamma_{a_k,j}$ are determined by the comparison results between $y$ and all points in $\+I$. The truth values of $P_9(i, j)$ for all $i\in [a_k, a_{k+1}]$ specify which entries in $\+I$ are null. The truth values of $P_{10}(i,j)$ and $P_{11}(i,j)$ for all $i\in [a_k, a_{k+1}]$ indicate the comparison results between $ell$ and all points in $\+I$. Hence, we can determine the encoding for $y$ based on the truth values of $P_9(i,j)$, $P_{10}(i,j)$ and $P_{11}(i,j)$ over all $i\in [a_k, a_{k+1}]$.

When we invoke a $B_k$-coder and perform the query described above with $y = \ell_{a_k,j}$ during dynamic programming, we already have the signature of the box that involves $B_k$ and $w_jw_{j+1}$.  The signature $\phi$ of the column $j$ of this box tells us the rank values of $s_{a_k,j}, e_{a_k,j}, \ldots, s_{a_{k+1},j}, e_{a_{k+1},j}$ along $w_jw_{j+1}$.  Therefore, the truth values of $P_9,P_{10},P_{11}$ tell us the $s_{i,j}$ with the lowest rank value that lies in front of $y$, which makes $s_{i,j}$ the immediate predecessor.  The truth values of $P_{12}$ and $P_{13}$ tell us whether $y$ is equal to this immediate predecessor.

Next, we construct a set of polynomials for these predicates. We will handle $P_{10}(i,j)$ and $P_{12}(i,j)$ together and $P_{11}(i,j)$ and $P_{13}(i,j)$ together.

\vspace{20pt}

\noindent \pmb{$P_9(i,j)$.} The predicate $P_9(i,j)$ is equivalent to $P_2(i,j)$.  We can use the polynomials $f_{2,1}^{i,j}$, $f_{2,2}^{i,j}$, $f_{2,3}^{i,j}$, $f_{2,4}^{i,j}$, and $f_{2,5}^{i,j}$ to implement it.

\vspace{20pt}

\noindent\pmb{$P_{10}(i,j)$ and $P_{12}(i,j)$.} We first check whether $P_9(i,j)$ is true according to the signs of polynomials designed for $P_9(i,j)$. If $P_9(i,j)$ is false, then $s_{i,j}$ is null. Hence, $P_{10}(i,j)$ and $P_{12}(i,j)$ are false. Suppose that $P_9(i,j)$ is true. In this case, $B_{v_i}\cap w_jw_{j+1}\not=\emptyset$. We use the following polynomial to check whether $y$ is behind the projection of $v_i$ in $\mathrm{aff}(w_jw_{j+1})$ in the direction of $w_{j+1}-w_j$.
\[
f_{10,1}^{i,j} = \langle v_{i}-y, w_{j+1}-w_{j} \rangle.
\]

\begin{figure}
	\centering
	\includegraphics[scale=0.6]{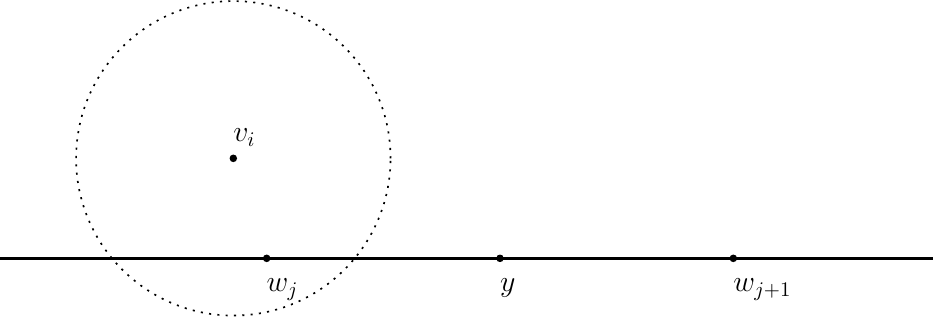}
	\caption{Illustration for the case that the projection of $v_i$ is not in $w_jw_{j+1}$ and $y$ is behind the projection of $v_i$.}
	\label{fig:p10-case1}
\end{figure}

If $f_{10,1}^{i,j}< 0$, $y$ is behind the projection of $v_i$ in $\mathrm{aff}(w_jw_{j+1})$. We use the following two polynomials to check whether the projection of $v_i$ in $\mathrm{aff}(w_jw_{j+1})$ lies on $w_jw_{j+1}$.
\[
f_{10,2}^{i,j}= \langle v_i-w_j, w_{j+1}-w_{j} \rangle, \quad\quad f_{10,3}^{i,j}=\langle v_i-w_{j+1}, w_j-w_{j+1} \rangle.
\]
When $f_{10,2}^{i,j}\ge 0$ and $f_{10,3}^{i,j}\ge 0$, the projection of $v_i$ is in $w_jw_{j+1}$ as well. In this case, $s_{i,j}$ must not be behind the projection of $v_i$. As a result, $y$ is behind $s_{i,j}$, so both $P_{10}(i,j)$ and $P_{12}(i,j)$ are false. In the other case, the projection of $v_i$ is not in $w_jw_{j+1}$. Since $y\in w_jw_{j+1}$ and $y$ is behind the projection of $v_i$, the nearest point in $w_jw_{j+1}$ to $v_i$ is $w_j$. Since $B_{v_i}\cap w_jw_{j+1}\not=\emptyset$, $s_{i,j}$ is equal to $w_j$. See Figure.~\ref{fig:p10-case1} for an illustration. As a result, we need to check whether $y$ is behind $w_j$ in the direction of $w_{j+1}-w_j$ 
%or is equal to $w_j$ 
by the following polynomial:
\[
f_{10,4}^{i,j}=\langle y-w_j, w_{j+1}-w_j \rangle.
\]
If $f_{10,4}^{i,j}>0$, then $y$ is behind $s_{i,j}$, so both $P_{10}(i,j)$ and $P_{12}(i,j)$ are false. If $f_{10,4}^{i,j}= 0$, then $y$ is equal to $s_{i,j}$, so both $P_{10}(i,j)$ and $P_{12}(i,j)$ are true.

If $f_{10, 1}^i\ge 0$, $y$ is not behind the projection of $v_i$ in $\mathrm{aff}(w_jw_{j+1})$ in the direction of $w_{j+1}-w_j$. Since $y\in w_jw_{j+1}$ and $B_{v_i}\cap w_jw_{j+1}\not=\emptyset$, it must hold that $B_{v_i}\cap yw_{j+1}\not=\emptyset$. In this case, $y$ is not behind $s_{i,j}$ if and only if $d(v_i, y)\ge \delta$. We use the following polynomial to compare $d(v_i,y)$ and $\delta$.
\[
f_{10,5}^{i,j}=\norm{v_i-y}^2-\delta^2.
\]
If $f_{10,5}^{i,j}\ge 0$, the distance between $v_i$ and $y$ is at least $\delta$, then $P_{10}(i,j)$ is true; otherwise, $P_{10}(i,j)$ is false.  If $f_{10,5}^{i,j} = 0$, then $P_{12}(i,j)$ is true; otherwise, $P_{12}(i,j)$ is false.

\vspace{20pt}

\noindent\pmb{$P_{11}(i,j)$ and $P_{13}(i,j)$.} We can define polynomials to implement $P_{11}(i,j)$ and $P_{13}(i,j)$ in a way analogous to the definitions of polynomials for $P_{10}(i,j)$ and $P_{12}(i,j)$.

\cancel{
	We first check whether $P_9(i,j)$ is true according to the signs of polynomials designed for $P_9(i,j)$. If $P_9(i,j)$ is false, then $e_{i,j}$ is null. Hence, $P_{11}(i,j)$ is false. Suppose that $P_9(i,j)$ is true. We use $f_{10,1}^{i,j}$ to check whether $y$ is behind the projection of $v_i$ in $\mathrm{aff}(w_jw_{j+1})$ along $w_{j}w_{j+1}$
	
	If $f_{10,1}^i\ge 0$, $y$ is not behind the projection of $v_i$ in $\mathrm{aff}(w_jw_{j+1})$. We further check whether the projection is in $w_jw_{j+1}$ by $f_{10,2}^{i,j}$ and $f_{10,3}^{i,j}$. If $f_{10,2}^{i,j}\ge 0$ and $f_{10,3}^{i,j}\ge 0$, the projection is in $w_jw_{j+1}$. In this case, $e_{i,j}$ is not in front of the projection. As a result, $y$ is not behind $e_{i,j}$. Hence, $P_{11}(i, j)$ is true. In the remaining case, the projection is not in $w_jw_{j+1}$. Since $y\in w_jw_{j+1}$, the nearest point in $w_jw_{j+1}$ to $v_i$ is $w_{j+1}$. Since $B_{v_i}\cap w_jw_{j+1}\not=\emptyset$, $e_{i,j}=w_{j+1}$. As a result, we need to check whether the direction of $w_{j+1}-y$ is consistent with the direction of $w_{j+1}-w_j$ by the following polynomial
	\[
	f_{11, 1}^{i,j}=\langle w_{j+1}-y, w_{j+1}-w_j\rangle.
	\]
	If $f_{11,1}^{i,j}\ge 0$, $y$ is not behind $e_{i,j}$ and $P_{11}(i,j)$ is true; otherwise, $P_{11}(i,j)$ is false.
	
	If $f_{10, 1}^{i,j}< 0$, $y$ is behind the projection of $v_i$ in $\mathrm{aff}(w_jw_{j+1})$ in the direction of $w_{j+1}-w_j$. Since $y\in w_jw_{j+1}$ and $B_{v_i}\cap w_jw_{j+1}\not=\emptyset$, it must hold that $B_{v_i}\cap w_iy\not=\emptyset$. In this case, $y$ is not behind $e_{i,j}$ if and only if $d(v_i, y)\le \delta$. We use the polynomial $f_{10,2}^i$ to compare $d(v_i,y)$ and $\delta$.
	If $f_{10,2}^i> 0$, the distance between $v_i$ and $y$ is larger than $\delta$, then $P_{11}^i$ is false; otherwise, $P_{11}^i$ is true.
	
	\vspace{8pt}
	
	Let $\+P$ contain all polynomials implementing these three predicates for $i\in B_k$. Each polynomial in $\+P$ is in variables $y$, $w_j$ and $w_{j+1}$ and has $O(1)$ degree. Hence, generating the encoding $(\pi_{a_k,j},\beta_{a_k,j},\gamma_{a_k,j})$ for $y$ turns into a point location problem in $\mathscr{A}(\+P)$ using the query $(y, w_{j}, w_{j+1})$. We linearize all polynomials in $\+P$. There are at most $d^{O(1)}$ variables after linearization. The arrangement $\mathscr{A}(\+P)$ transforms into an arrangement of hyperplanes in higher dimensions. We can build a point location data structure in Theorem~\ref{thm:locate} for this arrangement. It finishes the proof.%We can handle the encoding of any point $y\in \sigma[w_{b_j}, w_{b_{j+1}}]$ with respect to $\tau[v_{a_i}, v_{a_{i+1}}]$ in the same way.
}
	
\bibliography{ref.bib}
\bibliographystyle{plain}
\end{document}